\documentclass{article} 
\textwidth=125mm  
\textheight=195 mm 

\usepackage{amsthm,amsfonts,amsmath,amscd,amssymb,graphicx,tikz-cd,mathtools,mathrsfs,url,comment} 
\usepackage{authblk}
\usepackage{hyperref}

\newtheorem{theorem}{Theorem}[section]

\newtheorem{lemma}[theorem]{Lemma}
\newtheorem{proposition}[theorem]{Proposition}

\theoremstyle{definition}

\newtheorem*{unnumbered-definition}{Definition}
\numberwithin{equation}{section}
\newtheorem{example}[theorem]{Example}

\newtheorem{remark}[theorem]{Remark}


\def\R{{\mathbb R}}

\def\C{{\mathbb C}}

\def\gog{{\mathfrak g}} \def\tot{{\mathfrak t}}

\def\CH{{\mathcal H}}        
\def\SU{{\rm SU}}\def\U{{\rm U}}\def\SO{{\rm SO}}\def\O{{\rm O}}\def\GL{{\rm GL}}

\newcommand\xqed[1]{%
  \leavevmode\unskip\penalty9999 \hbox{}\nobreak\hfill
  \quad\hbox{#1}}
\newcommand\remdone{\xqed{$\triangle$}}

\begin{document}

\author{Colin McSwiggen \normalsize \\ {\it Division of Applied Mathematics, Brown University} \\ {\it 182 George St., Providence, RI 02906, USA} \\ \url{colin_mcswiggen@alumni.brown.edu}}

\title{The Harish-Chandra integral: \\ An introduction with examples}
\date{}

\maketitle

\abstract{This expository paper introduces the theory of Harish-Chandra integrals, a family of special functions that express the integral of an exponential function over the adjoint orbits of a compact Lie group.  Originally studied in the context of harmonic analysis on Lie algebras, Harish-Chandra integrals now have diverse applications in many areas of mathematics and physics.  We review a number of these applications, present several different proofs of Harish-Chandra's celebrated exact formula for the integrals, and give detailed derivations of the specific integral formulae for all compact classical groups.  These notes are intended for mathematicians and physicists who are familiar with the basics of Lie groups and Lie algebras but who may not be specialists in representation theory or harmonic analysis.}

\setcounter{tocdepth}{1}
\tableofcontents

\pagebreak

\section{Introduction}

\subsection{An integral over a group}

This paper introduces the theory of {\it Harish-Chandra integrals}.  These are functions of the form
\begin{equation} \label{eqn:first-HC-int}
\mathcal{H}(x,y) := \int_G e^{\langle \mathrm{Ad}_g x, y \rangle} dg,
\end{equation}
where $G$ is a compact Lie group, $x$ and $y$ lie in a Cartan subalgebra of $\gog = \mathrm{Lie}(G)$ or its complexification $\gog_\C$, $dg$ is the normalized Haar measure, and $\langle \cdot, \cdot \rangle$ is an Ad-invariant inner product.

Harish-Chandra first studied such integrals in the 1950's with the goal of developing a theory of Fourier analysis on semisimple Lie algebras, but in the intervening decades they have far outgrown their original applications.  They are now ubiquitous special functions that appear in various guises throughout representation theory, geometric analysis, random matrix theory, and physics.  For this reason, they also provide unexpected links between seemingly disparate subjects: since we can study these same functions from many different perspectives, they can help us to translate techniques and ideas between different areas of mathematics, leading to applications in fields as diverse as enumerative geometry, stochastic analysis on Lie algebras, and high-dimensional statistics.  We explain a number of these connections below in Section \ref{sec:HC-everywhere}.

A remarkable fact about the integral (\ref{eqn:first-HC-int}) is that it admits an exact expression as an exponential polynomial.  Harish-Chandra proved the following formula, which holds when $G$ is connected and semisimple \cite{HC}:
\begin{equation} \label{eqn:hc} \Delta_\gog(x) \Delta_\gog(y) \int_G e^{\langle \mathrm{Ad}_g x, y \rangle} dg = \frac{ [ \! [ \Delta_\gog, \Delta_\gog ] \!] }{|W|} \sum_{w \in W} \epsilon(w) e^{\langle w(x),y \rangle}.\end{equation}
Here $x$ and $y$ lie in a Cartan subalgebra of $\gog_\C$, $W$ is the Weyl group, $\epsilon(w)$ is the sign of $w \in W$, $\Delta_\gog(x) := \prod_{\alpha \in \Phi^+} \langle \alpha, x \rangle$ is the product of the positive roots $\Phi^+$ (which we identify with elements of $\gog$ via the inner product), and $[ \! [ \Delta_\gog, \Delta_\gog ] \!]$ is a constant computed below in (\ref{eqn:normconst}).  There are many ways to interpret this formula; in Section \ref{ch:proofs}, we illustrate several of these interpretations with various different proofs of (\ref{eqn:hc}).

The importance of such integrals for mathematical physics was first noted by Itzykson and Zuber \cite{IZ}, who independently discovered the formula for the case of an integral over the unitary group. The unitary integral is now known as the Harish-Chandra--Itzykson--Zuber (HCIZ) integral and has become an important and widely studied identity in quantum field theory, random matrix theory, and algebraic combinatorics.  It is usually written
\begin{equation} \label{eqn:hciz} \int_{\U(N)} e^{\mathrm{tr} (AUBU^\dagger)} dU = \left( \prod_{p=1}^{N-1}p! \right) \frac{\det \big[ e^{a_i b_j} \big]_{i,j = 1}^N}{\Delta(A) \Delta(B)}, \end{equation}
where $\U(N)$ is the group of $N$-by-$N$ unitary matrices, $A$ and $B$ are fixed $N$-by-$N$ diagonal matrices with eigenvalues $a_1 < \hdots < a_N$ and $b_1 < \hdots < b_N$ respectively, and $$\Delta(A) := \prod_{i < j} (a_j - a_i)$$ is the Vandermonde determinant.  Much has already been written about the HCIZ integral in particular, so this article focuses mainly on the more general Harish-Chandra integral (\ref{eqn:first-HC-int}).  For a detailed discussion of the HCIZ integral specifically, see the blog post by Terry Tao \cite{TaoHCIZblog}.

Despite the existence of exact formulae such as (\ref{eqn:hc}) and (\ref{eqn:hciz}), our understanding of these integrals is far from complete.  In fact, for many questions of interest, the exact formulae are no help.  For example, an important research program in random matrix theory has been to characterize the behavior of the HCIZ integral in various regimes as $N \to \infty$; see e.g. \cite{AM, GZ, AG, CGM}.  However, as $N$ grows large the determinant in the formula (\ref{eqn:hciz}) produces factorially many terms of opposite signs, making it a poor starting point for this type of asymptotic analysis.  As of this writing, the large-$N$ behavior of more general integrals of the form (\ref{eqn:first-HC-int}) has still not been rigorously studied.  There remain many other interesting unanswered questions about Harish-Chandra integrals, with a wide scope for new ideas and techniques.

\subsection{Organization of the paper}
In the remainder of this introduction, we briefly recall key definitions from Lie theory and set some notational conventions that will be used in the rest of the article.

In Section \ref{sec:HC-everywhere}, we review a number of ways that Harish-Chandra integrals appear throughout various fields of mathematics and physics, from statistics to quantum gravity.

Section \ref{ch:proofs} collects six different proofs of the formula (\ref{eqn:hc}), each illustrating a distinct perspective on the Harish-Chandra integral: Harish-Chandra's original proof via invariant differential operators, a proof by studying the heat equation on the Lie algebra $\gog$, a proof using a localization technique in symplectic geometry, two representation-theoretic proofs, and a proof using ideas from harmonic analysis.  We also present two further proofs of the HCIZ formula (\ref{eqn:hciz}): one proof via a character expansion and another by induction on the rank of the unitary group.  Sections \ref{sec:hc_heat_eqn_proof}, \ref{subsec:symplectic_proof} and \ref{subsec:KCF-proof} are adapted from material that was previously published by the author in \cite{McS-heateqn}.

Section \ref{ch:specific-integrals} presents detailed derivations of the specific realizations of the integral formula (\ref{eqn:hc}) for all compact classical groups, and discusses how (\ref{eqn:hc}) can be used to compute integrals over arbitrary compact Lie groups that may be neither semisimple nor connected.

This paper is not meant to be a comprehensive reference on Harish-Chandra integrals, and several important topics are omitted.  Notably, we do not include a detailed discussion of large-$N$ asymptotics, for which we refer the reader to \cite{AM, GZ, AGsurvey, CGM}. We also do not discuss correlation functions \cite{Morozov, Sha, PEDZ} or integrals over non-compact groups \cite{Rossmann, PW, FS}, and we only briefly mention Frenkel's generalization of the Harish-Chandra formula to affine Lie algebras \cite{IF}.

\subsection{Definitions from Lie theory} \label{sec:lie-defs}
Here we quickly recall the definitions of several key concepts in Lie theory that will appear continually in what follows.  Detailed introductions to these topics can be found in any of the excellent books by Hall \cite{HallLieGroups}, Procesi \cite{ProcesiLieGroups} or Bump \cite{BumpLieGroups}.

A {\it (real) Lie group} is a finite-dimensional smooth manifold $G$ such that multiplication and inversion in $G$ are smooth maps.  In other words, the map $(g, h) \mapsto g^{-1} h$ is a smooth function from $G \times G$ to $G$.  The {\it Lie algebra} of $G$, written $\gog$ or $\mathrm{Lie}(G)$, is the tangent space to $G$ at the identity element: $\gog := T_{\mathrm{id_G}} G$.  The {\it Lie exponential map} $\exp : \gog \to G$ sends each $x \in \gog$ to the element $e^x := \gamma_x(1) \in G$, where $\gamma_x : \R \to G$ is the unique one-parameter subgroup of $G$ whose tangent vector at the identity is $x$.

The action of $G$ on itself by conjugation fixes the identity element, so that the linearization of this action gives a representation of $G$ on $\gog$, called the {\it adjoint representation}.  Concretely, for each $g \in G$ we obtain an operator $\mathrm{Ad}_g$ on $\gog$ defined by
$$\mathrm{Ad}_g x := \frac{d}{dt} \bigg |_{t=0} g e^{tx} g^{-1} , \qquad x \in \gog.$$
If we then also linearize the map $g \mapsto \mathrm{Ad}_g$, we obtain a representation of $\gog$ on itself, which is also called the adjoint representation.  That is, for each $x \in \gog$ we have an operator $\mathrm{ad}_x$ on $\gog$ defined by
$$\mathrm{ad}_x y := \frac{d}{dt} \bigg |_{t=0} \mathrm{Ad}_{\exp(tx)} y, \qquad y \in \gog.$$
The {\it Lie bracket} $[x, y] := \mathrm{ad}_x y$ is an antisymmetric operation,
$$[x, y] = -[y,x],$$
and also satisfies the {\it Jacobi identity}:
$$[x, [y,z]] + [y,[z,x]] + [z,[x,y]] = 0, \qquad x,y,z \in \gog.$$
In all of the specific examples considered in this article, elements of $G$ and $\gog$ are identified with matrices such that the adjoint representation of $G$ is given by matrix conjugation, $\mathrm{Ad}_g x = g x g^{-1}$, and the Lie bracket is just the matrix commutator, $[x,y] = xy - yx$.

A {\it subalgebra} of $\gog$ is a subspace that is closed under the Lie bracket.  An {\it ideal} of $\gog$ is a subalgebra $\mathfrak{i}$ that is ``contagious'' under the bracket, that is, if $y \in \mathfrak{i}$ then $[x,y] \in \mathfrak{i}$ for all $x \in \gog$.  The algebra $\gog$ is {\it abelian} if $[x, y] = 0$ for all $x,y \in \gog$.  If $\gog$ is nonabelian and has no nonzero proper ideals, it is said to be {\it simple}.  A Lie algebra is {\it semisimple} if it is isomorphic to a direct sum of simple Lie algebras, and is said to be {\it compact} if it is the Lie algebra of a compact Lie group.

The {\it Killing form} on $\gog$ is the bilinear form
$$B(x,y) = \mathrm{tr}(\mathrm{ad}_x \circ \mathrm{ad}_y), \qquad x, y \in \gog,$$
where the trace is taken in the ring of linear operators on $\gog$.  The Killing form is nondegenerate if and only if $\gog$ is semisimple, and is negative semidefinite if $\gog$ is compact.  It is invariant under the adjoint representation, meaning that it satisfies
$$B(\mathrm{Ad}_g x, \mathrm{Ad}_g y) = B(x,y), \qquad g \in G,$$
and also
$$B( \mathrm{ad}_x y, z ) = B( x, \mathrm{ad}_y z ).$$
If $\gog$ is simple, then the Killing form is the unique bilinear form with the above invariance properties, up to a scalar multiple.  If $\gog$ is the Lie algebra of one of the compact classical groups in its defining representation, then the Killing form is a negative multiple of the {\it Hilbert-Schmidt inner product} $\mathrm{tr}(x^\dagger y)$.  Here $x$ and $y$ are regarded as $N$-by-$N$ matrices, and the dagger $\dagger$ indicates the conjugate transpose.  For example, if $\gog = \mathfrak{su}(N)$, we can regard $x, y \in \gog$ as $N$-by-$N$ traceless skew-Hermitian matrices.  Then $B(x,y) = -2N \mathrm{tr}(x^\dagger y).$  Similarly, if $\gog = \mathfrak{so}(N)$ for $N \ge 3$, we can regard $x, y \in \gog$ as $N$-by-$N$ real skew-symmetric matrices, and $B(x,y) = (2-N) \mathrm{tr}(x^T y).$

A {\it complex Lie group} is a Lie group that is also a complex analytic manifold, such that the inversion and multiplication operations are holomorphic maps.  If $G$ is a compact, connected real Lie group, its {\it complexification} $G_\C$ is the unique (up to isomorphism) connected complex Lie group such that $G$ is a maximal compact subgroup of $G_\C$, and such that the Lie algebra of $G_\C$ is $\gog_\C \cong \gog \otimes_\R \C$.  The inclusion $G \hookrightarrow G_\C$ gives an embedding $\gog \hookrightarrow \gog_\C$.

A compact Lie group $G$ admits a unique probability measure, called the {\it Haar probability measure} or {\it normalized Haar measure}, that is invariant under the actions of $G$ on itself by both left and right multiplication.  Additionally, $G$ contains a {\it maximal torus}, that is, an abelian subgroup $T \subset G$ that is diffeomorphic to a torus and is maximal among such subgroups.  Maximal tori are generally non-unique, but they all have the same dimension, and this dimension is called the {\it rank} of $G$ (or of $\gog$).  The Lie algebra $\tot := \mathrm{Lie}(T) \subset \gog$ is called a {\it Cartan subalgebra} of $\gog$; its complexification $\tot_\C \cong \tot \otimes \C$ is also called a Cartan subalgebra of $\gog_\C$.  The Cartan subalgebra $\tot$ (resp.~$\tot_\C$) is a maximal abelian subalgebra of $\gog$ (resp.~$\gog_\C$), and the operators $\mathrm{ad}_h$ for all $h \in \tot_\C$ are simultaneously diagonalizable on $\gog_\C$.  This fact allows us to define the roots of $\gog$, which are linear functionals that encode the eigenvalues of the operators $\mathrm{ad}_h,$ $h \in \tot_\C$.

Concretely, we say that $\alpha \in \tot_\C^*$ is a {\it root of $\gog$ (with respect to $\tot$)} if it is nonzero and satisfies $[h, x] = i \alpha(h) x$ for all $h \in \tot_\C$ and some nonzero $x \in \gog_\C$.  This definition of the roots differs by a factor of $i$ from the definition that is usually used in the study of complex Lie algebras, but it is convenient when studying compact Lie algebras because it makes the roots into real-valued, rather than imaginary-valued, functionals on $\tot \subset \tot_\C$.  We can then use an invariant inner product $\langle \cdot, \cdot \rangle$ on $\gog$ to identify the roots with elements of $\tot$.  Accordingly, we will often write $\langle \alpha, h \rangle$ rather than $\alpha(h)$.

We write $\Phi \subset \tot$ for the collection of all roots of $\gog$, called the {\it root system} of $\gog$ (or of $\gog_\C$).  The root system $\Phi$ is a combinatorial object that encodes a great deal of the structure of $\gog$; in fact, compact and complex simple Lie algebras are completely classified by their root systems.  If $\alpha$ is a root then $-\alpha$ is also a root, so we may choose a subset $\Phi^+ \subset \Phi$, called the {\it positive roots}, such that for each $\alpha \in \Phi$, exactly one of $\alpha$ or $-\alpha$ is in $\Phi^+$.  A positive root $\alpha$ is called a {\it simple root} if it cannot be written as a sum of two positive roots.  For each $\alpha \in \Phi$, the {\it root space}
$$ \gog_\alpha := \{ x \in \gog_\C \ | \ [h, x] = i \alpha(h) x, \ \ \forall \, h \in \tot_\C \} $$
is a one-dimensional complex subspace of $\gog_\C$, and we have the {\it root space decomposition}
$$ \gog_\C = \tot_\C \oplus \bigoplus_{\alpha \in \Phi} \gog_\alpha. $$

For $\alpha \in \Phi^+$, let $s_\alpha$ denote the linear operator on $\tot_\C$ corresponding to reflection through the hyperplane
$$\{ x \in \tot_\C \ | \ \alpha(x) = 0 \}.$$
The {\it Weyl group} $W$ of the root system $\Phi$ (or of the algebra $\gog$ with respect to the Cartan subalgebra $\tot$) is the group generated by the operators $s_\alpha$, $\alpha \in \Phi^+$.  Clearly $W$ acts on both $\tot_\C$ and $\tot$.  The {\it dominant} (or {\it fundamental}, or {\it positive}) {\it Weyl chamber} is the set
$$ \{ x \in \tot \ | \ \langle \alpha, x \rangle > 0, \ \ \forall \, \alpha \in \Phi^+ \}.$$
Its closure is a fundamental domain for the action of $W$ on $\tot$.  For $G$ compact and connected and $T \subset G$ a maximal torus, we have $W \cong N_G(T)/T$, where $N_G(T)$ is the normalizer of $T$ in $G$.  In fact, in the study of connected compact groups, the Weyl group is often defined this way, without reference to the root system $\Phi$.  In this setting, we will also refer to $W$ as the Weyl group of $G$ with respect to $T$.

\subsection{Notation} \label{sec:notation}
Our main notational conventions are as follows.  Throughout the paper, $G$ represents a compact real Lie group of rank $r$.  When we make further assumptions on $G$, such as connectedness or semisimplicity, these will always be made explicit.  We write $\gog$ for the Lie algebra of $G$, $\tot \subset \gog$ for a Cartan subalgebra, and $\gog_\C$ and $\tot_\C$ for their respective complexifications.  We write $W$ for the Weyl group of $\gog$ with respect to $\tot$, which we may identify with a subgroup of $G$ such that $W$ acts on $\tot$ by the adjoint representation.  For $x \in \gog$, $e^x$ or $\exp(x) \in G$ indicates the image of $x$ under the Lie exponential map.  We will often identify $\gog$ with its dual $\gog^*$ via an Ad-invariant inner product $\langle \cdot, \cdot \rangle$, by identifying $x \in \gog$ with the linear functional $\langle x, \cdot \rangle$.  The Harish-Chandra integral will be written $$\CH(x, y) := \int_G e^{\langle \mathrm{Ad}_g x, y \rangle} dg, \qquad x, y \in \tot_\C.$$

We write $\Phi$ for the roots of $\gog$ and $\Phi^+$ for the positive roots.  Since we work more often with the real Lie algebra $\gog$ than with the complexificiation $\gog_\C$, it is convenient to regard the roots as real-valued linear functionals on $\tot$, so that they differ by a factor of $i$ from the ``complex roots'' that are commonly used in the study of complex semisimple Lie algebras.  Specifically, for our purposes, the roots are linear functionals $\alpha \in \tot_\C^*$ satisfying $[h, x] = i \alpha(h) x$ for all $h \in \tot_\C$ and some nonzero $x \in \gog_\C$.  Since each $\alpha \in \Phi$ is real valued on $\tot \subset \tot_\C$, we may regard it as an element of $\tot^*$, and then identify it with an element of $\tot$ via the inner product.  The discriminant of $\gog$ is the function $\Delta_\gog(x) := \prod_{\alpha \in \Phi^+} \langle \alpha, x \rangle$, $x \in \tot$.

If $A$ is a matrix with complex or quaternionic entries, we write $A^\dagger$ for its conjugate transpose and $\bar A$ for its untransposed conjugate.

To keep each section relatively self-contained, key pieces of notation will be reintroduced the first time they appear in a new section.

\section{Harish-Chandra integrals throughout\\ mathematics and physics} \label{sec:HC-everywhere}

Integrals of the form (\ref{eqn:first-HC-int}) arise all over contemporary mathematics: in harmonic analysis, they are a type of generalized Bessel function \cite{AnkerDunklNotes}; in representation theory, they are intimately related to group characters \cite{AK}; and in symplectic geometry, they are Laplace transforms of Duistermaat--Heckman measures of coadjoint orbits \cite{DH}.

The HCIZ integral (\ref{eqn:hciz}) is particularly significant for many reasons.  It initially drew the interest of physicists because it appears in expressions for the partition functions of multi-matrix models in quantum field theory and string theory \cite{IZ, DGZ}.  In random matrix theory, it arises in the joint spectral densities of a number of matrix ensembles, including off-center Wigner matrices and Wishart matrices \cite[ch.~3]{AGsurvey}. Since Wishart matrices model sample covariance matrices, such integrals are used by statisticians in noise models for covariance estimators for large data sets and are also used in signal processing to estimate the channel capacity of multiple-antenna transmission systems; see e.g. \cite{NordenvaadSvensson, GTP}. In integrable systems the HCIZ integral is related to $\tau$-functions for the 2D Toda lattice hierarchy \cite{PZJ}, while in combinatorics and enumerative geometry it is a generating function for the monotone double Hurwitz numbers \cite{GGN}.

In this section, we elaborate on some of these connections to various areas of mathematics and physics.

\subsection{Harmonic analysis}
Harish-Chandra originally studied the function $\CH(x,y)$ for the purpose of developing a theory of Fourier analysis on semisimple Lie algebras \cite{HC}.  In that context, the integral (\ref{eqn:first-HC-int}) plays the role of an {\it elementary spherical function}, which can be thought of as a multivariable generalization of a Bessel function.  In particular, it appears in the integration kernel of an analogue of the radial Fourier transform on $\R^n$.  If $f : \gog \to \C$ is an Ad-invariant Schwartz function, then it is determined by its restriction $\bar f$ to some Cartan subalgebra $\tot \subset \gog$, and we have the {\it radial Fourier transform}
\begin{equation} \label{eqn:dunkl-transf}
\mathscr{R}[\bar f](\xi) := \int_{\tot}  \bar f(x)  \CH(-i \xi, x) \Delta_\gog(x)^2 dx, \qquad \xi \in \tot,
\end{equation}
with the inversion formula
\begin{equation} \label{eqn:dunkl-inversion}
\bar f(x) = c^{-2} \int_{\tot}  \mathscr{R}[\bar f](\xi)  \CH(i \xi, x) \Delta_\gog(\xi)^2 d\xi,
\end{equation}
where
\begin{equation} \label{eqn:dunkl-norm}
c := \int_{\tot} e^{-\frac{|x|^2}{2}} \Delta_\gog(x)^2 dx.
\end{equation}
In fact the radial Fourier transform (\ref{eqn:dunkl-transf}) is a special case of a much more general type of integral transform, the {\it Dunkl transform}, whose kernel incorporates the generalized Bessel functions $\mathcal{B}_{k,\lambda}$ discussed below in Section \ref{sec:HC-and-IS}.  The Dunkl transform includes as special cases the radial Fourier transforms on all Riemannian symmetric spaces of Euclidean type.  For further details, see the notes by Anker \cite{AnkerDunklNotes}.

Another way to think about elementary spherical functions on $\gog$ is as joint eigenfunctions of all $W$-invariant constant-coefficient differential operators on $\tot$.  Lemma \ref{lem:hc15_subfromSH} below, which is an intermediate step in Harish-Chandra's proof of the formula (\ref{eqn:hc}), implies that the integral $\CH(x,y)$ is such a joint eigenfunction.  This is one possible starting point for relating Harish-Chandra integrals to integrable systems, as we discuss below in Section \ref{sec:HC-and-IS}.

For an introduction to the classical Harish-Chandra theory in harmonic analysis and its applications to representation theory, see the book by Varadarajan \cite{Varadarajan}.

\subsection{Representation theory} \label{subsec:repthy-chars}
The characters of finite-dimensional irreducible representations of $G$ can be expressed in terms of Harish-Chandra integrals. Here we assume for simplicity that $G$ is connected and simply connected.  Let $\rho := \frac{1}{2}\sum_{\alpha \in \Phi^+} \alpha$ be half the sum of the positive roots, and define the function $\widehat{\Delta}_\gog$ on the maximal torus $T := \exp(\tot)$ by $$\widehat{\Delta}_\gog(e^x) := \prod_{\alpha \in \Phi^+} (e^{i \langle \alpha, x \rangle/2} - e^{- i\langle \alpha,x \rangle/2}), \qquad x \in \tot.$$  Using the inner product to identify $\tot \cong \tot^*$, we can identify weights of $\gog$ with elements of $\tot$.  For a dominant integral weight $\lambda,$ the Kirillov character formula \cite{Kirillov2, AK} expresses the irreducible character $\chi_\lambda$ as:
\begin{equation} \label{eqn:KCF_intro}
\chi_\lambda(e^x) = \frac{\Delta_\gog(ix)}{\widehat{\Delta}_\gog(e^x)} \frac{\Delta_\gog(\lambda + \rho)}{\Delta_\gog(\rho)} \CH(\lambda + \rho, ix ),
\end{equation}
for $x \in \tot$ with $\Delta_\gog(x) \not = 0$.  We discuss the Kirillov character formula further in Section \ref{subsec:KCF-proof}.

\subsection{Random matrix theory} \label{subsec:RMT}
The HCIZ integral and other analogous orbital integrals are ubiquitous in random matrix theory, because they appear in the joint spectral densities for ensembles of multiple coupled random matrices.  Similarly, they arise when writing the heat kernel on a space of matrices in terms of the eigenvalues.

For example, consider the following {\it two-matrix model}.  Let $A$ and $B$ be two random $N$-by-$N$ Hermitian matrices with joint density
\begin{equation} \label{eqn:2mtx-density}
p(A,B) := \frac{1}{\mathcal{Z}_{N}} \exp \big[ -\mathrm{tr}\, V(A) - \mathrm{tr}\, V(B) + \beta\, \mathrm{tr}(AB) \big],
\end{equation}
where $V(x) := x^2/2 + \kappa x^4/4$, and $\beta, \kappa$ are constants.  The normalization is provided by the {\it partition function} $$\mathcal{Z}_{N} := \iint_{\mathrm{Her}(N)^2} p(A,B) \, dA\, dB,$$ where $\mathrm{Her}(N)$ is the space of $N$-by-$N$ Hermitian matrices equipped with the norm $|A|^2 = \mathrm{tr}(A^2)$.  This type of interacting matrix model was originally studied by Itzykson and Zuber \cite{IZ} as a prototype for non-perturbative approaches to gauge theories, following the program of 't Hooft \cite{Gt}.

Note that the measure $p(A,B) \, dA\, dB$ depends only on the eigenvalues $(a_1, \hdots, a_N)$ of $A$ and $(b_1, \hdots, b_N)$ of $B$.  We can integrate out the angular degrees of freedom in (\ref{eqn:2mtx-density}) to obtain the density $\tilde p(a_1, \hdots, a_N, b_1, \hdots, b_N)$ of this measure with respect to Lebesgue measure on $\R^N \times \R^N$.  The resulting expression gives the joint spectral density in terms of the HCIZ integral:
\begin{multline} \label{eqn:2mtx-density-evs}
\tilde p(a_1, \hdots, a_N, b_1, \hdots, b_N) \\ = \frac{1}{\mathcal{Z}_{N}} \frac{(2\pi)^{N(N-1)}}{\Big( \prod_{j=1}^{N} j! \Big)^2} \Delta(A)^2 \Delta(B)^2 \, e^{ -\mathrm{tr}\, V(A) - \mathrm{tr}\, V(B) } \int_{\U(N)} e^{\beta \, \mathrm{tr}(AUBU^\dagger)} dU,
\end{multline}
where here we may take $A = \mathrm{diag}(a_1, \hdots, a_N)$, $B = \mathrm{diag}(b_1, \hdots, b_N)$. Similarly, we can write the partition function as an integral over the eigenvalues:
\begin{equation} \label{eqn:2mtx-partition}
\mathcal{Z}_{N} = \frac{(2\pi)^{N(N-1)}}{\Big( \prod_{j=1}^{N} j! \Big)^2} \int_{\R^{2N}} \Delta(A)^2 \Delta(B)^2\, e^{-\mathrm{tr}\, V(A) - \mathrm{tr}\, V(B)} \int_{\U(N)} e^{\beta \, \mathrm{tr}(AUBU^\dagger)} dU \prod_{j=1}^N da_j \, db_j.
\end{equation}

There are many other random matrix models whose spectral densities involve the HCIZ integral or its variants.  Among these models, the {\it Wishart matrices} are important for applications to statistics, as they provide a model for sample covariance estimators.  See Section \ref{subsec:statsetc} below, as well as \cite[ch. 7]{PaSh} and \cite[\textsection3.2]{AGsurvey}, for more information on Wishart matrices and statistical applications.  Harish-Chandra integrals can also be used to compute the joint spectral densities for sums of two random matrices with prescribed eigenvalues; see \cite{CMZ, CMZ2}.

\subsection{Symplectic geometry}
The integral $\CH(x,y)$ can also be interpreted in symplectic geometry as the Laplace transform of the Duistermaat--Heckman measure for the action of a maximal torus on a coadjoint orbit.  For definitions and details of these constructions, see Section \ref{subsec:symplectic_proof}.  Here we use the inner product to identify $\gog \cong \gog^*$, so that the adjoint orbit $\mathcal{O}_x$ of $x \in \tot$ is identified with the coadjoint orbit of the linear functional $\langle x, \cdot \rangle$, equipped with the Kostant--Kirillov--Souriau symplectic structure.  Then the orthogonal projection $\phi: \gog \to \tot$ is a moment map for the action of the maximal torus $\mathrm{exp}(\tot)$ on $\mathcal{O}_x$, and we have
\begin{equation} \label{eqn:symplectic-interp}
\CH(x,y) = \frac{1}{\mathrm{Vol}_\mu(\mathcal{O}_{x})} \int_{\mathcal{O}_x} e^{\langle \phi(\beta), y \rangle} d\mu(\beta),
\end{equation}
where $\mu$ is the Liouville measure on the coadjoint orbit and $\mathrm{Vol}_\mu(\mathcal{O}_{x})$ is its Liouville volume.

\subsection{Integrable systems} \label{sec:HC-and-IS}
For $\alpha \in \Phi^+$, let $s_\alpha$ denote the reflection through the hyperplane $$H_\alpha := \{ x \in \tot \ | \ \langle \alpha, x \rangle = 0 \}.$$  A {\it multiplicity parameter} $k$ is a complex-valued function on the roots of $\gog$ that is constant on Weyl orbits, i.e. $k_\alpha = k_{w \alpha}$, $w \in W$.  Given such a multiplicity parameter, the {\it Dunkl Laplacian} $\mathcal{L}_k$ is a differential-difference operator acting on twice-differentiable functions $f : \tot \to \C$ by
\begin{equation} \label{eqn:dunkl-laplacian-def}
\mathcal{L}_k f(x) = \sum_{j = 1}^r \partial_j^2 f(x) + \sum_{\alpha \in \Phi^+} \frac{2 k_\alpha}{\langle \alpha, x \rangle} \partial_\alpha f(x) - \sum_{\alpha \in \Phi^+} \frac{k_\alpha |\alpha|^2}{\langle \alpha, x \rangle^2} \big[ f(x) - f(s_\alpha x) \big],
\end{equation}
where $\partial_\alpha f(x) := \frac{d}{dt} f(x + t \alpha) \big |_{t = 0}$, and $\partial_j$, $j = 1, \hdots, r$ are derivatives along the coordinate directions with respect to some orthonormal basis of $\tot$.  We then define an operator $L_k$ on $\tot$ by
$$L_k f = \delta_k^{-1} \mathcal{L}_k^W (\delta_k f),$$
where $$\mathcal{L}_k^W := \sum_{j = 1}^r \partial_j^2 + \sum_{\alpha \in \Phi^+} \frac{2 k_\alpha}{\langle \alpha, x \rangle} \partial_\alpha$$ is the $W$-invariant part of $\mathcal{L}_k$, and $$\delta_k(x) := \prod_{\alpha \in \Phi^+} \langle \alpha, x \rangle^{k_\alpha}.$$

We can consider the operator $-L_k$ as a Schr\"odinger operator for a quantum mechanical system of $r$ interacting particles, the {\it quantum Calogero--Moser system} associated to the root system of $\gog$.  In this interpretation, the values $k_\alpha$ of the multiplicity parameter determine the coupling constants for the interactions between the particles.  This system is completely integrable, in the sense that the algebra of $W$-invariant differential operators commuting with $L_k$ is isomorphic to a polynomial algebra of rank $r$.  For generic multiplicity parameters $k$ and any $\lambda \in \tot$, one can define the {\it generalized Bessel function} $\mathcal{B}_{k, \lambda}$, which is a distinguished $W$-invariant eigenfunction of $L_k$ with eigenvalue $|\lambda|^2$, normalized so that $\mathcal{B}_{k, \lambda}(0) = 1$, and vanishing on the hyperplane $H_\alpha$ for each $\alpha \in \Phi^+$ with $k_\alpha \not = 0$.  The generalized Bessel functions play the role of energy eigenfunctions for the quantum Calogero--Moser system.

Harish-Chandra integrals are generalized Bessel functions for the multiplicity parameter with all $k_\alpha = 1$.  In this case, $\mathcal{L}^W_k$ is the radial part of the Laplacian on $\gog$ (defined below in Section \ref{sec:radparts}), and $\mathcal{B}_{k, \lambda}(x) = \CH(\lambda, x)$.  For details on the above constructions, we refer the reader to the review articles \cite{EtingofCM, AnkerDunklNotes, OpdamHGFreview}.

There are other known connections between Harish-Chandra integrals and integrable systems.  Notably, Zinn-Justin has related the large-$N$ asymptotics of the HCIZ integral to the dispersionless 2D Toda lattice hierarchy \cite{PZJ}.  In Section \ref{subsec:cm_relationship} we discuss a different relationship between Harish-Chandra integrals and Calogero--Moser systems, distinct from the considerations described above, which arises in a semiclassical regime via a WKB-type ansatz.

\subsection{Combinatorics and enumerative geometry}
The HCIZ integral is a combinatorial generating function for the {\it genus g monotone double Hurwitz numbers} $\vec H_g(\alpha, \beta)$ \cite{GGN}, which we now define.  These numbers are labeled by a nonnegative integer $g$ and two partitions $\alpha, \beta$ of another nonnegative integer $d$.  They count certain walks on the Cayley graph of the symmetric group $S_d$ as generated by transpositions.

A walk on this Cayley graph is identified with a sequence $$\big( \, (s_1 \ t_1)\, ,\, \hdots\, , (s_m \ t_m) \, \big)$$ of transpositions $(s_i \ t_i)$, where $s_i, t_i \in \{1, \hdots, d \}$ and $s_i < t_i$.  We say that the walk is {\it monotone} if the numbers $t_i$ form a weakly increasing sequence.  The number $\vec H_g(\alpha, \beta)$ is defined as the number of monotone walks on the Cayley graph of $S_d$ that meet the following criteria:
\begin{enumerate}
\item The walk consists of $2g - 2 + \ell(\alpha) + \ell(\beta)$ steps, where $\ell(\cdot)$ indicates the length of a partition.
\item The walk begins in the conjugacy class labeled by the partition $\alpha$ and ends in the conjugacy class labeled by the partition $\beta$.
\item The walk's endpoints and steps together generate a transitive subgroup of $S_d$.
\end{enumerate}

These numbers also have a geometric interpretation.  If we remove the requirement that the walks be monotone, we instead get the ordinary double Hurwitz numbers $H_g(\alpha, \beta)$.  By a classical result of Hurwitz \cite{Hurwitz}, $H_g(\alpha, \beta)/d!$ gives a weighted count of certain genus $g$ branched covers of the Riemann sphere, with branching data determined by the partitions $\alpha$ and $\beta$.  Accordingly, we can interpret $\vec H_g(\alpha, \beta)/d!$ as counting a subset of these branched covers satisfying the combinatorial constraint imposed by monotonicity of the corresponding walk on the Cayley graph.

Goulden, Guay-Paquet and Novak proved the following expansion, leading to an interpretation of the HCIZ integral as a generating function for the numbers $\vec H_g(\alpha, \beta)$ \cite{GGN}: 
\begin{multline} \label{eqn:hurwitz-expansion}
\frac{1}{N^2} \log \int_{\U(N)} e^{zN \, \mathrm{tr}(AUBU^\dagger)} dU = \\ \sum_{d=1}^N \frac{z^d}{d!} \sum_{g \ge 0} \left(\frac{1}{N^2}\right)^g \sum_{\alpha, \beta \vdash d} \left(\frac{-1}{N}\right)^{\ell(\alpha) + \ell(\beta)} \prod_{i = 1}^{\ell(\alpha)} \mathrm{tr}(A^{\alpha_i}) \prod_{i = 1}^{\ell(\beta)} \mathrm{tr}(B^{\beta_i}) \vec H_g(\alpha, \beta) \\ + O(z^{N+1}),
\end{multline}
where $z$ is a complex parameter.  Subsequently, Novak used the expansion (\ref{eqn:hurwitz-expansion}) to give a new proof of a theorem on the asymptotic distribution of vertical tiles in a uniform random lozenge tiling \cite{JonLozenge}.  It is an interesting open question whether there are analogous Hurwitz-theoretic expansions for other Harish-Chandra integrals.

There are several further combinatorial applications of Harish-Chandra integrals.  For example, they arise in characterizing inequalities for a family of generalized majorization orders \cite{MN-majorization}.  Coquereaux, Zuber and the author have also used Harish-Chandra integrals to study the volumes of Berenstein--Zelevinsky polytopes and the tensor product multiplicities of semisimple Lie algebras \cite{CZ1, CMZ, McS-deconvolution}.

\subsection{Theoretical physics}

Orbital integrals such as the HCIZ integral were originally studied by physicists in the context of so-called {\it multi-matrix models} \cite{IZ}, which are random matrix ensembles that can be thought of as gauge field theories in a simplified setting where spacetime consists of finitely many points.  Each random matrix in the ensemble then represents the value of the gauge field at one point in spacetime.  As we saw from the example of the two-matrix model in Section \ref{subsec:RMT}, orbital integrals appear in the partition functions of such models due to the coupling between the matrices.

In the {\it large-$N$ limit} (as the size of the matrices goes to infinity), 't Hooft observed that gauge theories can be described in terms of Feynman diagrams that are represented by planar graphs \cite{Gt}. Following this idea, Br\'ezin, Itzykson, Parisi and Zuber used a model of a single random matrix to study the combinatorics of planar maps \cite{BIPZ},\footnote{A {\it map} is a graph embedded in a surface of minimal genus.} initiating a stream of research relating matrix integrals to map enumeration; see \cite{Zvonkin} for a review.  Itzykson and Zuber then considered the case of two random matrices \cite{IZ}, and it is in this context that they derived the HCIZ formula (\ref{eqn:hciz}).  The work of Matytsin \cite{AM} on the large-$N$ limit of the HCIZ integral was also motivated by gauge theory, specifically by a certain model of the strong nuclear force introduced by Kazakov and Migdal \cite{KazMig}.

A map can be regarded as a discretization of the underlying surface.  This point of view provides a link between random matrix theory and discrete random geometry, leading to many applications in two-dimensional quantum gravity; see \cite{DGZ} for a review.  In particular, multi-matrix models are related to the combinatorics of {\it colored} maps, and they correspond to theories of a random function on a random surface (``statistical mechanics coupled to 2D gravity,'' in physics parlance).  From this perspective, the two-matrix model (\ref{eqn:2mtx-density}) describes the Ising model on a certain family of random graphs.  By studying the large-$N$ limit of this model, Boulatov and Kazakov were able to derive a number of exact results for the Ising model on random {\it planar} graphs \cite{Kaz, BouKaz}.

Aside from gauge theories and quantum gravity, the HCIZ integral has further applications in other areas of physics where random matrices play a role, perhaps most notably in quantum chaos and disordered systems \cite{KM, FK, BBJ}.

\subsection{Statistics and signal processing} \label{subsec:statsetc}
{\it Wishart matrices} are a model of a statistical estimator for the covariance matrix of a random vector. Their introduction by Wishart \cite{Wishart} is widely regarded as the historical genesis of random matrix theory.  Multiple definitions exist in the literature, but here we consider matrices of the form
\begin{equation} \label{eqn:wishart-def}
Y_{n,m} = X_{n,m} X_{n,m}^\dagger,
\end{equation}
where $X_{n,m}$ is an $n$-by-$m$ matrix whose columns are i.i.d.~$n$-variate centered real or complex Gaussian vectors.  The matrix $Y_{n,m}$ then has the distribution of a sample covariance matrix constructed from the columns of $X_{n,m}$.

The joint spectral density of a Wishart matrix can be expressed in terms of the HCIZ integral in the complex case, or in terms of an analogous integral over the orthogonal group in the real case; see \cite[\textsection3.2]{AGsurvey} for details.  This has led to applications of the HCIZ formula in techniques for covariance estimation; see e.g. \cite{NordenvaadSvensson, MN, BABP}.

The HCIZ formula has also been widely applied in signal processing for analyzing the performance of multiple-input multiple-output (MIMO) antenna arrays \cite{GTP, ACNZ, DKOPY, ZHK}.  In fact, many information-theoretic problems in the study of multiple-antenna transmission systems turn out to be closely related to covariance estimation: the channel capacity of a MIMO system is typically studied via the moment generating function for the mutual information between the transmitter and the receiver, which in turn is derived from the joint spectral density for a complex Wishart matrix.

\section{Proofs of Harish-Chandra's formula}
\label{ch:proofs}

This section collects many different proofs of Harish-Chandra's formula (\ref{eqn:hc}) and the HCIZ formula (\ref{eqn:hciz}).  To the author's knowledge, these proofs represent all currently known derivations of these two formulae.  The purpose of presenting such a wide variety of arguments is not just to gather them all in one place, but also to illustrate the diverse interpretations of Harish-Chandra integrals in many different areas of mathematics.

In total, we give six proofs of (\ref{eqn:hc}): Harish-Chandra's original proof, based on a simultaneous diagonalizability argument for invariant differential operators (Section \ref{sec:proof}); a proof by relating heat flow on a semisimple Lie algebra to heat flow on a Cartan subalgebra (Section \ref{sec:hc_heat_eqn_proof}); a proof via the Duistermaat--Heckman theorem in symplectic geometry (Section \ref{subsec:symplectic_proof}); two representation-theoretic proofs, one using the Kirillov character formula (Section \ref{subsec:KCF-proof}) and another using a character expansion for the heat kernel on $G$ (Section \ref{subsec:char-exp-proof}); and, finally, a harmonic analysis proof using Rossman's formula for the Fourier transform on a semisimple Lie algebra (Section \ref{subsec:harmonic-proof}).  We also give two further proofs of the HCIZ formula (\ref{eqn:hciz}), which are somewhat specific to the unitary case (Section \ref{subsec:further-hciz-proofs}).

In Sections \ref{sec:statement} and \ref{sec:background}, we introduce notation, formally state the integral formula as a theorem, and review some facts about invariant differential operators that we will need for the proofs.  We also situate the integral formula in the context of the paper \cite{HC} in which it was originally published.  Sections \ref{sec:hc_heat_eqn_proof}, \ref{subsec:symplectic_proof} and \ref{subsec:KCF-proof} are adapted from parts of the paper \cite{McS-heateqn}.

\subsection{Preliminaries and statement of the theorem}
\label{sec:statement}

Before explaining Harish-Chandra's formula, we first have to fix a substantial amount of notation.  Let $G$ be a compact, connected, semisimple, real Lie group of rank $r$ with Lie algebra $\mathfrak{g}$ and normalized Haar measure $dg$.  Let $\tot$ be a Cartan subalgebra of $\mathfrak{g}$, and $\mathfrak{g}_\C := \mathfrak{g} \otimes \mathbb{C}$, $\tot_\C := \tot \otimes \mathbb{C}$ be the complexifications of $\mathfrak{g}$ and $\tot$.

Let $W$ be the Weyl group of $\mathfrak{g}$ with respect to $\tot$.  Then $W$ acts on $\tot$ as a group of linear transformations generated by reflections, and for each $w \in W$ we denote by $\epsilon(w)$ the sign of $w$, that is $\epsilon(w) = (-1)^{|w|}$ where $|w|$ is the number of reflections required to generate $w$.

Let $\langle \cdot, \cdot \rangle: \mathfrak{g} \times \mathfrak{g} \to \R$ be an $\mathrm{Ad}$-invariant inner product on $\mathfrak{g}$, which we extend linearly to a complex-valued form on $\mathfrak{g}_\C$.  In all of the specific examples that we consider, elements of $\gog$ will be identified with square matrices, and we will take $\langle x, y \rangle$ to be the {\it Hilbert--Schmidt inner product} $\mathrm{tr}(x^\dagger y)$, also known as the trace form.  The inner product $\langle \cdot, \cdot \rangle$ induces an isomorphism $\mathfrak{g} \to \mathfrak{g}^*$ by $x \mapsto \langle x, \cdot \rangle$.  Let $n = \dim \mathfrak{g}$.  Given a basis $\{e_1, \hdots, e_n \}$ of $\mathfrak{g}$, we can define coordinate functions in the usual way by setting $x_i = \langle e_i, x \rangle$, so that we may write $x = (x_1, \hdots, x_n)$.  We can then identify $\mathfrak{g}$ with the real subspace of all $x \in \mathfrak{g}_\C$ that have strictly real coordinates.  We use the standard multi-index notation for monomials and for partial derivatives with respect to the coordinates in the chosen basis:
$$x^\beta := x_1^{\beta_1} \cdots x_n^{\beta_n}, \qquad \qquad \frac{\partial^{|\beta|}}{\partial x^\beta} := \frac{\partial^{|\beta|}}{\partial x_1^{\beta_1} \cdots \partial x_n^{\beta_n}},$$ where $\beta = (\beta_1, \hdots, \beta_n)$ is a multi-index with $n$ components.

If $V$ is a vector space over $\C$ or $\R$, we write $\Pi(V)$ for the algebra of polynomial functions on $V$ with complex coefficients.  We identify polynomials on $\mathfrak{g}_\C$ or $\tot_\C$ with their restrictions to $\mathfrak{g}$ or $\tot$ respectively, so that $$\Pi(\gog) \cong \Pi(\gog_\C) \cong \C[x_1, \hdots, x_n],$$ where $x_1, \hdots, x_n$ are the coordinate functions defined previously. Given a polynomial $p(x) = \sum_\beta c_\beta x^\beta \in \Pi(\mathfrak{g})$ where $\beta$ ranges over multi-indices, denote by $p(\partial)$ the differential operator $$p(\partial) = \sum_\beta c_\beta \frac{\partial^{|\beta|}}{\partial x^\beta}.$$ We can extend $\langle \cdot, \cdot \rangle$ to a scalar product $[ \! [ \cdot, \cdot ] \!]$ on $\Pi(\mathfrak{g})$ by defining 
\begin{equation} \label{eqn:poly-bracket-def}
[ \! [ p, q ] \!] = p(\partial)q(x) \big |_{x=0}.
\end{equation}
If $\{e_1, \hdots, e_n\}$ is an orthonormal basis of $\mathfrak{g}$ with respect to $\langle \cdot, \cdot \rangle$, then an orthonormal basis for $\Pi(\mathfrak{g})$ with respect to $[ \! [ \cdot , \cdot ] \!]$ is given by monomials of the form $(\prod_i e_i^{\beta_i})/\sqrt{\beta!}$, where $\beta$ is a multi-index and the multi-index factorial has the usual meaning $\beta ! := \beta_1 ! \hdots \beta_n!$.  Lemma \ref{lem:normconst} below shows how to compute $[ \! [ p , q ] \!]$ in terms of the coefficients of $p$ and $q$.  One can then easily verify that $[ \! [ \cdot , \cdot ] \!]$ is symmetric and nondegenerate, and that $[ \! [ x , y ] \!] = \langle x , y \rangle$ for $x, y \in \mathfrak{g}$.

Finally, let $\Phi^+$ be a choice of positive roots of $\mathfrak{g}$.  The {\it discriminant} of $\mathfrak{g}$ is the homogeneous polynomial $\Delta_\gog: \tot \to \mathbb{C}$ given by taking the product of the positive roots: \begin{equation} \label{eqn:pidef} \Delta_\gog(x) := \prod_{\alpha \in \Phi^+} \langle \alpha, x \rangle.\end{equation}  The discriminant plays an important role in geometric analysis on $\mathfrak{g}$. One of its essential properties is that it is {\it skew} with respect to the action of $W$: for $w \in W$, $\Delta_\gog(w(x)) = \epsilon(w) \Delta_\gog(x)$.  This follows from the fact that if $\alpha$ is a simple root, the reflection through the plane $\{ \alpha = 0 \}$ sends $\alpha \mapsto -\alpha$ and permutes the other positive roots.

With the above definitions, Harish-Chandra's formula states:

\begin{theorem} \label{thm:hc} For all $x, y \in \tot_\C$,
\begin{equation} \tag{\ref{eqn:hc}}
\Delta_\gog(x) \Delta_\gog(y) \int_G e^{\langle \mathrm{Ad}_g x, y \rangle} dg = \frac{ [ \! [ \Delta_\gog, \Delta_\gog ] \!] }{|W|} \sum_{w \in W} \epsilon(w) e^{\langle w(x),y \rangle}.
\end{equation}
\end{theorem}

On the left-hand side of this equation, we have an integral over the Lie group $G$.  On the right-hand side, we have a finite sum over the Weyl group $W$.  Since $G$ is compact, the action of $W$ on $\tot_\C$ is represented by a finite subgroup of $G$ acting by the adjoint representation, so that (\ref{eqn:hc}) has the interpretation that the integral on the left is equal to a normalized, alternating sum of the integrand's values at finitely many points.  In other words, the integral is {\it localized} at the points of $G$ that represent elements of $W$.

The following lemma shows how to evaluate $[ \! [ p, q ] \!]$ for two polynomials $p,q$.

\begin{lemma} \label{lem:normconst}
Let $$p(x) = \sum_{\beta} p_\beta x^\beta \quad \textrm{and} \quad q(x) = \sum_{\beta} q_\beta x^\beta$$ be two polynomial functions on $\mathfrak{g}$, where $(x_1, \hdots, x_n)$ are coordinates in an orthogonal basis and $\beta$ runs over multi-indices.  Then $$p(\partial) q(x) \big |_{x=0} = \sum_{\beta} p_\beta q_\beta \beta!$$
\end{lemma}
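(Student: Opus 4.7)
The plan is to reduce to the case of monomials and then invoke a one-variable computation. Since both sides of the claimed identity are bilinear in $p$ and $q$, it suffices to verify that for any two multi-indices $\beta, \gamma$,
$$\partial^\beta(x^\gamma) \big|_{x=0} = \beta! \, \delta_{\beta,\gamma},$$
where $\partial^\beta = \prod_i \partial^{\beta_i}/\partial x_i^{\beta_i}$ and $x^\gamma = \prod_i x_i^{\gamma_i}$.

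To prove this, I would first observe that both the differential operator $\partial^\beta$ and the monomial $x^\gamma$ factor as products over the coordinate indices $i$, so the evaluation at $0$ factors as well. This reduces everything to the one-variable identity $(d/dx)^k (x^m) \big|_{x=0} = k! \, \delta_{km}$, which is immediate: if $k < m$, the result of differentiating is a positive-degree polynomial vanishing at $0$; if $k > m$, the monomial is annihilated after the $m$-th derivative; and if $k = m$, the result is the constant $k!$. Taking the product over $i$ gives exactly $\beta! \, \delta_{\beta,\gamma}$, and summing over $\beta$ and $\gamma$ weighted by $p_\beta q_\gamma$ yields the stated formula.

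There is no real obstacle here; the content is elementary multivariable calculus once the multi-index bookkeeping is unpacked. The one conceptual point worth flagging is that the expression $p(\partial)$ is defined using partial derivatives in the given coordinate system, and the orthogonality of the coordinates is what guarantees that this definition agrees with the intrinsic one coming from the identification $\mathfrak{g} \cong \mathfrak{g}^*$ via $\langle\cdot,\cdot\rangle$; the combinatorial identity itself does not depend on orthogonality. As an immediate corollary, the monomials $\bigl(\prod_i e_i^{\beta_i}\bigr)/\sqrt{\beta!}$ form an orthonormal basis of $\mathrm{Sym}(\mathfrak{g})$ under $[\![\cdot,\cdot]\!]$, from which the symmetry and non-degeneracy of $[\![\cdot,\cdot]\!]$ and the equality $[\![x,y]\!] = \langle x,y \rangle$ for $x,y \in \mathfrak{g}$ follow at once.
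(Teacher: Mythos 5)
Your proof is correct and takes essentially the same route as the paper's: expand bilinearly and verify the monomial identity $\partial^{\alpha}x^{\beta}\big|_{x=0}=\beta!\,\delta_{\alpha\beta}$, so that only the diagonal terms survive. The only cosmetic difference is that you establish this identity by factoring coordinate-by-coordinate down to the one-variable case, whereas the paper argues directly on multi-indices by comparing total degrees; the content is identical.
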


Note that since only finitely many $p_\beta$ and $q_\beta$ are nonzero, the sum is finite.

\begin{proof}[Proof]
Expanding out terms, we have $$p(\partial) q(x) = \sum_{\alpha, \beta} p_\alpha q_\beta \frac{\partial^{|\alpha|} x^\beta}{\partial x^\alpha}.$$  If $|\alpha | \ge |\beta|$ and $\alpha \not = \beta$, then ${\partial^{|\alpha|} x^\beta}/{\partial x^\alpha} = 0$. If $|\alpha | < |\beta|$, then ${\partial^{|\alpha|} x^\beta}/{\partial x^\alpha}$ has positive degree and is killed by evaluating at $x = 0$, so that we are left only with terms where $\alpha = \beta$, and the sum becomes $$p(\partial) q(x) \big |_{x=0} = \sum_{\beta} p_\beta q_\beta \frac{\partial^{|\beta|} x^\beta}{\partial x^\beta} = \sum_{\beta} p_\beta q_\beta \beta!$$ as desired.
\end{proof}

This lemma allows us to compute $[\![ \Delta_\gog, \Delta_\gog ] \!]$ from the coefficients of $\Delta_\gog$.  Writing $$\Delta_\gog(x) = \sum_{|\beta| = |\Phi^+|} \pi_\beta x^\beta$$ for some constants $\pi_\beta$, where $|\Phi^+|$ is the number of positive roots, we have \begin{equation} \label{eqn:normconst} [\![ \Delta_\gog, \Delta_\gog ] \!] = \left( \sum_{|\beta| = |\Phi^+|} \pi_\beta \frac{\partial^{|\beta|}}{\partial x^\beta} \right) \left( \sum_{|\beta| = |\Phi^+|} \pi_\beta x^\beta \right) = \sum_{|\beta| = |\Phi^+|} \pi_\beta^2 \beta ! \end{equation}

\subsection{Background and context of the theorem}
\label{sec:background}

To motivate the formula (\ref{eqn:hc}), it's helpful to understand what Harish-Chandra was trying to do more broadly in the paper \cite{HC} in which it first appeared.  That article is primarily concerned with describing the algebras of differential operators on {\it noncompact} Lie algebras.  In fact, Harish-Chandra states at the outset that ``When [the algebra] is compact, this theory is not difficult ... although some of the results obtained here ... seem to be new.''  Nevertheless, the compact case clearly illustrates the central theme of the paper, which addresses the relationship between differential operators on a Lie algebra and on a Cartan subalgebra.  In particular, for functions $f \in C^\infty(\mathfrak{g})$ that are invariant under the adjoint action of $G$ on $\mathfrak{g}$, Harish-Chandra's explicit goal is to express $\overline{p(\partial) f}$ in terms of $\bar p(\partial)$ and $\bar f$, where the bar indicates restriction to $\tot$.  What he finds is that \begin{equation} \label{eqn:deriv_restr} \Delta_\gog \ \overline{p(\partial) f} = \bar p(\partial) (\Delta_\gog \bar f) \end{equation} where $\Delta_\gog$ is the discriminant defined in (\ref{eqn:pidef}).  This result anticipates the modern theory of radial parts of differential operators, discussed below in Section \ref{sec:radparts}, and appears in hindsight as a foretaste of ideas in spherical harmonic analysis that led to many later developments in the theory of special functions, such as Dunkl theory \cite{AnkerDunklNotes}.

Many of the results in \cite{HC} can be viewed as building on the Chevalley restriction theorem (Theorem \ref{thm:chev} below), which says that the $G$-invariant polynomials on $\mathfrak{g}$ are isomorphic to the $W$-invariant polynomials on $\tot$, and that an isomorphism is given in the $\mathfrak{g}$-to-$\tot$ direction simply by restriction of functions.  In effect, Harish-Chandra expands the scope of this theorem, showing that the isomorphism for invariant functions extends to a homomorphism from a particular algebra of $G$-invariant differential operators on $\mathfrak{g}$ to the algebra of all $W$-invariant differential operators on $\tot$.  He then elaborates a number of consequences including the relation (\ref{eqn:deriv_restr}) and the integral formula (\ref{eqn:hc}).

In the rest of this section we will make these ideas concrete, starting with a formal statement of the Chevalley restriction theorem.  The group $G$ acts on functions $f: \mathfrak{g} \to \mathbb{C}$ by sending $f(x) \mapsto f(\mathrm{Ad}_{g^{-1}} x)$ for each $g \in G$.  Likewise, the Weyl group $W$ acts on functions $v: \tot \to \mathbb{C}$ by $v(x) \mapsto v(w^{-1}x)$ for $w \in W$.  Let $I(\mathfrak{g}) \subset \Pi(\mathfrak{g})$ be the space of polynomials on $\mathfrak{g}$ that are invariant under the adjoint action of $G$, and let $I(\tot) \subset \Pi(\tot)$ be the space of polynomials on $\tot$ that are invariant under the action of $W$.  Then we have:

\begin{theorem}[Chevalley restriction theorem] \label{thm:chev}
For all $p \in I(\mathfrak{g})$, the restriction $\bar p$ belongs to $I(\tot)$.  Moreover, $p \mapsto \bar p$ is an isomorphism of $I(\mathfrak{g})$ onto $I(\tot)$.
\end{theorem}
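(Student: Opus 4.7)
The proof breaks into three standard parts: well-definedness of the restriction map $p \mapsto \bar p$ as a map $I(\mathfrak{g}) \to I(\mathfrak{h})$, injectivity, and surjectivity. The first two can be dispatched quickly, while the third is where essentially all the technical content lives. For well-definedness, I would identify $W$ with $N_G(T)/Z_G(T)$, where $T$ is the maximal torus of $G$ with Lie algebra $\mathfrak{h}_0$, so that each $w \in W$ admits a representative $g \in N_G(T)$ whose adjoint action on $\mathfrak{h}$ realizes $w$. Then for any $p \in I(\mathfrak{g})$ and $h \in \mathfrak{h}$,
$$\bar p(wh) \;=\; p(\mathrm{Ad}_g h) \;=\; p(h) \;=\; \bar p(h),$$
so $\bar p$ is automatically $W$-invariant whenever $p$ is $G$-invariant.

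For injectivity, I would rely on the standard fact that $\bigcup_{g \in G} \mathrm{Ad}_g\, \mathfrak{h}_0 = \mathfrak{g}_0$, a consequence of the conjugacy of maximal tori in a compact connected Lie group (transferred to the Lie algebra via the exponential map). If $\bar p = 0$, then $p$ vanishes on all of $\mathfrak{g}_0$ by $G$-invariance, and hence vanishes identically as a polynomial on $\mathfrak{g}$ by polynomial (equivalently, analytic) continuation from the real form to its complexification.

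Surjectivity is the main obstacle, and my approach would be to invoke the Chevalley--Shephard--Todd theorem: since $W$ acts on $\mathfrak{h}$ as a finite group generated by reflections, $I(\mathfrak{h})$ is a polynomial algebra on $N$ algebraically independent homogeneous generators $q_1, \dots, q_N$ whose degrees $d_1, \dots, d_N$ are dictated by the exponents of $W$. Because restriction is an injective $\mathbb{C}$-algebra homomorphism, it then suffices to exhibit $p_1, \dots, p_N \in I(\mathfrak{g})$ whose restrictions $\bar p_i$ are algebraically independent and of the correct degrees $d_i$; this forces $\mathbb{C}[\bar p_1, \dots, \bar p_N] = I(\mathfrak{h})$ and finishes the proof. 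Natural candidates come from trace polynomials: for any finite-dimensional representation $\rho : \mathfrak{g} \to \mathrm{End}(V)$, the map $x \mapsto \mathrm{tr}(\rho(x)^k)$ lies in $I(\mathfrak{g})$, and its restriction to $\mathfrak{h}$ is the $k$-th power-sum symmetric function in the weights of $\rho$. The hard part is choosing representations and powers so that one obtains invariants whose restrictions realize each required degree $d_i$ and whose images are algebraically independent; this can be done uniformly using exterior powers of the adjoint representation, or case by case via the classification of simple root systems, which is ultimately the route I expect to find simplest to spell out in full detail.
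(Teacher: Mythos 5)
First, a point of comparison: the paper does not prove Theorem \ref{thm:chev} at all --- it is quoted as a classical result (due to Chevalley) and used as a black box, so there is no in-paper argument to measure yours against. Judged on its own, your proposal is correctly structured and two of its three parts are complete. Well-definedness via $W \cong N_G(T)/Z_G(T)$ is fine, and injectivity via the conjugation fact $\bigcup_{g\in G}\mathrm{Ad}_g\,\mathfrak{h}_0 = \mathfrak{g}_0$ together with Zariski density of the real form $\mathfrak{g}_0$ in $\mathfrak{g}$ is exactly right. The reduction of surjectivity to exhibiting invariants $p_1,\dots,p_N$ with algebraically independent restrictions in the degrees $d_1,\dots,d_N$ is also sound (the Hilbert-series comparison $\prod_i(1-t^{d_i})^{-1}$ forces $\mathbb{C}[\bar p_1,\dots,\bar p_N]=I(\mathfrak{h})$ once the degrees match).

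The genuine gap is that this last reduction is where the entire content of surjectivity lives, and you do not carry it out: you must actually produce invariants realizing each degree $d_i$ \emph{and} verify algebraic independence of their restrictions (typically via a Jacobian criterion), and neither is routine. Your suggested uniform source, trace polynomials of exterior powers of the adjoint representation, does not obviously suffice --- for $D_N$ the degree-$N$ generator (the Pfaffian) is not reachable from traces of powers in the natural or adjoint representations and requires the half-spin representations, which is a warning sign that ``choose representations realizing the right degrees'' is not a uniform one-liner. You would also then be importing Chevalley--Shephard--Todd plus the classification of root systems, which is heavier machinery than the theorem needs. A cleaner standard route for surjectivity, which avoids both: show that the degree-$k$ part of $I(\mathfrak{h})$ is spanned by the symmetrized powers $h\mapsto\sum_{w\in W}\langle w\mu,h\rangle^k$ with $\mu$ in the weight lattice (polarization plus Zariski density of rational points), and then express each such symmetrized power as a linear combination of restrictions $\overline{\mathrm{tr}(\rho_\lambda(x)^k)}$ by induction on the dominance order of $\lambda$, since $\overline{\mathrm{tr}(\rho_\lambda(x)^k)}$ equals the orbit sum for the extreme weight $\lambda$ plus contributions from strictly lower weights. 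As written, your surjectivity argument is a plan rather than a proof.
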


\begin{example} \label{ex:chevalley}
In the case $G = \U(N)$, we have $\mathfrak{g} = \mathfrak{u}(N)$, the space of $N$-by-$N$ skew-Hermitian matrices.  Multiplying by $i$, we can identify $\mathfrak{u}(N)$ with the space of $N$-by-$N$ {\it Hermitian} matrices.  In this setting, the Chevalley restriction theorem tells us that if $p$ is a polynomial in the entries of a Hermitian matrix $M$, then we have $p(M) = p(UMU^\dagger)$ for all $U \in \U(N)$ if and only if $p$ can be written as a symmetric polynomial in the eigenvalues of $M$.
\end{example}

In order to explain how Harish-Chandra extends Theorem \ref{thm:chev} to differential operators, we must introduce some further ideas from geometric analysis.

\subsubsection{Generalities on differential operators}
First we must clarify what exactly we mean by a ``differential operator.''  For a finite-dimensional vector space $V$ over $\mathbb{C}$ or $\mathbb{R}$, let $$\partial \Pi(V) := \{ p(\partial) \textrm{ } | \textrm{ } p \in \Pi(V) \}$$ be the algebra of constant-coefficient differential operators on $V$.  We also regard $\Pi(V)$ itself as an algebra of multiplication operators on $V$, so that $p \in \Pi(V)$ acts by $f \mapsto pf$. The algebra $\mathcal{D}(V)$ generated by both $\Pi(V)$ and $\partial \Pi(V)$ is the algebra of {\it polynomial differential operators} on $V$, and these are the operators that we will primarily study.  By applying the product rule, it is always possible to write any element of $\mathcal{D}(V)$ in the form $\sum_i p_i \cdot q_i(\partial)$, where $p_i$ and $q_i$ are polynomials.\footnotemark  \footnotetext{Here the dot ``$\cdot$'' indicates multiplication but also emphasizes that $p_i$ acts as a multiplication operator {\it following} differentiation by $q_i(\partial)$, as opposed to differentiation by $(p_i q_i)(\partial) \in \partial \Pi(V)$.  We will sometimes also use the notation $p(\partial) \circ q$ to indicate the composition of operators, i.e. $(p(\partial) \circ q)f = p(\partial)(qf)$.}

More generally, if $U \subseteq V$ is a nonempty open set, then we can define a {\it differential operator on $U$} to be any operator acting on $C^\infty(U)$ that has the form $\sum_{i=1}^n a_i \cdot q_i(\partial)$ where $a_i \in C^\infty(U)$ and $q_i \in \Pi(V)$.  For example, in this case the coefficients $a_i$ could blow up at the boundary of $U$ or could be rational functions with no poles in the interior of $U$.  When we use the term ``differential operator'' with no further qualification, we will mean an operator of this form. Such operators form an algebra $\mathfrak{D}(U)$, and there is a natural inclusion of $\mathcal{D}(V)$ as a subalgebra of $\mathfrak{D}(U)$ given by $p_i \cdot q_i(\partial) \mapsto p_i |_U \cdot q_i(\partial)$.

Let $W \subset V$ be a subspace and let $(w_1, \hdots, w_k)$ be linear coordinates on $W$. Then we can identify $\mathcal{D}(W)$ with the subalgebra of $\mathcal{D}(V)$ generated by 1, $\{ w_i \}_{i=1}^k$, and $\{ w_i (\partial) \}_{i=1}^k$. Thus each $D \in \mathcal{D}(W)$ can be thought of as a differential operator either on $W$ or on $V$.  We will ignore this distinction with impunity, since for any $f \in C^\infty(V)$ we have $(Df) |_W = D(f|_W)$.  Observe also that if $V$ is a vector space over $\R$, then there is a natural correspondence between polynomial differential operators on $V$ and on the complexification $V \otimes \C$, obtained by identifying 
\begin{equation} \label{eqn:diffop-identify-cx-real}
\frac{\partial}{\partial x_j} \ \ \longleftrightarrow \ \ \frac{\partial}{\partial z_j} = \frac{\partial}{\partial x_j} - i \frac{\partial}{\partial y_j}
\end{equation}
where $x_j$, $y_j$ are real coordinates on $V$ and $z_j = x_j + i y_j$ is a complex coordinate on $V \otimes \C$.

\subsubsection{Group actions on differential operators}
We now return to the case where the underlying vector space is one of the Lie algebras $\mathfrak{g}$ or $\tot$. There is a natural way in which $G$ acts on $\mathcal{D}(\mathfrak{g})$ and $W$ acts on $\mathcal{D}(\tot)$, extending the respective actions on $\Pi(\mathfrak{g})$ and $\Pi(\tot)$.  We define a $G$-action on $\partial\Pi(\mathfrak{g})$ by stipulating that differential operators transform in the opposite way to functions: under the map $x \mapsto \mathrm{Ad}_{g^{-1}} x$, we have $$\frac{\partial}{\partial x} \ \mapsto \ \frac{\partial}{\partial (\mathrm{Ad}_{g^{-1}} x)} = (\mathrm{Ad}_g x) (\partial),$$ so that $p(\partial) \mapsto (p \circ \mathrm{Ad}_g )(\partial)$, where the circle ``$\, \circ \,$'' indicates composition of functions rather than operators. This choice guarantees that the actions on $\Pi(\mathfrak{g})$ and on $\partial \Pi(\mathfrak{g})$ are compatible. Since every element of $\mathcal{D}(\mathfrak{g})$ can be written as $\sum_i p_i \cdot q_i(\partial)$, the action of $G$ on $\mathcal{D}(\mathfrak{g})$ is fully determined by the actions on $\Pi(\mathfrak{g})$ and $\partial\Pi(\mathfrak{g})$: an element $g \in G$ sends $$p \cdot q(\partial) \mapsto (p \circ \mathrm{Ad}_{g^{-1}}) \cdot (q \circ \mathrm{Ad}_{g})(\partial)\, ,$$ and this action extends to all of $\mathcal{D}(\mathfrak{g})$ by linearity.  The action of $W$ on $\mathcal{D}(\tot)$ is defined analogously.  Note that the algebra homomorphism $\mathcal{D}(\mathfrak{g}) \to \mathcal{D}(\mathfrak{g})$ thus induced by any $g \in G$ is inverted by $g^{-1}$, so that for each $g \in G$ or $w \in W$ we obtain an automorphism of $\mathcal{D}(\mathfrak{g})$ or $\mathcal{D}(\tot)$ respectively.

In what follows, we will refer to several algebras of {\it invariant} differential operators.  Namely, let $\mathcal{I}'$ denote those elements of $\mathcal{D}(\mathfrak{g})$ that are invariant under the action of $G$, and let $\mathcal{I}(\mathfrak{g})$ be the subalgebra of $\mathcal{I}'$ generated by both $I(\mathfrak{g})$ and $\partial I(\mathfrak{g}) $, where $\partial I (\mathfrak{g}) := \{ p(\partial) \textrm{ } |\textrm{ } p \in I(\mathfrak{g})\}$.  Let $\mathcal{I}(\tot)$ denote those elements of $\mathcal{D}(\tot)$ that are invariant under the action of $W$.  Note that by the correspondence (\ref{eqn:diffop-identify-cx-real}), we can equivalently consider these as algebras of invariant differential operators on the complexifications $\gog_\C$ and $\tot_\C$.

We record for later use the following important property of the discriminant $\Delta_\gog$ \cite[Corollary to Lemma 10]{HC}.
\begin{proposition} \label{prop:pi_harmonic}
$D\Delta_\gog = 0$ for all $D \in \mathcal{I}(\tot)$ that annihilate the constants.
\end{proposition}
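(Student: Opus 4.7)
The plan rests on three observations: that $D\Pi$ inherits the $W$-skewness of $\Pi$; that every $W$-skew polynomial on $\mathfrak{h}$ is divisible by $\Pi$; and that the hypothesis on $D$ forces $D\Pi$ to have degree strictly less than $\deg \Pi = r$. Taken together, these leave $D\Pi = 0$ as the only possibility.

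For the first point, I would use the $W$-action on $\mathcal{D}(\mathfrak{h})$ introduced just above: invariance of $D$ means $w \cdot (Df) = D(w \cdot f)$ for every smooth $f$. Setting $f = \Pi$ and applying the skewness $w \cdot \Pi = \epsilon(w)\Pi$ noted in Section \ref{sec:statement} yields $w \cdot (D\Pi) = \epsilon(w)(D\Pi)$. For the second, I would recall the classical argument: if $q$ is $W$-skew and $\alpha$ is a simple root, then $q \circ s_\alpha = -q$, so $q$ vanishes on $\{\alpha = 0\}$ and is divisible by the linear form $\alpha$; since every positive root lies in the $W$-orbit of a simple root and skewness is preserved under $W$, the same holds for every $\alpha_i$, and the pairwise coprimeness of these linear forms then gives divisibility by $\Pi = \prod_i \alpha_i$.

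The third step carries the real content. In the model case $D = p(\partial)$ with $p \in I(\mathfrak{h})$ of positive degree, one has $\deg(D\Pi) = r - \deg p < r$, and being simultaneously skew (hence divisible by $\Pi$) and of degree $< r$ forces $D\Pi = 0$. For a general $D \in \mathcal{I}(\mathfrak{h})$ annihilating constants, I would decompose $D$ using the natural $\mathbb{Z}$-grading on $\mathcal{D}(\mathfrak{h})$ that assigns degree $+1$ to coordinates $h$ and $-1$ to derivatives $\partial$, observe that each homogeneous component is again $W$-invariant, and then argue that the constant-annihilation hypothesis strips away every component capable of producing a term in $D\Pi$ of degree $\geq r$.

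The main obstacle is this last reduction. An arbitrary $W$-invariant operator need not be a sum of products of $W$-invariant polynomials with $W$-invariant derivatives — the invariance can arise from cancellations between individually non-invariant pieces — so one cannot simply factor out a $p(\partial)$ with $p \in I(\mathfrak{h})^+$. The grading argument is the cleanest route I see to reduce to a form in which the degree bound is visible in each homogeneous component separately, at which point the skewness and the divisibility by $\Pi$ close the argument.
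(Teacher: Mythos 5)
Your first two steps are exactly the paper's proof: invariance of $D$ in the sense $w\cdot(Df)=D(w\cdot f)$ gives $w\cdot(D\Pi)=\epsilon(w)\,D\Pi$, each reflection then forces $D\Pi$ to vanish on the corresponding root hyperplane, and coprimality of the positive roots gives $\Pi \mid D\Pi$. The obstacle you flag in the third step is genuine, and the grading reduction you propose cannot close it, because the needed degree bound is simply false for a general $D\in\mathcal{I}(\mathfrak{h})$ annihilating constants. The Euler operator $E=\sum_i x_i\,\partial/\partial x_i$ (in orthogonal coordinates on $\mathfrak{h}$) is invariant under \emph{every} linear action of $W$, hence lies in $\mathcal{I}(\mathfrak{h})$ (it even lies in the subalgebra generated by $I(\mathfrak{h})\cup\partial I(\mathfrak{h})$, being $\tfrac14\bigl([\omega(\partial),\omega]-2\dim\mathfrak{h}\bigr)$ up to sign conventions, with $\omega$ the invariant quadratic); it is homogeneous of degree $0$ in your grading; it annihilates constants; and yet $E\Pi=r\Pi\neq 0$ since $\Pi$ is homogeneous of degree $r$. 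So the constant-annihilation hypothesis does not strip away the degree-$0$ component, and a degree-$0$ component can return a nonzero multiple of $\Pi$ itself, which is skew and divisible by $\Pi$ and therefore cannot be excluded by your first two steps. No refinement of the grading argument will rescue the statement in this generality.

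The resolution is that the proposition should be read as a statement about $D=q(\partial)$ with $q\in I(\mathfrak{h})$ of positive degree --- this is the standard definition of $W$-harmonicity, and it is the only case invoked in the rest of the paper. The paper's own one-line proof tacitly makes this restriction when it asserts that $D\Pi$ has strictly lower degree than $\Pi$, an assertion that fails for $E$ above. For such $D$, your ``model case'' computation $\deg(D\Pi)\le r-\deg q<r$ is the whole of the remaining argument, and your proof then coincides with the paper's. In short: steps one and two are correct and identical to the paper's; step three is correct and complete for the operators the proposition is actually about; and the extra generality you were rightly uneasy about is not attainable, so the correct move is to narrow the statement rather than to strengthen the proof.
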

\begin{proof}
Since $D$ is $W$-invariant and $\Delta_\gog$ is skew, $D\Delta_\gog$ must be skew.  For each reflection $w_\alpha \in W$ we thus have $w_\alpha(D\Delta_\gog) = -D\Delta_\gog$, which implies that $D\Delta_\gog$ vanishes on the hyperplane $\{\alpha = 0 \}$.  Therefore the linear functional $\alpha$, regarded as a polynomial of degree 1, divides $D\Delta_\gog$.  Now, it is a basic fact of Lie theory that the root system of a semisimple Lie algebra is {\it reduced}, meaning that $\Phi$ contains no scalar multiples of $\alpha$ except for $\alpha$ itself and $-\alpha$.  This implies that no positive root divides any other positive root, since a polynomial of degree 1 divides another polynomial of degree 1 if and only if the two polynomials are scalar multiples of each other.  In other words, as polynomials, the positive roots are relatively prime.  Therefore the fact that each $\alpha$ divides $D\Delta_\gog$ implies that the entire discriminant $\Delta_\gog$ divides $D\Delta_\gog$.  But $D\Delta_\gog$ is strictly lower degree than $\Delta_\gog$, so we must have $D\Delta_\gog = 0$.
\end{proof}
We say that $\Delta_\gog$ is {\it $W$-harmonic}.  In fact, the space of all $W$-harmonic polynomials on $\tot$ is exactly the linear span of the partial derivatives of $\Delta_\gog$ \cite[ch.~3, Theorem 3.6]{SH}.  In particular, since the Laplacian on $\tot$ is $W$-invariant, $\Delta_\gog$ is harmonic in the traditional sense.

\subsubsection{Radial part of a differential operator}
\label{sec:radparts}
There is one final idea that we need to introduce before discussing Harish-Chandra's homomorphism of invariant differential operators.  This is the notion of the {\it radial part} of a differential operator, which was fully developed by Helgason in the 1960s and 1970s\footnotemark \footnotetext{See Helgason's book \cite[ch. 2, \textsection3]{SH} for a detailed reference on radial parts and other geometric operations on differential operators in a general setting.} but which already plays an important role in \cite{HC}.  We say that a submanifold $M \subset \mathfrak{g}$ is {\it transverse} to the adjoint orbits in $\mathfrak{g}$ if for each $p \in M$ we have a decomposition of tangent spaces \begin{equation} \label{eqn:transverse}T_p \mathfrak{g} = T_p \mathcal{O}_p \oplus T_p M,\end{equation} where $\mathcal{O}_p = \{ \mathrm{Ad}_g p \ | \ g \in G \}$ is the adjoint orbit of $p$.  We state without proof the following theorem, which is a special case of \cite[ch.~2, Theorem 3.6]{SH}.

\begin{theorem} \label{thm:radial-part} Let $M \subset \mathfrak{g}$ be a submanifold of $\mathfrak{g}$ that is transverse to the adjoint orbits. Let $D$ be a differential operator on $\mathfrak{g}$.  Then there exists a unique differential operator $\gamma(D)$ on $M$ such that, for each function $f \in C^\infty(\mathfrak{g})$ that is locally $\mathrm{Ad}$-invariant in the sense that $f(\mathrm{Ad}_g x) = f(x)$ for all $g$ in some neighborhood of $\mathrm{id}_G$, $$\overline{(Df)} = \gamma(D) \bar f,$$ the bar indicating restriction to $M$. \end{theorem}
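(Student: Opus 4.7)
The plan is to construct $\gamma(D)$ locally in a chart adapted to the transversality condition, and then deduce uniqueness (and with it coordinate-independence and globality) from the fact that every smooth function on $M$ extends locally to a function on $\mathfrak{g}$ that is constant along orbits.

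Fix $p \in M$ and choose a vector-space complement $\mathfrak{a}$ to the infinitesimal stabilizer $\{X \in \mathfrak{g}_0 : [X, p] = 0\}$ inside $\mathfrak{g}_0$. Form the smooth map $\phi(X, m) = \mathrm{Ad}_{\exp X} m$ on a neighborhood of $(0, p)$ in $\mathfrak{a} \times M$. Its differential at $(0, p)$ is $(X, v) \mapsto [X, p] + v$, whose image is $T_p \mathcal{O}_p + T_p M = T_p \mathfrak{g}$ by the transversality decomposition (\ref{eqn:transverse}); together with the dimension count $\dim \mathfrak{a} + \dim M = \dim \mathcal{O}_p + \dim M = \dim \mathfrak{g}$, this makes $\phi$ a local diffeomorphism. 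The resulting coordinates $(y, m)$ place $M$ inside the chart as $\{y = 0\}$, with the $y$-directions tangent to the adjoint orbits through points of $M$, so that any locally $\mathrm{Ad}$-invariant function $f$ pulls back to a function independent of $y$.

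Once the chart is in place, I would express $D$ in these coordinates via the change of variables as
\[
D \;=\; \sum_{\alpha, \beta} c_{\alpha, \beta}(y, m)\, \partial_y^{\alpha}\, \partial_m^{\beta},
\]
with coefficients $c_{\alpha,\beta}$ smooth because $D$ has smooth coefficients and $\phi$ is a diffeomorphism. When $D$ acts on a locally $\mathrm{Ad}$-invariant $f = f(m)$, every term with $\alpha \neq 0$ kills $f$, so restricting to $\{y = 0\}$ gives
\[
\overline{(Df)}(m) \;=\; \sum_{\beta} c_{0, \beta}(0, m)\, \partial_m^{\beta} f(m),
\]
which defines a differential operator $\gamma(D)$ on the chart in $M$. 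For uniqueness, given any $\bar f \in C^\infty(M)$ I would set $f(\phi(y, m)) := \bar f(m)$ in a neighborhood of $p$; this is smooth, constant on the local orbits through $M$, hence locally $\mathrm{Ad}$-invariant, and satisfies $f|_M = \bar f$. The required identity therefore determines $\gamma(D) \bar f$ pointwise. Uniqueness then forces the locally constructed operators to agree on overlapping charts, yielding a single differential operator $\gamma(D)$ defined on all of $M$.

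The main technical obstacle is securing the slice chart: one has to verify that the chosen complement $\mathfrak{a}$ makes the differential of $\phi$ an isomorphism, which is exactly where transversality enters, combined with the observation that $X \mapsto [X, p]$ is injective on $\mathfrak{a}$ because $\mathfrak{a}$ is complementary to the stabilizer. Everything afterward—the change of variables, the restriction, and the extension construction used to prove uniqueness—is routine bookkeeping, and the global well-definedness of $\gamma(D)$ drops out of uniqueness for free, without requiring a separate compatibility computation on chart overlaps.
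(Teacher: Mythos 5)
The paper itself gives no proof of this statement: it is quoted ``without proof'' as a special case of Helgason's theorem on radial parts \cite[ch.~2, Theorem 3.6]{SH}, so there is nothing internal to compare against. Your argument is essentially Helgason's own: build the slice chart $\phi(X,m)=\mathrm{Ad}_{\exp X}m$ with $\mathfrak{a}$ a complement of the centralizer of $p$, check via transversality that $d\phi_{(0,p)}$ is an isomorphism, push $D$ through the chart, observe that locally invariant functions depend only on the $M$-variable so the transversal derivatives drop out, and get uniqueness by extending an arbitrary $\bar f$ to be constant on the plaques. The construction and the existence half are correct as written. The one step you wave at too quickly is the claim that the plaque-constant extension $f(\phi(y,m)):=\bar f(m)$ is locally $\mathrm{Ad}$-invariant: for $x=\phi(y_0,m_0)$ and $g$ near the identity you must show $\mathrm{Ad}_g x$ lands back in the \emph{same} plaque $\phi(U_{\mathfrak a}\times\{m_0\})$, not merely in the orbit $\mathcal{O}_{m_0}$ (which could re-enter the chart through a different plaque carrying a different value of $\bar f$). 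This is where you need that $\phi(U_{\mathfrak a}\times\{m_0\})$ is open in $\mathcal{O}_{m_0}$ (it is, by the dimension count) together with the fact that orbits of the compact group $G$ are embedded, so that $\mathrm{Ad}_g x\to x$ in the orbit topology as $g\to\mathrm{id}_G$; one line, but it is the only place the argument could genuinely fail for a non-embedded orbit. Two further small remarks: the theorem's test functions are globally defined on $\mathfrak{g}$, so strictly you should either read the statement locally (as Helgason does) or cut off your extension, invoking locality of differential operators; and your dimension count tacitly uses that transversality at every point of $M$ forces $\dim\mathcal{O}_p$ to be constant along $M$, which is worth saying since $\mathfrak{a}$ is chosen pointwise. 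With those points patched, the proof is complete and is the right argument to supply for the citation.
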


The differential operator $\gamma(D)$ is called the {\it radial part of $D$ with transversal manifold $M$}.  The sets $$\tot' := \{ x \in \tot \ | \ \Delta_\gog(x) \not = 0 \}, \qquad \tot_\C' := \{ x \in \tot_\C \ | \ \Delta_\gog(x) \not = 0 \}$$ are called the {\it regular elements} of $\tot$ and $\tot_\C$ respectively.  As a submanifold of $\tot$, $\tot'$ is both dense in $\tot$ and transverse to the adjoint orbits.  To see that transversality holds, consider the root space decomposition of $\mathfrak{g}_\C$, $$\mathfrak{g}_\C = \tot_\C \oplus \bigoplus_\alpha \mathfrak{g}_\alpha,$$ where $\alpha$ runs over the roots of $\mathfrak{g}_\C$ with respect to $\tot_\C$.  Under the usual identification $T_x \mathfrak{g} \cong \mathfrak{g}$, at $x \in \tot'$ we have $$T_x \tot' \cong \tot, \quad T_x \mathcal{O}_x \cong [ \mathfrak{g}, x] = \mathfrak{g} \cap \bigoplus_\alpha \mathfrak{g}_\alpha,$$ which gives the transversality property.  Thus for each $D \in \mathfrak{D}(\mathfrak{g})$ we have a well-defined operator $\gamma(D) \in \mathfrak{D}(\tot')$, the radial part of $D$ with transversal manifold $\tot'$.  Moreover, we have the following fact \cite[Lemma 7]{HC}:

\begin{lemma} \label{lem:rp_hom}
The map $D \mapsto \gamma(D)$ is a homomorphism from $\mathcal{I}'$ to $\mathfrak{D}(\tot')$.
\end{lemma}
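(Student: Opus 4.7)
The plan is to verify the two defining properties of a homomorphism — linearity and multiplicativity — separately, in each case using the uniqueness clause of Theorem \ref{thm:radial-part} to reduce the identity to a check on locally Ad-invariant functions. Note first that $\mathcal{I}'$ is actually closed under composition: the $G$-action on $\mathcal{D}(\mathfrak{g})$ is by algebra automorphisms (as the paper has just noted), so its fixed-point set is a subalgebra. Hence $D_1 D_2 \in \mathcal{I}'$ whenever $D_1, D_2 \in \mathcal{I}'$, and $\gamma(D_1 D_2)$ makes sense.

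Linearity is immediate from the defining equation $\overline{Df} = \gamma(D)\bar f$: for any locally Ad-invariant $f$ and scalars $a, b$, $\overline{(aD_1+bD_2)f} = a\gamma(D_1)\bar f + b\gamma(D_2)\bar f$, and uniqueness of the radial part forces $\gamma(aD_1+bD_2) = a\gamma(D_1)+b\gamma(D_2)$.

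For multiplicativity, the natural strategy is the chain $\overline{(D_1 D_2)f} = \overline{D_1(D_2 f)} = \gamma(D_1)\overline{D_2 f} = \gamma(D_1)\gamma(D_2)\bar f$, but the middle two equalities apply the defining property of $\gamma$ to $D_2 f$ rather than $f$, so they are legitimate only if $D_2 f$ is itself locally Ad-invariant. The key step is therefore to verify that $\mathcal{I}'$ preserves the class of locally Ad-invariant functions. Unwinding the $G$-action on $\mathcal{D}(\mathfrak{g})$ described in the preceding subsection, $G$-invariance of $D_2$ is equivalent to the commutation $(g\cdot)\circ D_2 = D_2 \circ (g\cdot)$ on $C^\infty(\mathfrak{g})$, where $(g\cdot f)(x) = f(\mathrm{Ad}_{g^{-1}} x)$. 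Hence if $g\cdot f = f$ for $g$ in some neighborhood of $\mathrm{id}_G$, then $g\cdot(D_2 f) = D_2(g\cdot f) = D_2 f$ for the same $g$, so $D_2 f$ is locally Ad-invariant. Combining the displayed chain with $\overline{(D_1 D_2)f} = \gamma(D_1 D_2)\bar f$ and invoking uniqueness once more yields $\gamma(D_1 D_2) = \gamma(D_1)\gamma(D_2)$.

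The only subtlety is the bookkeeping check that $G$-invariance of an element of $\mathcal{D}(\mathfrak{g})$, as defined through the action on differential operators, really is equivalent to commuting with the pullback action on functions; this is a short calculation using that an element $g$ acts on $p \cdot q(\partial)$ by sending it to $(p \circ \mathrm{Ad}_{g^{-1}}) \cdot (q \circ \mathrm{Ad}_g)(\partial)$, which is exactly the conjugation of $p\cdot q(\partial)$ by the pullback $f \mapsto f \circ \mathrm{Ad}_{g^{-1}}$. Once this is in hand, no further analytic input is needed, and the proof is complete.
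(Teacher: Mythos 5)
Your proof is correct, and its central idea is the same as the paper's: everything turns on the observation that a $G$-invariant operator $D_2$ sends locally $\mathrm{Ad}$-invariant functions to locally $\mathrm{Ad}$-invariant functions (because invariance of $D_2$ is exactly commutation with the pullback $f \mapsto f \circ \mathrm{Ad}_{g^{-1}}$, the action on $\mathcal{D}(\mathfrak{g})$ being defined by conjugation), which licenses the chain $\overline{D_1 D_2 f} = \gamma(D_1)\overline{D_2 f} = \gamma(D_1)\gamma(D_2)\bar f$. The one structural difference is how the argument is closed. You conclude $\gamma(D_1 D_2) = \gamma(D_1) \circ \gamma(D_2)$ by invoking the uniqueness clause of Theorem \ref{thm:radial-part}; that is legitimate as the theorem is quoted, but it silently relies on the restrictions $\bar f$ of locally $\mathrm{Ad}$-invariant $f$ being plentiful enough near each point of $\mathfrak{h}'$ to separate differential operators. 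The paper supplies that ingredient by hand rather than by citation: using the local product structure $\phi\colon VT \times U \to \mathfrak{g}$ around a regular point, it builds, for every $\psi \in C^\infty(U)$, a locally $\mathrm{Ad}$-invariant $f_\psi$ with $\bar f_\psi = \psi$, and then tests $\gamma(D_1 D_2)$ against $\gamma(D_1)\gamma(D_2)$ on all such $\psi$. So your route is shorter because it offloads the ``enough test functions'' step onto the black-boxed uniqueness statement, while the paper's route makes that step explicit; the key verification that $\mathcal{I}'$ preserves local $\mathrm{Ad}$-invariance is common to both, and you have it.
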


\begin{proof}
We follow the proof in \cite{HC}.  We first note that $\gamma$ is clearly linear from its definition.  Let $x \in \tot'$ and let $T \subset G$ be the maximal torus in $G$ with Lie algebra $\tot$.  Let $g \mapsto gT$ denote the quotient map $G \to G/T$.  Let $U \subset \tot'$, $V \subset G$ be open connected neighborhoods of $x$ and $\mathrm{id}_G$ respectively, and let $VT$ be the image of $V$ under the quotient map.  Define the function $\phi: VT \times U \to \mathfrak{g}$ by $\phi(gT, x) = \mathrm{Ad}_g x$.  Then $\phi$ is a submersion, and since $\dim(VT \times U) = \dim \mathfrak{g}$, $N := \phi(VT \times U)$ is an open submanifold of $\mathfrak{g}$.  If $V$ and $U$ are taken to be sufficiently small, then $\phi$ is bijective and defines an analytic isomorphism of $VT \times U$ onto $N$.  For $\psi \in C^\infty(U)$, define $f_\psi \in C^\infty(N)$ by $f_\psi(\phi(gT, x)) = \psi(x)$.  Then $f_\psi$ is locally $\mathrm{Ad}$-invariant.  Let $D_1, D_2 \in \mathcal{I}'$.  We have $$\overline{D_1 D_2 f_\psi} = \gamma(D_1 D_2) \bar f_\psi = \gamma(D_1 D_2) \psi.$$ On the other hand, since $D_2$ is $\mathrm{Ad}$-invariant, $D_2 f_\psi$ must be locally $\mathrm{Ad}$-invariant, so $$\overline{D_1 D_2 f_\psi} = \gamma(D_1) \overline{(D_2 f_\psi)},$$ and $\overline{(D_2 f_\psi)} = \gamma(D_2) \bar f_\psi = \gamma(D_2) \psi,$ so that $$\gamma(D_1 D_2) \psi =  \gamma(D_1) \gamma(D_2) \psi.$$ Since $\psi \in C^\infty(U)$ was arbitrary, $\gamma$ is a homomorphism.
\end{proof}

\subsubsection{The $\delta$ homomorphism}
We can now state the first major theorem proved in \cite{HC}, which Harish-Chandra calls ``the central result of this paper'' and which lies in the background of most of its other results.  The theorem gives a meaningful sense to the idea of ``restriction to $\tot$'' for invariant differential operators on $\mathfrak{g}$, rather than merely invariant polynomials.  Further, it relates the restrictions of these operators to their radial parts. Indeed, Harish-Chandra finds that Chevalley's isomorphism $I(\mathfrak{g}) \to I(\tot)$ extends uniquely to a homomorphism $\mathcal{I}(\mathfrak{g}) \to \mathcal{I}(\tot)$:

\begin{theorem} \label{thm:central} There exists a unique homomorphism $\delta: \mathcal{I}(\mathfrak{g}) \to \mathcal{I}(\tot)$ such that $\delta(p) = \bar p$ and $\delta(p(\partial)) = \bar p (\partial)$ for all $p \in I(\mathfrak{g})$. Moreover, on $\tot'$, we have \begin{equation} \label{eqn:radconj} \gamma(D) = \Delta_\gog^{-1} \delta(D) \circ \Delta_\gog \end{equation} for all $D \in \mathcal{I}(\mathfrak{g})$, where the circle indicates composition of operators. \end{theorem}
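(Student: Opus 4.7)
Uniqueness is immediate. By definition, $\mathcal{I}(\mathfrak{g})$ is the subalgebra of $\mathcal{I}'$ generated by $I(\mathfrak{g}) \cup \partial I(\mathfrak{g})$, and any homomorphism $\delta$ is determined by the prescribed values $\delta(p) = \bar p$, $\delta(p(\partial)) = \bar p(\partial)$ on this generating set.

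For existence, my plan is to construct $\delta$ directly from the radial-part homomorphism $\gamma$ of Lemma \ref{lem:rp_hom}, which will make equation (\ref{eqn:radconj}) true by definition. Set $\tilde\delta(D) := \Pi \circ \gamma(D) \circ \Pi^{-1}$ for $D \in \mathcal{I}'$. Since $\gamma$ is a homomorphism and conjugation by the nowhere-vanishing function $\Pi$ on $\mathfrak{h}'$ is an inner automorphism of $\mathfrak{D}(\mathfrak{h}')$, $\tilde\delta$ is itself a homomorphism. For a multiplication operator $p \in I(\mathfrak{g})$ one has $\gamma(p) = \bar p$ trivially (restriction commutes with multiplication), and multiplication operators commute, so $\tilde\delta(p) = \bar p$. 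If I can also show $\tilde\delta(p(\partial)) = \bar p(\partial)$ for every $p \in I(\mathfrak{g})$, then $\delta := \tilde\delta|_{\mathcal{I}(\mathfrak{g})}$ will take values in the subalgebra generated by $\{\bar p, \bar p(\partial) : p \in I(\mathfrak{g})\}$, which lies inside $\mathcal{I}(\mathfrak{h})$ since both generating sets are $W$-invariant by Chevalley (Theorem \ref{thm:chev}).

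The crux is therefore the identity $\gamma(p(\partial)) = \Pi^{-1} \circ \bar p(\partial) \circ \Pi$ on $\mathfrak{h}'$ for $p \in I(\mathfrak{g})$, equivalently $\Pi \cdot \overline{p(\partial) f} = \bar p(\partial)(\Pi \bar f)$ for every locally Ad-invariant $f$. As a warm-up, the case $p(x) = \tfrac12 \langle x, x\rangle$ is the classical formula for the radial part of the Killing-form Laplacian: direct computation gives $\gamma(\Delta_{\mathfrak{g}}) = \Delta_{\mathfrak{h}} + 2\sum_{\alpha > 0} \alpha^{-1} \partial_{H_\alpha}$, while Leibniz applied to $\Pi^{-1} \Delta_{\mathfrak{h}} \Pi$ yields $\Delta_{\mathfrak{h}} + 2\,\Pi^{-1}\nabla\Pi \cdot \nabla + \Pi^{-1}(\Delta_{\mathfrak{h}}\Pi)$, and these agree because $\Pi^{-1}\nabla\Pi = \sum_{\alpha > 0}\alpha^{-1} H_\alpha$ (logarithmic derivative of a product of linear forms) while the zeroth-order term $\Pi^{-1}(\Delta_{\mathfrak{h}}\Pi)$ vanishes by the $W$-harmonicity of $\Pi$ (Proposition \ref{prop:pi_harmonic}).

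Extending to arbitrary $p \in I(\mathfrak{g})$ is the main obstacle. My approach would exploit that both $p \mapsto \tilde\delta(p(\partial))$ and $p \mapsto \bar p(\partial)$ are algebra homomorphisms from $I(\mathfrak{g})$ into $\mathfrak{D}(\mathfrak{h}')$, so by Chevalley--Shephard--Todd it suffices to verify the equality on a system of $N$ basic invariant generators. A principal-symbol calculation already matches the leading terms: the symbol of $p(\partial)$ at any point of $\mathfrak{g}$ is $p$ itself, whose restriction to $\mathfrak{h}^*$ is $\bar p$, and conjugation by $\Pi$ preserves principal symbols; hence $R_p := \tilde\delta(p(\partial)) - \bar p(\partial)$ is a $W$-invariant differential operator on $\mathfrak{h}'$ of order strictly less than $\deg p$. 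To kill $R_p$ I would, for each basic invariant, work in coordinates adapted to the root-space decomposition $\mathfrak{g} = \mathfrak{h} \oplus \bigoplus_\alpha \mathfrak{g}_\alpha$ near a regular point $h \in \mathfrak{h}'$, expand $p(\partial)$ into its $\mathfrak{h}$-tangential and orbit-tangential pieces, and show that the off-diagonal contributions are absorbed exactly by conjugation by $\Pi$. As in the Laplacian case, the recurring mechanism forcing the successive lower-order cancellations is $W$-harmonicity of $\Pi$. Once $\tilde\delta(p(\partial)) = \bar p(\partial)$ is established, $\delta$ satisfies all requirements, and (\ref{eqn:radconj}) is just the definition of $\tilde\delta$ rewritten.
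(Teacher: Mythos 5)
Your overall architecture is sound and matches what is actually needed: defining $\tilde\delta(D) = \Pi\circ\gamma(D)\circ\Pi^{-1}$ makes uniqueness and the relation (\ref{eqn:radconj}) essentially formal, and correctly isolates the crux as the identity $\gamma(p(\partial)) = \Pi^{-1}\bar p(\partial)\circ\Pi$ for $p\in I(\mathfrak{g})$. (This is Theorem \ref{thm:sh5.33}; note that the paper itself states Theorem \ref{thm:central} with a citation to \cite{HC} and proves in full only this special case, which is all the integral formula requires.) Your treatment of the Laplacian, via the logarithmic derivative of $\Pi$ and Proposition \ref{prop:pi_harmonic}, is also the standard and correct computation.

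The gap is in the passage from $\omega$ to a general $p\in I(\mathfrak{g})$. The reduction to basic invariants via Chevalley--Shephard--Todd is legitimate (both $p\mapsto\tilde\delta(p(\partial))$ and $p\mapsto\bar p(\partial)$ are algebra homomorphisms on $I(\mathfrak{g})$, since constant-coefficient operators compose multiplicatively), but it does not make the problem easier: the basic invariants of degree at least $3$ still demand a genuine argument, and your proposed coordinate computation ``absorbing the off-diagonal contributions by conjugation by $\Pi$'' is exactly the hard, unexecuted step. The principal-symbol observation only shows that the discrepancy $R_p$ has order strictly less than $\deg p$, and nothing excludes a nonzero lower-order $W$-invariant operator; moreover the claim that $W$-harmonicity of $\Pi$ alone forces all the lower-order cancellations is an assertion, not a proof --- for the Laplacian it kills precisely one term (the zeroth-order one), and no analogue is exhibited for higher-degree invariants. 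The device that actually closes this gap (Lemma \ref{lem:mu_powers} together with Proposition \ref{prop:comm_id_rp}) is to write $m!\,p(\partial) = \mu^m(p)$ as an iterated commutator with $\omega(\partial)$: since $\gamma$ intertwines $\mu$ with $\bar\mu$, and $\gamma(\omega(\partial)) = \Pi^{-1}\bar\omega(\partial)\circ\Pi$ is a conjugate of $\bar\omega(\partial)$, the identity $d_{c^{-1}ac}^m(b) = c^{-1}d_a^m(b)\,c$ transports the whole computation back to constant-coefficient operators and yields $\gamma(p(\partial)) = \Pi^{-1}\bar p(\partial)\circ\Pi$ with no case analysis. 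You should either supply this commutator argument or actually carry out the radial-part computation for each basic invariant, which your proposal currently only gestures at.
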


The bulk of Section 3 of \cite{HC} is devoted to a concrete construction of the homomorphism $\delta$ and to showing the relation (\ref{eqn:radconj}) between $\delta$ and the radial part map.  The details of Harish-Chandra's construction are beyond the scope of this paper, since we don't actually need the full power of the $\delta$ homomorphism to prove the integral formula.  Instead, it will suffice to understand the relationship between $\gamma(p(\partial))$ and $\bar p(\partial)$ for $p \in I(\mathfrak{g})$, as described in the following theorem \cite[Lemma 8]{HC}.

\begin{theorem} \label{thm:sh5.33}
For $p \in I(\mathfrak{g})$, $\gamma(p(\partial)) = \Delta_\gog^{-1} \bar p(\partial) \circ \Delta_\gog$.
\end{theorem}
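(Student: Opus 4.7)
The plan is to prove the equivalent identity
\[\Pi \cdot \overline{p(\partial) f} \;=\; \bar p(\partial)\bigl(\Pi \bar f\bigr)\]
for every locally $\mathrm{Ad}$-invariant $f$ defined in a neighborhood of $\mathfrak{h}'$. Since $\Pi$ is nonvanishing on $\mathfrak{h}'$, dividing through and invoking the uniqueness of the radial part from Theorem~\ref{thm:radial-part} then gives the claimed formula for $\gamma(p(\partial))$.

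Fix $h_0 \in \mathfrak{h}'$. The main geometric device is the orbit chart $\Phi: \mathfrak{p} \times U \to \mathfrak{g}$, $\Phi(\xi, h) = \mathrm{Ad}_{\exp \xi} h$, where $\mathfrak{p} = \bigoplus_\alpha \mathfrak{g}_\alpha$ and $U \subset \mathfrak{h}'$ is a small open neighborhood of $h_0$. Because $\mathrm{ad}_h$ acts on each $\mathfrak{g}_\alpha$ by the nonzero scalar $-\alpha(h)$, the differential $d\Phi|_{(0,h)}: (X, H) \mapsto H + [X,h]$ is a linear isomorphism from $\mathfrak{p} \oplus \mathfrak{h}$ to $\mathfrak{g}$, so $\Phi$ is a local diffeomorphism near $(0, h_0)$. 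For any $\mathrm{Ad}$-invariant $f$, the pullback $(f \circ \Phi)(\xi, h) = \bar f(h)$ depends only on $h$.

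Next I choose an orthonormal basis $\{H_i\}$ of $\mathfrak{h}$ and a basis $\{E_\alpha\}$ of $\mathfrak{p}$ adapted to the root space decomposition, express $p(\partial)$ in these coordinates, and evaluate at $h \in U$ by Taylor-expanding $f$ in the ambient coordinates. Differentiating the $\mathrm{Ad}$-invariance relation $f(\mathrm{Ad}_{\exp \xi} h) = f(h)$ in $\xi$ at $\xi = 0$ yields a system of algebraic identities that express each mixed orbit-direction partial of $f$ at $h$ as a linear combination of $\mathfrak{h}$-direction partials of $\bar f$ at $h$, with coefficients rational in the roots $\alpha(h)$. Substituting these relations into the expansion of $p(\partial)f|_h$ converts it into a differential operator on $\bar f$. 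In the quadratic case $p(x) = \langle x, x \rangle$, this reproduces the classical radial Laplacian on a reductive Lie algebra: the second-order orbit-direction contributions match exactly the drift term $2(\nabla \log \Pi) \cdot \nabla$ produced by expanding $\Pi^{-1} \bar p(\partial)(\Pi \bar f)$, and the zeroth-order term vanishes by Proposition~\ref{prop:pi_harmonic} because $\Pi$ is $W$-harmonic.

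To extend to arbitrary $p \in I(\mathfrak{g})$, I note that both $p \mapsto \gamma(p(\partial))$ and $p \mapsto \Pi^{-1} \bar p(\partial) \circ \Pi$ are algebra homomorphisms $I(\mathfrak{g}) \to \mathfrak{D}(\mathfrak{h}')$ --- the former by Lemma~\ref{lem:rp_hom} together with $p_1(\partial) \circ p_2(\partial) = (p_1 p_2)(\partial)$, and the latter because conjugation by $\Pi$ is an algebra automorphism of $\mathfrak{D}(\mathfrak{h}')$. Hence it suffices to verify the identity on a finite set of algebra generators of $I(\mathfrak{g})$, which by Chevalley's theorem is isomorphic to a polynomial ring in $N$ variables. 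The main obstacle is precisely this verification on the higher-degree generators: the combinatorial reduction of iterated orbit-direction partials to $\mathfrak{h}$-direction partials via $\mathrm{Ad}$-invariance produces elaborate expressions, and the delicate point is arranging the rational coefficients so that they reassemble into conjugation by $\Pi$ rather than some other $W$-invariant correction. I would tackle this by induction on $\deg p$, using $W$-invariance together with the $W$-harmonicity of $\Pi$ to constrain the possible lower-order correction terms at each step.
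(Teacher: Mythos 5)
Your setup is sound: proving $\Pi\,\overline{p(\partial)f} = \bar p(\partial)(\Pi\bar f)$ for locally $\mathrm{Ad}$-invariant $f$ is indeed equivalent to the claim, your direct computation for the quadratic Casimir $\omega$ via the orbit chart is exactly the base case the paper also takes as its starting point, and your observation that both $p \mapsto \gamma(p(\partial))$ and $p \mapsto \Pi^{-1}\bar p(\partial)\circ\Pi$ are algebra homomorphisms on $I(\mathfrak{g})$ is correct (the former by Lemma~\ref{lem:rp_hom}). But the proof has a genuine gap precisely where you admit it does: reducing to Chevalley generators does not help, because $I(\mathfrak{g})$ has fundamental generators of degrees $2, 3, \dots$ up to the largest exponent of $W$, and you still owe a verification of the identity for each of them. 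Your proposed strategy --- induction on $\deg p$ using $W$-invariance and $W$-harmonicity to ``constrain the possible lower-order correction terms'' --- does not close this. The difference $\gamma(p(\partial)) - \Pi^{-1}\bar p(\partial)\circ\Pi$ is a $W$-invariant differential operator of order lower than $\deg p$, but $W$-invariant operators of lower order form a large space and are not forced to vanish by invariance or harmonicity considerations alone; you would need an actual computation, and the combinatorics of reducing iterated orbit-direction partials for a degree-$d$ invariant with $d > 2$ is exactly the intractable calculation one wants to avoid.

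The missing idea, which is how the paper completes the argument, is to bootstrap from the single degree-$2$ case by taking iterated commutators with the Laplacian rather than by attacking the higher generators directly. Setting $\mu(D) = \tfrac12\{\omega(\partial), D\}$, one shows (Lemma~\ref{lem:mu_powers}) that $\mu^m(p) = m!\,p(\partial)$ for \emph{any} homogeneous $p$ of degree $m$; since $\gamma$ is a homomorphism it intertwines $\mu$ with $\bar\mu(d) = \tfrac12\{\gamma(\omega(\partial)), d\}$, so $m!\,\gamma(p(\partial)) = \bar\mu^m(\bar p)$. Because $\gamma(\omega(\partial)) = \Pi^{-1}\bar\omega(\partial)\circ\Pi$ is itself a conjugate, the elementary commutator identity $d_{c^{-1}ac}^k(b) = c^{-1}d_a^k(b)\,c$ (Proposition~\ref{prop:comm_id_rp}, applicable since the multiplication operators $\Pi$ and $\bar p$ commute) turns $\bar\mu^m(\bar p)$ into $\Pi^{-1}d_{\bar\omega(\partial)}^m(\bar p)\circ\Pi = m!\,\Pi^{-1}\bar p(\partial)\circ\Pi$. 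This makes the only hard analytic input the quadratic case you already have, and replaces the generator-by-generator verification with a purely algebraic identity. Without this (or some substitute for it), your proof is not complete.
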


We will sketch the proof, following the argument of \cite[ch. 2, Theorem 5.33]{SH}.  Let $\omega(x) := \langle x, x \rangle$, the {\it quadratic Casimir polynomial} on $\mathfrak{g}$. Then the Laplacian on $\mathfrak{g}$ is $\omega(\partial)$.  One first shows by a direct calculation\footnotemark \footnotetext{See \cite[ch. 3, Proposition 3.14]{SH} for details.} using the root space decomposition that $\gamma(\omega(\partial)) = \Delta_\gog^{-1} \bar \omega(\partial) \circ \Delta_\gog.$ The main idea of the proof is then to extend this result from $\omega$ to all $p \in I(\mathfrak{g})$ by using a trick of taking commutators with the Laplacian, which we now show in detail.

For differential operators $D_1$ and $D_2$, we write $$\{D_1, D_2 \} := D_1 \circ D_2 - D_2 \circ D_1$$ for their commutator.  Consider the derivations $$\mu: D \mapsto \frac{1}{2} \{\omega(\partial), D \}$$ on $\mathfrak{D}(\mathfrak{g})$ and $$\bar \mu: d \mapsto \frac{1}{2} \{\gamma(\omega(\partial)), d \}$$ on $\mathfrak{D}(\tot')$.  Then $\mu$ has the following property \cite[ch. 2, Lemma 5.34]{SH}.

\begin{lemma} \label{lem:mu_powers}
If $p$ is a homogeneous polynomial on $\mathfrak{g}$ of degree $m$, then $$\mu^m(p) = m! \, p(\partial).$$
\end{lemma}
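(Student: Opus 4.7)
The plan is to exploit the fact that $\mu$ is an inner derivation of the associative algebra $\mathfrak{D}(\mathfrak{g})$ (being half a commutator with a fixed operator) and combine this with a particularly simple description of its action on the generators $x_j$ and $\partial_j$. First I would pick an orthonormal basis $\{e_1,\dots,e_n\}$ of $\mathfrak{g}$ with respect to $\langle\cdot,\cdot\rangle$ and the associated coordinates, so that $\omega(\partial) = \sum_i \partial_i^2$. Using the canonical commutation relation $\{\partial_j, x_k\} = \delta_{jk}$, a one-line calculation gives
$$\mu(x_j) = \tfrac{1}{2}\{\partial_j^2, x_j\} = \partial_j, \qquad \mu(\partial_j) = 0.$$
So $\mu$ sends each coordinate to its own partial derivative and annihilates the partial derivatives themselves.

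By linearity it suffices to prove $\mu^m(x^\beta) = m!\,\partial^\beta$ for every multi-index $\beta$ with $|\beta| = m$, and I would proceed by induction on $m$, the case $m = 0$ being trivial. For the inductive step, pick some $j$ with $\beta_j \ge 1$, write $x^\beta = x_j \cdot q$ with $q = x^{\beta - e_j}$ of degree $m-1$, and apply the noncommutative Leibniz formula
$$\mu^m(x_j q) = \sum_{k=0}^m \binom{m}{k}\mu^k(x_j)\,\mu^{m-k}(q),$$
which follows from $\mu(ab) = \mu(a) b + a \mu(b)$ by the usual binomial induction and is valid in any associative algebra. Since $\mu^k(x_j) = 0$ for $k \ge 2$, only the $k=0,1$ terms survive:
$$\mu^m(x_j q) = x_j\,\mu^m(q) + m\,\partial_j\,\mu^{m-1}(q).$$

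The inductive hypothesis gives $\mu^{m-1}(q) = (m-1)!\,q(\partial)$, so the second term contributes $m!\,\partial_j\,q(\partial) = m!\,(x_j q)(\partial)$, which is exactly what we want. The first term vanishes because $\mu^m(q) = \mu\bigl((m-1)!\,q(\partial)\bigr) = 0$: indeed $q(\partial)$ is a polynomial in the $\partial_i$'s, and $\mu$ acts as a derivation that kills each $\partial_i$.

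I expect the only real bookkeeping hazard to be the noncommutative Leibniz formula: one needs to be comfortable that the binomial expansion of $\mu^m(ab)$ remains valid even though the factors and their $\mu$-images do not commute with one another. This is nevertheless a standard fact about derivations of associative algebras and requires no commutativity between $a$ and $b$, so once it is in hand the collapse of the sum to just two terms, together with the vanishing of $\mu^m(q)$, makes the inductive step essentially automatic.
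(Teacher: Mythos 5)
Your proof is correct and takes essentially the same route as the paper's: induction on the degree, peeling off one linear factor via the Leibniz rule for the derivation $\mu$, with the collapse of the expansion coming from $\mu(x_j)=\partial_j$ and $\mu(\partial_j)=0$. The only cosmetic difference is that you work with monomials $x^\beta$ and spell out the base computation and the binomial Leibniz formula, where the paper writes $p$ as a product of arbitrary linear forms and states the two surviving terms directly.
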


\begin{proof}
The proof proceeds by induction on $m$.  The base case $m=1$ follows from direct calculation.  Let $p = q_1 \hdots q_m$, where each $q_i$ is an arbitrary linear function, and make the inductive hypothesis that $$\mu^{m-1}(q_1 \hdots q_{m-1}) = (m-1)! \, (q_1 \hdots q_{m-1})(\partial).$$  Observing that $\mu^2(q_i) = \mu(q_i(\partial)) = 0$, by the Leibniz rule for derivations we have $$\mu^m(q_1 \hdots q_{m-1} q_m) = \mu^m(q_1 \hdots q_{m-1}) \circ q_m + m \mu^{m-1}(q_1 \hdots q_{m-1}) \circ \mu(q_m).$$  Applying the inductive hypothesis on the right-hand side, we find that the first term vanishes and the second term is exactly $m! \, p(\partial)$.
\end{proof}

We will also need the following commutator identity, which holds in any associative algebra.

\begin{proposition} \label{prop:comm_id_rp}
Let $A$ be an associative algebra. For $a \in A$, define the derivation $d_a: A \to A$ by $d_a(b) = \frac{1}{2}(ab - ba).$  If $c \in A$ is invertible and commutes with $b$, then $$d_{c^{-1}ac}^k(b) = c^{-1}d_a^k (b)c.$$
\end{proposition}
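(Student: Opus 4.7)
The plan is to prove the identity by recognizing that the two derivations $d_a$ and $d_{c^{-1}ac}$ are intertwined by the inner automorphism $\phi: A \to A$ defined by $\phi(x) = c^{-1}xc$, and then exploiting the hypothesis that $c$ commutes with $b$ to dispose of the $\phi^{-1}$ that appears when transporting $b$ through the intertwining.

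More concretely, first I would verify that $\phi$ is an algebra automorphism (immediate) and compute, for any $x \in A$,
\[ \phi(d_a(x)) = \tfrac{1}{2}\bigl(c^{-1}ax c - c^{-1}xa c\bigr) = \tfrac{1}{2}\bigl((c^{-1}ac)(c^{-1}xc) - (c^{-1}xc)(c^{-1}ac)\bigr) = d_{c^{-1}ac}(\phi(x)), \]
so that $\phi \circ d_a = d_{c^{-1}ac} \circ \phi$. Iterating this intertwining relation $k$ times yields $\phi \circ d_a^k = d_{c^{-1}ac}^k \circ \phi$, i.e.\ $d_{c^{-1}ac}^k = \phi \circ d_a^k \circ \phi^{-1}$. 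Evaluating at $b$ and using the hypothesis $cb = bc$, which gives $\phi^{-1}(b) = cbc^{-1} = b$, produces
\[ d_{c^{-1}ac}^k(b) = \phi\bigl(d_a^k(b)\bigr) = c^{-1} d_a^k(b) c, \]
which is the desired identity.

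There is really no hard step here: the commutativity hypothesis $[b,c]=0$ is used exactly once, to ensure that $b$ lies in the fixed point set of $\phi$, after which the conclusion propagates automatically from the fact that conjugation by $c$ is an automorphism of $A$. Equivalently, one could run a direct induction on $k$: the base case $k=0$ is just the identity $b = c^{-1}bc$, and the inductive step amounts to the cancellation $c^{-1}ac \cdot c^{-1}xc - c^{-1}xc \cdot c^{-1}ac = c^{-1}(ax-xa)c$ for $x = d_a^k(b)$, which requires no commutativity of $x$ with $c$. The only thing to watch out for is not to mistakenly assume that $d_a^k(b)$ still commutes with $c$ for $k \geq 1$; the beauty of the argument is precisely that it does not need to.
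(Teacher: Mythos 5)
Your proof is correct, and it takes a genuinely different route from the paper's. The paper proves the proposition by first establishing the explicit expansion $d_a^k(b) = 2^{-k}\sum_{j=0}^k \binom{k}{j}(-1)^j a^{k-j}ba^j$ by induction on $k$ (in effect the binomial theorem for the commuting operators of left and right multiplication by $a$), and then substituting $c^{-1}ac$ for $a$: the powers $(c^{-1}ac)^m = c^{-1}a^m c$ telescope and the hypothesis $cbc^{-1}=b$ absorbs the two factors adjacent to $b$. You instead observe that conjugation $\phi(x) = c^{-1}xc$ intertwines $d_a$ with $d_{c^{-1}ac}$ and that $b$ is fixed by $\phi$; this avoids any explicit formula, and it isolates exactly where the hypothesis $[b,c]=0$ is used (only to place $b$ in the fixed-point set of $\phi$). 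What the paper's route buys is the closed-form expression for $d_a^k(b)$, which is of some independent interest; what yours buys is brevity and a cleaner structural statement that generalizes verbatim to $d_{\psi(a)}^k(\psi(b)) = \psi\bigl(d_a^k(b)\bigr)$ for an arbitrary automorphism $\psi$. Your closing caution --- that one must not assume $d_a^k(b)$ still commutes with $c$ for $k \ge 1$, and that neither your argument nor the direct induction needs this --- is exactly the right point to flag.
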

\begin{proof}
This follows from the observation that $$d_a^k(b) = 2^{-k} \sum_{j=0}^k {k \choose j}(-1)^j a^{k-j}ba^j,$$ which is shown by an easy induction on $k$.
\end{proof}

Now we can complete the proof of Theorem \ref{thm:sh5.33}.

\begin{proof}[Proof of Theorem \ref{thm:sh5.33}]
It suffices to assume that $p$ is homogeneous of degree $m$, because $\gamma$ is linear.  Since $\gamma$ is a homomorphism on $\mathcal{I}(\mathfrak{g})$, for $D \in \mathcal{I}(\mathfrak{g})$ we have $$\gamma(\mu(D)) = \bar \mu(\gamma(D)),$$ and thus by Lemma \ref{lem:mu_powers} we have $$m!\!\ \gamma(p(\partial)) = \gamma(\mu^m(p)) = \bar \mu^m(\gamma(p)) = \bar \mu^m(\bar p).$$  Finally we apply Proposition \ref{prop:comm_id_rp} with $A = \mathfrak{D}(\tot'),$ $a = \bar \omega(\partial)$, $b = \bar p$, and $c = \Delta_\gog$, observing that $c^{-1} a c = \Delta_\gog^{-1} \bar \omega(\partial) \circ \Delta_\gog = \gamma(\omega(\partial)).$  This gives $$m!\!\ \gamma(p(\partial)) = \bar \mu^m(\bar p) = (d_{c^{-1} a c})^m(\bar p) = \Delta_\gog^{-1} d_a^m(\bar p) \circ \Delta_\gog = m!\! \ \Delta_\gog^{-1} \bar p(\partial) \circ \Delta_\gog,$$ which completes the proof.
\end{proof}

\subsection{Harish-Chandra's original proof}
\label{sec:proof}

We turn now to the first proof of the integral formula (\ref{eqn:hc}).  While I have reorganized the presentation and simplified some steps with particular help from the invaluable reference by Helgason \cite{SH}, the arguments in this section are quite close to Harish-Chandra's originals, with some additional explanation and a few modifications to notation and terminology to bring them more in line with contemporary usage.

To recap definitions and assumptions, here $G$ is a compact, connected, semisimple real Lie group of rank $r$ with Lie algebra $\mathfrak{g}$ and normalized Haar measure $dg$, $\tot \subset \mathfrak{g}$ is a Cartan subalgebra, $W$ is the corresponding Weyl group, $\mathfrak{g}_\C$ and $\tot_\C$ are the complexifications of $\mathfrak{g}$ and $\tot$ respectively, and $\langle \cdot, \cdot \rangle$ is an $\mathrm{Ad}$-invariant inner product on $\mathfrak{g}$.

The proof outline goes as follows.  Before proving Theorem \ref{thm:hc} itself, we need a few preliminary lemmas.  The most important of these identifies a space of analytic functions on which all operators in $\partial I(\tot)$ are simultaneously diagonalizable.  After these lemmas, the core argument of the proof proceeds in three steps:
\begin{enumerate}
\item Define a function $\phi_f(x) := \Delta_\gog(x) \int_G e^{\langle \mathrm{Ad}_g x, y \rangle} dg$ on $\tot_\C$.  With the assumptions that $x \in \tot$ and $\Delta_\gog(y) \not = 0$, show that $\phi_f$ is a joint eigenfunction for all operators $q(\partial) \in \partial I(\tot)$, and use the simultaneous diagonalization lemma to write down an explicit formula for $\phi_f$ containing a number of unknown constants indexed by the Weyl group $W$.
\item Use the same explicit formula to write down expressions for $\phi_f(w(x))$, $w \in W$, and average these over $W$ to eliminate all but a single unknown constant.
\item Determine the value of the remaining constant by computing $\Delta_\gog(\partial) \phi_f(x) |_{x=0}$ in two different ways, and setting the resulting expressions equal to each other.  This gives the integral formula for $x \in \tot$ and $\Delta_\gog(y) \not = 0$, and we then use the analyticity of $\phi_f$ to extend the result to all $x, y \in \tot_\C$.
\end{enumerate}

The first lemma that we need relates the algebraic structures of $\Pi(\tot)$ and $I(\tot)$.

\begin{lemma} \label{lem:sym_free_alg_over_i}  There are homogeneous elements $v_1, \hdots, v_{|W|} \in \Pi(\tot)$ such that every $v \in \Pi(\tot)$ can be written uniquely in the form $v = \sum_{i=1}^{|W|} u_i v_i$ where each $u_i \in I(\tot)$. \end{lemma}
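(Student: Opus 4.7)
The plan is to recognize this as a special case of the classical Chevalley theorem: for a finite complex reflection group $W$ acting on a vector space $\mathfrak{h}$, the polynomial ring $\mathrm{Sym}(\mathfrak{h})$ is free of rank $|W|$ over its invariant subring $I(\mathfrak{h})$. The Weyl group qualifies, since it acts on $\mathfrak{h}$ as a finite group generated by reflections. First I would invoke the other half of Chevalley's theorem, namely that $I(\mathfrak{h}) = \mathbb{C}[f_1, \ldots, f_N]$ is itself a graded polynomial algebra on $N$ algebraically independent homogeneous generators of some degrees $d_1, \ldots, d_N$. From this it follows that $\mathrm{Sym}(\mathfrak{h})$ is integral over $I(\mathfrak{h})$, since every $x \in \mathfrak{h}$ satisfies the monic polynomial $\prod_{w \in W}(T - w(x))$ whose coefficients are $W$-invariant; hence $\mathrm{Sym}(\mathfrak{h})$ is a finitely generated $I(\mathfrak{h})$-module.

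Next I would form the coinvariant algebra $A = \mathrm{Sym}(\mathfrak{h})/(I(\mathfrak{h})^+ \cdot \mathrm{Sym}(\mathfrak{h}))$, where $I(\mathfrak{h})^+$ is the ideal of positive-degree invariants, and select homogeneous elements $v_1, \ldots, v_k \in \mathrm{Sym}(\mathfrak{h})$ whose images form a graded $\mathbb{C}$-basis of $A$. By the graded Nakayama lemma, applied with $I(\mathfrak{h})^+$ playing the role of the graded maximal ideal, these $v_i$ generate $\mathrm{Sym}(\mathfrak{h})$ as an $I(\mathfrak{h})$-module. To count them I would compute Hilbert series: $\mathrm{Sym}(\mathfrak{h})$ has Hilbert series $(1-t)^{-N}$ and $I(\mathfrak{h})$ has $\prod_i (1-t^{d_i})^{-1}$, so $A$ has Hilbert series $\prod_i (1 + t + \cdots + t^{d_i-1})$; evaluating at $t = 1$ yields $\dim_\mathbb{C} A = \prod_i d_i$, which equals $|W|$ by the classical product-of-degrees formula for finite reflection groups.

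For linear independence of the $v_i$ over $I(\mathfrak{h})$, I would pass to fields of fractions. A standard clearing-denominators argument shows that $\mathrm{Frac}(\mathrm{Sym}(\mathfrak{h}))^W = \mathrm{Frac}(I(\mathfrak{h}))$, and since $W$ acts faithfully on $\mathrm{Frac}(\mathrm{Sym}(\mathfrak{h}))$, Artin's theorem makes this extension Galois with group $W$, hence of degree exactly $|W|$. The $|W|$ generators $v_i$ must therefore also be $\mathrm{Frac}(I(\mathfrak{h}))$-linearly independent, which a fortiori forces $I(\mathfrak{h})$-linear independence back in $\mathrm{Sym}(\mathfrak{h})$. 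The main obstacle I anticipate is the classical product formula $\prod_i d_i = |W|$, which is by no means apparent from the definitions but is a standard consequence of the structure theory of finite reflection groups; once this and Chevalley's theorem on the polynomial structure of $I(\mathfrak{h})$ are granted, the rest of the argument is essentially formal.
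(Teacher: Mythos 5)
Your proposal is correct and takes essentially the same route as the paper: both lift a homogeneous $\mathbb{C}$-basis of the coinvariant algebra $\mathrm{Sym}(\mathfrak{h})/\bigl(J_+\mathrm{Sym}(\mathfrak{h})\bigr)$ to an $I(\mathfrak{h})$-module generating set (your graded Nakayama step is the paper's induction on powers of $J_+$ plus the degree argument, in disguise), and both deduce uniqueness from the fact that $\mathrm{Frac}(\mathrm{Sym}(\mathfrak{h}))$ has degree $|W|$ over $\mathrm{Frac}(I(\mathfrak{h}))$, so that $|W|$ spanning elements must be independent. The only divergence is in how the count $|W|$ is obtained: the paper cites Chevalley's theorem directly for $\dim_{\mathbb{C}}\mathrm{Sym}(\mathfrak{h})/J_+\mathrm{Sym}(\mathfrak{h}) = |W|$, whereas you rederive it from the polynomial structure of $I(\mathfrak{h})$ together with the product formula $\prod_i d_i = |W|$ --- a valid alternative, though your Hilbert-series computation tacitly uses that the basic invariants form a regular sequence in $\mathrm{Sym}(\mathfrak{h})$.
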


In other words, $\Pi(\tot)$ is a free module of rank $|W|$ over $I(\tot)$.

\begin{proof}[Proof]
Let $J_+$ be the ideal in $I(\tot)$ generated by elements of positive degree.  Since $I(\tot)$ is a subalgebra of $\Pi(\tot)$, we can consider the ideal $J_+\Pi(\tot)$ generated by $J_+$ in the larger algebra $\Pi(\tot)$.  By a theorem of Chevalley \cite{ChevalleyInvar}, the quotient $\Pi(\tot)/J_+\Pi(\tot)$ is a complex vector space of dimension $|W|$, so we can choose $v_1, \hdots, v_{|W|} \in \Pi(\tot)$ such that $$\Pi(\tot) = \mathrm{span}\{v_1, \hdots, v_{|W|}\} + J_+\Pi(\tot).$$ Then an easy induction shows that in fact for any $m \ge 1$, $$\Pi(\tot) = \mathrm{span}\{v_1, \hdots, v_{|W|}\} \cdot I(\tot) + J_+^m \Pi(\tot).$$  If $p \in I(\tot)$ and $(p)_d$ is its degree $d$ homogeneous component, then $(p)_d \in I(\tot)$ as well, so that we may take the elements $v_i$ to be homogeneous.  Moreover, since $\Pi(\tot)$ contains no zero-divisors we can consider its field of fractions $\mathrm{Frac}(\Pi(\tot))$, and this is a field extension of degree $|W|$ over $\mathrm{Frac}(I(\tot))$.

Now let $v \in \Pi(\tot)$ be homogeneous of degree $d$.  We will show that $v$ can be written uniquely in the form stated in the lemma.  Let $d_i := \deg v_i$ and choose $m > d$ and $u_1', \hdots, u_{|W|}' \in I(\tot)$ such that $v - \sum_{i = 1}^{|W|} u'_i v_i \in J_+^m \Pi(\tot)$.  Let $u_i := (u'_i)_{d- d_i}.$ Then, since $$\bigg( v - \sum_{i = 1}^{|W|} u'_i v_i \bigg)_d = 0$$ by construction and $v = (v)_d$ is homogeneous, we must have $$v - \sum_{i = 1}^{|W|} u_i v_i = 0.$$  Thus we can conclude that $$\Pi(\tot) = \mathrm{span}\{v_1, \hdots, v_{|W|}\} \cdot I(\tot),$$ whereby it follows that $\mathrm{Frac}(\Pi(\tot))$ is spanned over $\mathrm{Frac}(I(\tot))$ by the elements $v_i$.  But since $\mathrm{Frac}(\Pi(\tot))$ is a degree $|W|$ extension of $\mathrm{Frac}(I(\tot))$, the elements $v_i$ must therefore be linearly independent, showing that the decomposition $v = \sum_{i=1}^{|W|} u_i v_i$ is unique.
\end{proof}

The purely algebraic statement of Lemma \ref{lem:sym_free_alg_over_i} will be our main tool in proving the next lemma, which is the aforementioned simultaneous diagonalization result for $\partial I (\tot)$. We pose an infinite system of eigenvalue problems for {\it each point} in $\tot'$, identify a family of analytic functions that solve all of them simultaneously, and then show that this family of solutions is exhaustive up to the assumption of analyticity.  This result has an interesting physical interpretation in the language of quantum integrable systems, where differential operators on $\tot$ can be viewed as quantum Hamiltonians; see \cite[chs. 4--5]{EtingofCM} for details on this topic.

\begin{lemma} \label{lem:diffeqs}
Let $U$ be a nonempty connected open subset of $\tot_\C$.  Let $x_0 \in \tot_\C$ such that $\Delta_\gog(x_0) \not = 0$.  Suppose $\phi$ is an analytic function on $U$ satisfying the system of differential equations \begin{equation} \label{eqn:diffeqs} q(\partial) \phi = q(x_0) \phi, \quad \forall \, q \in I(\tot). \end{equation}  Then there exist constants $c_w \in \mathbb{C}$, $w \in W,$ such that for all $x \in U$, $$\phi(x) = \sum_{w \in W} c_w e^{\langle x, w(x_0) \rangle}.$$  Moreover, for any such $\phi$, the constants $c_w$ are uniquely determined.
\end{lemma}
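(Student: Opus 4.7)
The plan is to show that the space of analytic solutions to (\ref{eqn:diffeqs}) on $U$ is a vector space of dimension exactly $|W|$, with the $|W|$ exponentials $e_w(h) := e^{\langle h, w(h_0) \rangle}$ forming a basis. Existence and uniqueness of the $c_w$ then follow immediately from linear algebra.

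First I would check that each $e_w$ is a solution. A direct calculation gives $q(\partial) e_w = q(w(h_0))\, e_w$, and since $q \in I(\mathfrak{h})$ is $W$-invariant this equals $q(h_0)\, e_w$. Because $\Pi(h_0) \neq 0$, the element $h_0$ is regular and so no nontrivial $w \in W$ fixes it; thus the vectors $w(h_0)$ are pairwise distinct, and the $e_w$ are linearly independent exponentials on any nonempty open subset of $\mathfrak{h}$.

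The heart of the proof is the reverse bound: the solution space has dimension at most $|W|$. Here I would use Lemma \ref{lem:sym_free_alg_over_i}, which provides homogeneous $v_1, \ldots, v_{|W|} \in \mathrm{Sym}(\mathfrak{h})$ that freely generate $\mathrm{Sym}(\mathfrak{h})$ as a module over $I(\mathfrak{h})$. The key preliminary observation is that if $\phi$ solves (\ref{eqn:diffeqs}) then so does $v(\partial)\phi$ for any $v \in \mathrm{Sym}(\mathfrak{h})$, because constant-coefficient operators commute. Given any $p \in \mathrm{Sym}(\mathfrak{h})$, write $p = \sum_i u_i v_i$ with $u_i \in I(\mathfrak{h})$ and expand
\[
p(\partial)\phi = \sum_i u_i(\partial)\bigl(v_i(\partial)\phi\bigr) = \sum_i u_i(h_0)\, v_i(\partial)\phi,
\]
where the second equality applies (\ref{eqn:diffeqs}) to the solution $v_i(\partial)\phi$. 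Fixing any $h_1 \in U$ and evaluating, this shows that every Taylor coefficient $p(\partial)\phi(h_1)$ of $\phi$ at $h_1$ is a $\mathbb{C}$-linear combination of the $|W|$ scalars $(v_i(\partial)\phi)(h_1)$. Analyticity and connectedness of $U$ then force $\phi$ to be determined globally by these $|W|$ values, so the solution space has dimension at most $|W|$.

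Combining the two bounds gives a basis of size $|W|$, and the $c_w$ are uniquely determined by linear independence. The main obstacle, and the place where Lemma \ref{lem:sym_free_alg_over_i} really bites, is the dimension estimate: one must carefully apply the eigenvalue equation not to $\phi$ itself but to the auxiliary solutions $v_i(\partial)\phi$, and it is only this extra flexibility — together with the freeness of the $v_i$ over $I(\mathfrak{h})$ — that converts an infinite system of constraints on the Taylor series into a finite-dimensional one.
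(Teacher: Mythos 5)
Your proposal is correct and follows essentially the same route as the paper: verify the $e_w$ are independent solutions, then use Lemma \ref{lem:sym_free_alg_over_i} together with commutativity of constant-coefficient operators to show that all Taylor coefficients of a solution at a point are determined by the $|W|$ values $(v_i(\partial)\phi)(h_1)$, bounding the solution space by $|W|$. The paper phrases this last step as a proof by contradiction (a nonzero solution killed by all $v_i(\partial)$ at $h_1$ would have to vanish identically), which is just the contrapositive of your injectivity statement.
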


In other words, the functions $e^{\langle x, w(x_0) \rangle}$ for $w \in W$ form a basis of the complex vector space of analytic solutions to the linear system (\ref{eqn:diffeqs}).

\begin{proof}[Proof]
First we note that $\Delta_\gog(x_0) \ne 0$ implies that the points $w(x_0)$, $w \in W$ are all distinct, so that the analytic functions $$\phi_w(x) = e^{\langle x, w(x_0)\rangle}, \quad w \in W$$ are linearly independent on $U$.\footnote{These are well-known facts, but see e.g.~\cite[Lemma 4]{HCe}, \cite[Lemma 41]{HCb} for detailed proofs.}

If we identify a point $y \in \tot_\C$ with the linear functional $\langle y, \cdot \rangle$ on $\tot$, then $y(\partial) \phi_w(x) = \langle y, w(x_0) \rangle \phi_w(x)$. Writing any $q \in \Pi(\tot)$ in terms of such linear functionals, we find that $q(\partial) \phi_w = q(w(x_0))\phi_w$ for $q \in \Pi(\tot)$, and in particular for $q \in I(\tot)$.  In other words, each $\phi_w$ solves (\ref{eqn:diffeqs}).

Let $E$ be the vector space over $\mathbb{C}$ consisting of all analytic solutions to (\ref{eqn:diffeqs}).  We know already that $\dim E \ge |W|$ since the $\phi_w$ are linearly independent.  To show that the $\phi_w$ form a basis for $E$, it is therefore sufficient to show that assuming $\dim E > |W|$ leads to a contradiction.

Choose a point $x_1 \in U$ and let $v_1, \hdots, v_{|W|}$ be as in Lemma \ref{lem:sym_free_alg_over_i}.  If $\dim E > |W|$, we can choose $\psi \not = 0$ in $E$ satisfying the $|W|$ linear conditions $$v_i(\partial) \psi(x) \big |_{x = x_1} = 0, \quad 1 \le i \le |W|.$$  But this is impossible: by Lemma \ref{lem:sym_free_alg_over_i}, for any $v \in \Pi(\tot)$ we can write $v = \sum_{i=1}^{|W|} u_i v_i$ with $u_i \in I(\tot)$, and since $\psi$ solves (\ref{eqn:diffeqs}), we have $$v(\partial) \psi(x) \big |_{x = x_1} = \sum_{i = 1}^{|W|} u_i(x_0) v_i(\partial) \psi(x) \big |_{x = x_1} = 0.$$  In other words all derivatives of $\psi$ vanish at $x_1$, and since $\psi$ is analytic it must therefore be identically zero, which contradicts our assumption.
\end{proof}

Now, for any $f \in C^\infty(\mathfrak{g})$, define a function $\phi_f \in C^\infty(\tot)$ by 
\begin{equation}  \label{eqn:phif-def}
\phi_f(x) := \Delta_\gog(x) \int_G f(\mathrm{Ad}_g x)\, dg.
\end{equation}
We make the following observation about how $\phi_f$ transforms under the action of invariant differential operators.

\begin{lemma} \label{lem:hc15_subfromSH}
For all $p \in I(\mathfrak{g})$, $$\phi_{p(\partial) f} = \bar p(\partial) \phi_f.$$
\end{lemma}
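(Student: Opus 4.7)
The plan is to use the invariance of the Haar measure to produce an $\mathrm{Ad}$-invariant auxiliary function, then invoke Theorem \ref{thm:sh5.33} (which identifies the radial part of $p(\partial)$ with conjugation of $\bar{p}(\partial)$ by $\Pi$), and finally clear the factor of $\Pi^{-1}$ by multiplying through.

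Concretely, first I would define the smooth function $F(x) = \int_G f(\mathrm{Ad}_g x)\, dg$ on $\mathfrak{g}_0$, so that $\phi_f(h) = \Pi(h) \bar{F}(h)$. Bi-invariance of the normalized Haar measure on the compact group $G$ immediately gives that $F$ is $\mathrm{Ad}$-invariant. Next I would rewrite $\phi_{p(\partial)f}$ in terms of $F$: because $p \in I(\mathfrak{g})$, the operator $p(\partial)$ is $G$-invariant in the sense that $p(\partial)(f \circ \mathrm{Ad}_g) = (p(\partial) f) \circ \mathrm{Ad}_g$ for every $g \in G$; substituting this into the integrand and interchanging the (compactly supported) $g$-integral with the $h$-derivatives gives
\begin{equation*}
\int_G (p(\partial) f)(\mathrm{Ad}_g h)\, dg = p(\partial)\!\left[ \int_G f(\mathrm{Ad}_g \,\cdot\,)\, dg \right]\!(h) = (p(\partial) F)(h),
\end{equation*}
so that $\phi_{p(\partial)f}(h) = \Pi(h)\,(p(\partial) F)(h)$ on all of $\mathfrak{h}_0$.

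Now restrict to $\mathfrak{h}_0'$. Since $F$ is $\mathrm{Ad}$-invariant and $\mathfrak{h}_0'$ is transverse to the adjoint orbits, Theorem \ref{thm:radial-part} gives $\overline{p(\partial)F} = \gamma(p(\partial)) \bar{F}$, and Theorem \ref{thm:sh5.33} identifies $\gamma(p(\partial)) = \Pi^{-1} \bar{p}(\partial) \circ \Pi$. Putting these together,
\begin{equation*}
\Pi(h)\,\overline{p(\partial)F}(h) = \bar{p}(\partial)\!\bigl(\Pi \bar{F}\bigr)(h) = \bar{p}(\partial) \phi_f(h) \qquad \text{for } h \in \mathfrak{h}_0'.
\end{equation*}
The left-hand side is exactly $\phi_{p(\partial)f}(h)$ by the previous paragraph, so the identity $\phi_{p(\partial)f} = \bar{p}(\partial) \phi_f$ holds on $\mathfrak{h}_0'$. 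Both sides are smooth on all of $\mathfrak{h}_0$ (the integral defining $F$ is smooth in $h$ by compactness of $G$ and smoothness of $f$), and $\mathfrak{h}_0'$ is dense in $\mathfrak{h}_0$, so the identity extends to $\mathfrak{h}_0$ by continuity.

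The only step requiring genuine input is the appeal to Theorem \ref{thm:sh5.33}, which is already available; the other pieces (Ad-invariance of $F$, commuting $p(\partial)$ past $\mathrm{Ad}_g$, and interchanging integration with differentiation) are routine consequences of $p \in I(\mathfrak{g})$ and the compactness of $G$. The mildest care needed is in verifying the Ad-equivariance identity $(p(\partial) f) \circ \mathrm{Ad}_g = p(\partial)(f \circ \mathrm{Ad}_g)$, which is the infinitesimal counterpart of the fact that $p$, and hence $p(\partial)$, is fixed by the $G$-action on $\mathcal{D}(\mathfrak{g})$ described in Section \ref{sec:background}.
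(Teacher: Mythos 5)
Your proof is correct and follows essentially the same route as the paper: define $F(x) = \int_G f(\mathrm{Ad}_g x)\,dg$, apply Theorem \ref{thm:sh5.33} to get $\Pi\,\overline{p(\partial)F} = \bar p(\partial)(\Pi\bar F)$ on $\mathfrak{h}_0'$, and extend to all of $\mathfrak{h}_0$ by density and continuity. You simply make explicit a few steps the paper leaves implicit (the $\mathrm{Ad}$-invariance of $F$ and the interchange of $p(\partial)$ with the integral), which is fine.
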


\begin{proof}
Let $F(x) := \int_G f(\mathrm{Ad}_g x) dg.$  Then by Theorem \ref{thm:sh5.33}, for $x \in \tot'$ we have $(p(\partial)F)(x) = (\Delta_\gog^{-1} \bar p(\partial)(\Delta_\gog \bar F))(x)$, so that $$\phi_{p(\partial) f} (x) = \Delta_\gog(x) p(\partial) F(x) = \bar p(\partial) \phi_f(x).$$  Since $\phi_{p(\partial) f}$ and $\bar p(\partial) \phi_f$ are both continuous and $\tot'$ is dense in $\tot$, this equality must in fact hold for all $x \in \tot$.
\end{proof}

We now can give Harish-Chandra's original proof of the integral formula (\ref{eqn:hc}), following the outline of steps at the beginning of this subsection.

\begin{proof}[Proof (Theorem \ref{thm:hc})]
\ \\
\noindent {\bf Step 1: Identify an appropriate joint eigenfunction of $\partial I (\tot)$.} \\
Choose $y \in \tot'_\C$ and define $f: \mathfrak{g} \to \mathbb{C}$ by $f(x) = e^{\langle x, y \rangle}$, so that for $x \in \tot$ we have \begin{equation} \label{eqn:phi_f_def} \phi_f(x) = \Delta_\gog(x) \int_G e^{\langle \mathrm{Ad}_g x, y \rangle} dg.\end{equation}  For $y_0 \in \mathfrak{g}$ we have $\partial_{y_0} f = \langle y_0, y \rangle f$, so that for $q \in \Pi(\mathfrak{g})$ we have $q(\partial)f = q(y) f$.  But by Lemma \ref{lem:hc15_subfromSH}, for $p \in I(\mathfrak{g})$ we have $\phi_{p(\partial) f} = \bar p (\partial) \phi_f$, so that $$\bar p (\partial) \phi_f = \phi_{p(\partial) f} = \phi_{p(y) f} = p(y) \phi_f.$$  Since this holds for all $p \in I(\mathfrak{g})$, we may apply the isomorphism of the Chevalley restriction theorem to conclude that $$q(\partial) \phi_f = q(y) \phi_f, \quad \forall \, q \in I(\tot).$$  In other words the analytic function $\phi_f$ satisfies the system of differential equations (\ref{eqn:diffeqs}).  Therefore, by Lemma \ref{lem:diffeqs}, there is a unique choice of constants $c_w$, $w \in W$ such that we can write $$\phi_f (x) = \sum_{w \in W} c_w e^{\langle w(x), y \rangle}, \qquad x \in \tot.$$

\noindent {\bf Step 2: Average over $W$ to eliminate all but one unknown constant.} \\
By the skewness of $\Delta_\gog$ and the $\mathrm{Ad}$-invariance of the integral in (\ref{eqn:phi_f_def}), we have $$\phi_f(w(x)) = \epsilon(w) \phi_f(x).$$  Multiplying this identity on both sides by $\epsilon(w)$ and taking the average over $W$ gives \begin{equation} \label{eqn:phi_avg} \phi_f(x) = |W|^{-1} \sum_{w \in W} \epsilon(w) \phi_f(w(x)) = |W|^{-1} c \sum_{w \in W} \epsilon(w) e^{\langle w(x),y \rangle}, \end{equation}
where $c = \sum_{w \in W} \epsilon(w) c_w$. \\

\noindent {\bf Step 3: Determine the remaining constant.} \\
We now determine $c$ by computing $\Delta_\gog(\partial)\phi_f(x)|_{x=0}$ in two different ways.  Define, for each $w \in W$, $$\psi_w(x) := e^{\langle w(x),y \rangle},$$ so that (\ref{eqn:phi_avg}) becomes $$\phi_f = |W|^{-1} c \sum_{w \in W} \epsilon(w) \psi_w.$$ Then for $q \in \Pi(\tot)$, $q(\partial) \psi_w = q(w^{-1}(y)) \psi_w$.  In particular, $$\Delta_\gog(\partial) \psi_w = \Delta_\gog(w^{-1}(y)) \psi_w =  \epsilon(w^{-1}) \Delta_\gog(y) \psi_w = \epsilon(w) \Delta_\gog(y) \psi_w,$$ and therefore \begin{multline} \label{eqn:exp1} \Delta_\gog(\partial) \phi_f (x) \big |_{x=0} = \Delta_\gog(\partial) \left[ |W|^{-1} c \sum_{w \in W} \epsilon(w) \psi_w(x) \right]_{x = 0} \\ = |W|^{-1}c \sum_{w \in W} \epsilon(w)^2 \Delta_\gog(y) \psi_w(0) = |W|^{-1} c \, \Delta_\gog(y) |W| = c \, \Delta_\gog(y).\end{multline}

Now that we have one expression for the value of $\Delta_\gog(\partial) \phi_f (x) |_{x=0}$, we'll calculate it again in a different way and equate the two answers to each other.  We apply the product rule to compute \begin{multline} \label{eqn:exp2} \Delta_\gog(\partial) \phi_f (x) \big |_{x=0} = \Delta_\gog(\partial) \left( \Delta_\gog(x) \int_G f(\mathrm{Ad}_g x) dg \right) \bigg |_{x = 0} \\ = [\![\Delta_\gog,\Delta_\gog]\!] \int_G f(0) dg + \Delta_\gog(0) \cdot \Delta_\gog(\partial) \int_G f(\mathrm{Ad}_g x) dg \bigg |_{x = 0} \\ = [\![\Delta_\gog,\Delta_\gog]\!] f(0) + 0= [\![\Delta_\gog,\Delta_\gog]\!], \end{multline} where we have used the facts that the Haar measure $dg$ is taken to be normalized and that $\Delta_\gog(0) = 0$.

Equating (\ref{eqn:exp1}) and (\ref{eqn:exp2}), we see that $c = [\![\Delta_\gog,\Delta_\gog]\!] / \Delta_\gog(y)$.  Plugging this result into (\ref{eqn:phi_avg}) and multiplying both sides by $\Delta_\gog(y)$ gives the desired formula, $$\Delta_\gog(x) \Delta_\gog(y) \int_G e^{\langle \mathrm{Ad}_g x, y \rangle} dg = \frac{[\![\Delta_\gog,\Delta_\gog]\!]}{|W|} \sum_{w \in W} \epsilon(w) e^{\langle w(x),y \rangle},$$ which we have established {\it with the assumptions} that $x \in \tot$ and $y \in \tot'_\C$.

To extend the result to all $x, y \in \tot_\C$, we observe that the left- and right-hand sides are both holomorphic functions on $\tot_\C \times \tot_\C$, and that these functions agree on $\tot \times \tot'_\C$, so that they must agree on all of $\tot_\C \times \tot_\C$.  This completes the proof.
\end{proof}

\subsection{Heat equation proof} \label{sec:hc_heat_eqn_proof}

In this section, we prove Harish-Chandra's formula by relating the heat flow on $\gog$ to the heat flow on $\tot$.  This proof was published in \cite{McS-heateqn} and generalizes a method employed by Itzykson and Zuber to prove the HCIZ formula in \cite{IZ}.

The heat equation proof provides insight into the asymptotics of the Harish-Chandra integral (\ref{eqn:hc}) as the rank of $G$ increases.  As explained in Section \ref{subsec:cm_relationship} below, the relationship between the Harish-Chandra integral and the heat equation on $\tot$ leads to a heuristic calculation suggesting that the large-rank asymptotics of (\ref{eqn:hc}) can be described in terms of a certain hydrodynamic scaling limit of a classical Calogero--Moser system associated to the root system of $\gog$.  If rigorously proven, this would generalize known results for the $\U(N)$ case, where leading-order asymptotics for the HCIZ integral (\ref{eqn:hciz}) as $N \to \infty$ were formally computed by \cite{AM} and rigorously justified by \cite{GZ} and \cite{AG}.  The result is an expression for the leading-order contribution to the large-$N$ limit in terms of a particular solution to the complex Burgers' equation.  It has been shown that this phenomenon can be understood in terms of a relationship between the HCIZ integral and the Calogero--Moser system associated to the $A_{N-1}$ root system of $\U(N)$ \cite{GM}.  As discussed above in Section \ref{sec:HC-and-IS}, Harish-Chandra integrals arise naturally in the study of the quantum Calogero--Moser system, though it is still unclear whether this fact is directly related to the observations in Section \ref{subsec:cm_relationship}.

While it suffices for the purpose of proving the integral formula to relate heat flow on a Lie algebra to heat flow on a Cartan subalgebra, one could obtain richer information by studying the relationship between Brownian motion on the full algebra and Brownian motion confined to a Weyl chamber.  This latter process is a generalization of Dyson Brownian motion and has been studied by \cite{DG}.  The arguments regarding asymptotics for the $\U(N)$ integral due to \cite{GZ} and \cite{AG} rely on a large-deviations principle for Dyson Brownian motion, suggesting that a similar investigation of processes on general Weyl chambers could prove useful in studying the large-rank asymptotics of integrals over other groups.

Sections \ref{subsec:heat-equations} through \ref{subsec:normalization} below comprise the heat equation proof of (\ref{eqn:hc}).

\subsubsection{Heat equations on $\mathfrak{g}$ and $\tot$}
\label{subsec:heat-equations}

Let $\omega$ be the quadratic Casimir polynomial on $\mathfrak{g}$ defined by $\omega(x) := |x|^2 = \langle x, x \rangle$.  The Laplacian on $\mathfrak{g}$ is the differential operator $\omega(\partial)$.  A function $\phi: \mathfrak{g} \times (0, \infty) \to \mathbb{C}$ satisfies the heat equation on $\mathfrak{g}$ if
\begin{equation} \label{eqn:heat} \left(\partial_t - \frac{1}{2} \omega(\partial_x) \right) \phi(x,t) = 0, \qquad x \in \mathfrak{g}, \ t \in (0, \infty). \end{equation}
Similarly, $\psi: \tot \times (0, \infty) \to \mathbb{C}$ satisfies the heat equation on $\tot$ if 
\begin{equation} \label{eqn:heat-csa} \left(\partial_t - \frac{1}{2} \bar \omega(\partial_x) \right) \psi(x,t) = 0, \qquad x \in \tot, \ t \in (0, \infty), \end{equation}
where the bar indicates restriction to $\tot$.

We want to establish a relationship between the solutions of (\ref{eqn:heat}) and of (\ref{eqn:heat-csa}).  In general, if $\phi$ satisfies (\ref{eqn:heat}), it is not the case that $\bar \phi$ satisfies (\ref{eqn:heat-csa}).  However, we will show:
\begin{lemma} \label{lem:soln-reln} If $\phi \in C^{2}_1(\mathfrak{g} \times (0, \infty))$ solves (\ref{eqn:heat}) and is invariant under the adjoint action of $G$ in the sense that $$\phi(\mathrm{Ad}_g x, t) = \phi(x, t), \quad \forall \, g \in G,$$ then $\Delta_\gog(x) \bar \phi(x, t)$ solves (\ref{eqn:heat-csa}). \end{lemma}

The proof of Lemma \ref{lem:soln-reln} makes use of the following expression for the radial part of the Laplacian, which is a special case of Theorem \ref{thm:sh5.33}.  

\begin{lemma} \label{lem:radial-laplacian} The radial part of the Laplacian $\omega(\partial)$ with transversal manifold $\tot'$ is given by $$\gamma(\omega(\partial)) = \Delta_\gog^{-1} \bar \omega(\partial) \circ \Delta_\gog.$$ \end{lemma}

\begin{proof}
Recall that the regular elements of $\tot$ are the subset $$\tot' := \{ x \in \tot \ | \ \Delta_\gog(x) \ne 0 \}.$$ By Theorem \ref{thm:sh5.33} above, the radial part of the Laplacian $\omega(\partial)$ with transversal manifold $\tot'$ is given by
\begin{equation} \label{eqn:radial-laplacian}
\gamma(\omega(\partial)) = \Delta_\gog^{-1} \bar \omega(\partial) \circ \Delta_\gog.
\end{equation}
Restricting (\ref{eqn:heat}) to $\tot'$ and applying (\ref{eqn:radial-laplacian}), we obtain the {\it radial heat equation}
\begin{equation} \label{eqn:radial-heat} \left( \partial_t - \frac{1}{2} \Delta_\gog^{-1}(x) \bar \omega(\partial_{x}) \circ \Delta_\gog(x) \right) \bar \phi(x, t) = 0, \qquad x \in \tot', \end{equation}
where the bar indicates restriction to $\tot'$.  Multiplying both sides by $\Delta_\gog$ we have \begin{equation} \label{eqn:rest-heat-soln} \left( \partial_t - \frac{1}{2} \bar \omega(\partial_{x}) \right) \Delta_\gog(x) \bar \phi(x, t) = 0, \qquad x \in \tot'. \end{equation} Thus $\Delta_\gog(x) \bar \phi(x, t)$ solves the heat equation on $\tot'$ and therefore on all of $\tot$ by continuity since $\tot'$ is dense.
\end{proof}

\subsubsection{The $G$-averaged heat kernel}

The {\it heat kernel} on $\mathfrak{g}$ is the function $K: \mathfrak{g}^2 \times (0, \infty) \to \mathbb{C}$ given by 
\begin{equation} \label{eqn:g-heat-kernel}
K(x, y; t) := \left(\frac{1}{2 \pi t} \right )^{\dim \mathfrak{g} /2} e^{-\frac{1}{2t} |x - y|^2}.
\end{equation}
It solves the heat equation (\ref{eqn:heat}) where the spatial derivatives act in either the $x$ or the $y$ variables, with the boundary condition \begin{equation} \label{eqn:bc1} \lim_{t \to 0} K(x, y; t) = \delta(x - y).\end{equation} The limit is understood in the distributional sense of \begin{equation*} \label{eqn:dist-sense} \lim_{t \to 0} \int_{\mathfrak{g}} K(x, y; t) \varphi(y) dy = \varphi(x) \end{equation*} for $\varphi \in C_c^\infty(\mathfrak{g})$, where $dy$ is the Lebesgue measure induced by the inner product.  The choice of this integration measure is significant, as it guarantees that for all $x$ and $t$ we have
$$\int_{\mathfrak{g}} K(x, y; t)\, dy = 1.$$

Following \cite{IZ}, we define the {\it $G$-averaged heat kernel} as
\begin{align} \nonumber \tilde K(x, y; t) &:= \int_G K(\mathrm{Ad}_g x, y ; t)\, dg \\
\label{eqn:k-avg-def} &= \left(\frac{1}{2 \pi t} \right )^{\dim \mathfrak{g} /2} e^{-\frac{1}{2t}(|x|^2 + |y|^2)} I(x, y; t),
\end{align}
where  \begin{equation} \label{eqn:I-integral-def} I(x, y; t) := \int_G e^{\frac{1}{t} \langle \mathrm{Ad}_g x, y \rangle} dg.\end{equation}
Observe that $I(x, y; 1) = \CH(x,y)$ for $x, y \in \tot$, so that the Harish-Chandra integral appears naturally in this context.

The $G$-averaged heat kernel $\tilde K$ is constant on adjoint orbits of both $x$ and $y$, and by linearity it satisfies the heat equation (\ref{eqn:heat}) on $\mathfrak{g}$ as well, so that by Lemma \ref{lem:soln-reln}, $\Delta_\gog(y) \tilde K(x, y; t)$ satisfies the heat equation (\ref{eqn:heat-csa}) on $\tot$ with the spatial derivatives acting in the $y$ variables.  Therefore the function \begin{equation} \label{eqn:v-def} V(x, y; t) := (2\pi)^{(\dim \mathfrak{g} - r)/2} \Delta_\gog(x) \Delta_\gog(y) \tilde K(x, y; t) \end{equation} also satisfies (\ref{eqn:heat-csa}) and is skew with respect to the action of $W$ on either of $x$ or $y$ individually.  In the next step we identify the boundary conditions that $V$ satisfies as $t$ approaches 0.  This will allow us to write an exact expression for $V$ using the fundamental solution to the heat equation on $\tot$, yielding (\ref{eqn:hc}).

\subsubsection{Boundary conditions for $V$}

We now further assume that both $x, y \in \tot'$.  In order to compute the distributional limit of $V$ as $t$ approaches 0, we use Laplace's method to determine the asymptotics of $I(x, y; t)$ to leading order in $t$.  We will show:

\begin{lemma} \label{lem:bcs} If $x, y \in \tot'$, then \begin{equation} \label{eqn:bcs} \lim_{t \to 0} V(x, y; t) = C \sum_{w \in W} \epsilon(w) \delta(w(x) - y) \end{equation} for some constant $C \in \R$, where the distributional sense of the limit is given by integration against test functions in $C_c^\infty(\tot)$ with respect to the Lebesgue measure induced by the restriction of the inner product to $\tot$. \end{lemma}

\begin{proof}
We first rewrite \begin{equation} \label{eqn:rewrite-v} V(x, y;t) = \frac{t^{-\dim \mathfrak{g} /2}}{(2\pi)^{r/2}} \Delta_\gog(x) \Delta_\gog(y) e^{-\frac{1}{2t}(|x|^2 + |y|^2)} I(x, y; t).\end{equation}  Next we rewrite $I(x, y; t)$ as follows.  Let $T$ be the maximal torus in $G$ with Lie algebra $\tot$.  Let $dh$ be the normalized Haar measure on $T$ and let $d(gT)$ be the unique left-invariant probability measure on $G/T$.  Then by a standard Fubini-type theorem for Lie groups \cite[ch.~1, Theorem 1.9]{SH} we have: \begin{multline} \label{eqn:rewrite-i} I(x, y; t) = \int_G e^{\frac{1}{t} \langle \mathrm{Ad}_g x, y \rangle} dg = \int_{G/T} \int_T e^{\frac{1}{t} \langle \mathrm{Ad}_{gh} x, y \rangle} dh \ d(gT) \\ = \int_{G/T} e^{\frac{1}{t} \langle \mathrm{Ad}_{g} x, y \rangle} d(gT).\end{multline}

We will apply Laplace's method to the last integral in (\ref{eqn:rewrite-i}).  To do so, we need the following lemmas computing the critical points of the function $gT \mapsto \langle \mathrm{Ad}_g x, y \rangle$ along with its Hessian matrix at each critical point.

\begin{lemma} \label{lem:crit-pts}
The critical points of the function $gT \mapsto \langle \mathrm{Ad}_g x, y \rangle$ on $G/T$ are the points $gT$ such that $\mathrm{Ad}_g x = w(x)$ for some $w \in W$.
\end{lemma}
\begin{proof}
We first note that the map $gT \mapsto \mathrm{Ad}_g x$ is a diffeomorphism of $G/T$ onto the adjoint orbit $\mathcal{O}_{x} \subset \gog$.  Thus we may equivalently find the critical points of the function $x_0 \mapsto \langle x_0, y \rangle$ for $x_0 \in \mathcal{O}_{x}$.  The tangent space at a point $x_0 \in \mathcal{O}_{x}$ is $T_{x_0} \mathcal{O}_{x} \cong [x_0, \mathfrak{g}],$ and the partial derivative of the linear functional $\langle y, \cdot \rangle$ in the direction of a tangent vector $y_0$ is equal to $\langle y, y_0 \rangle.$  Thus the condition for $x_0 \in \mathcal{O}_{x}$ to be a critical point is $$\langle y, y_0 \rangle = 0, \quad \forall \, y_0 \in [x_0, \mathfrak{g}],$$ or equivalently $$\langle [x_0, y_0], y \rangle = 0, \quad \forall \, y_0 \in \mathfrak{g}.$$  Using the antisymmetry of the bracket and the invariance of the inner product, this is equivalent to $$\langle y_0, [x_0, y] \rangle = 0, \quad \forall \, y_0 \in \mathfrak{g}.$$ By the nondegeneracy of the inner product, this will hold if and only if $[x_0, y] = 0$, which implies $x_0 \in \tot$.  Writing $x_0 = \mathrm{Ad}_g x$, we have $x_0 \in \tot$ exactly when $\mathrm{Ad}_g x = w(x)$ for some $w \in W$.
\end{proof}

\begin{lemma} \label{lem:hessian-det}
Let $H$ be the Hessian matrix of the function $gT \mapsto \langle \mathrm{Ad}_g x, y \rangle$ at a critical point $g_0 T$, in exponential coordinates on $G/T$ given by $e^{\xi}T \mapsto \xi + \tot \in \gog / \tot$.  Let $w$ be the element of $W$ such that $\mathrm{Ad}_{g_0} x = w(x)$ for all $x \in \gog$.  Then
\begin{equation} \label{eqn:hessian-det}
\sqrt{\det(-H)} = \epsilon(w) \Delta_\gog(x) \Delta_\gog(y).
\end{equation}
\end{lemma}
In order to give the correct global phase when applying Laplace's method, the branch of the square root in (\ref{eqn:hessian-det}) is chosen by writing $\sqrt{\det(- H)} = \prod_{j=1}^n \sqrt{-\mu_j}$ where $\mu_j$ are the eigenvalues of $H$, and taking $|\mathrm{arg} \sqrt{-\mu_j}| < \pi/4$.
\begin{proof}
Since $$\mathrm{Ad}_{\exp(\xi)} = \sum_{j = 0}^\infty \frac{1}{j!} \mathrm{ad}_{\xi}^j,$$ expanding to second order in $\xi$ around the critical point $g_0 T$ gives \begin{equation} \label{eqn:second-order} \langle \mathrm{Ad}_{\exp(\xi)} w(x), y \rangle = \langle w(x), y \rangle + \frac{1}{2}\langle \mathrm{ad}_{\xi}^2 w(x), y \rangle + O(|\xi|^3).\end{equation}
To obtain the Hessian we must compute explicitly the second-order term in (\ref{eqn:second-order}).  For a root $\alpha$, let $\mathfrak{g}_\alpha \subset \mathfrak{g}_\C$ denote the corresponding root space.  For each $\alpha \in \Phi^+$, choose $x_\alpha \in \mathfrak{g}_\alpha$, $x_{-\alpha} \in \mathfrak{g}_{-\alpha}$ normalized so that $\langle x_\alpha, x_{-\alpha} \rangle = 1$.  Assuming without loss of generality that $\xi$ lies in the orthogonal complement of $\tot$, we may write \begin{equation} \label{eqn:x_root_decomp} \xi = \sum_{\alpha \in \Phi^+} c_\alpha x_\alpha + c_{-\alpha} x_{-\alpha} \end{equation} for some coefficients $c_\alpha,$ $c_{-\alpha} \in \C$, and we find by a straightforward calculation\footnote{Recall from Section \ref{sec:notation} that we take the roots to be real valued on $\tot$, so that $[h, x_\alpha] = i \alpha(h) x_\alpha$ for $h \in \tot$.} that $$\frac{1}{2}\langle \mathrm{ad}_{\xi}^2 w(x), y \rangle = - \sum_{\alpha \in \Phi^+} \alpha(w(x))\alpha(y) c_\alpha c_{-\alpha}.$$  Thus $H$ is a block-diagonal matrix composed of 2-by-2 blocks of the form $$\begin{bmatrix} 0 & -\alpha(w(x))\alpha(y) \\ -\alpha(w(x)) \alpha(y) & 0 \end{bmatrix}$$ for each $\alpha \in \Phi^+$.  With the appropriate choice of branch for the square root, we find $$\sqrt{\det(-H)} = \Delta_\gog(w(x)) \Delta_\gog(y) = \epsilon(w) \Delta_\gog(x) \Delta_\gog(y)$$ as desired.
\end{proof}

Returning now to the proof of Lemma \ref{lem:bcs} and applying Laplace's method to (\ref{eqn:rewrite-i}) together with Lemmas \ref{lem:crit-pts} and \ref{lem:hessian-det}, we obtain \begin{equation} \label{eqn:i-approx} I(x, y; t) = C \frac{\left( 2\pi t \right)^{(\dim \mathfrak{g} -r)/2}}{\Delta_\gog(x) \Delta_\gog(y)} \sum_{w \in W} \epsilon(w) e^{\frac{1}{t}\langle w(x), y \rangle}(1 + O(t)) \end{equation} where the constant $C$ arises from the normalization of the measure $d(gT)$.  Substituting this result into (\ref{eqn:rewrite-v}), we find \begin{equation} \label{eqn:v-approx} V(x, y; t) = C \left( \frac{1}{2 \pi t} \right)^{r/2} \sum_{w \in W} \epsilon(w) e^{-\frac{1}{2t} | w(x) - y |^2}(1 + O(t))\end{equation} as $t \to 0$, which gives the desired limit (\ref{eqn:bcs}).
\end{proof}

Because $V$ solves the heat equation on $\tot$, taking the convolution of the boundary data (\ref{eqn:bcs}) with the fundamental solution gives \begin{equation} \label{eqn:exact-v} V(x, y; t) = C \left( \frac{1}{2 \pi t} \right)^{r/2} \sum_{w \in W} \epsilon(w) e^{-\frac{1}{2t} |w(x) - y|^2}. \end{equation} Thus the higher-order terms on the right-hand side of (\ref{eqn:v-approx}) actually vanish.  In physical terms, the expression (\ref{eqn:exact-v}) is analogous to a Slater determinant, with $W$-skewness playing the role of the antisymmetry property of fermions.

\subsubsection{Normalization} \label{subsec:normalization}

It only remains to rearrange terms and determine the constant $C$.  Evaluating $V$ at $t = 1$, we have:
\begin{align*}
V(x, y; 1) &= (2\pi)^{(\dim \mathfrak{g} - r)/2} \Delta_\gog(x) \Delta_\gog(y) \tilde K(x, y; 1) \\
& = (2\pi)^{(\dim \mathfrak{g} - r)/2} \Delta_\gog(x) \Delta_\gog(y) \left(\frac{1}{2 \pi} \right )^{\dim \mathfrak{g} /2} e^{-\frac{1}{2}(|x|^2 + |y|^2)} I(x, y; 1) \\
& = C \left( \frac{1}{2 \pi} \right)^{r/2} \sum_{w \in W} \epsilon(w) e^{-\frac{1}{2} |w(x) - y|^2} \\
& = C \left( \frac{1}{2 \pi} \right)^{r/2} e^{-\frac{1}{2}(|x|^2 + |y|^2)} \sum_{w \in W} \epsilon(w) e^{\langle w(x),y \rangle}.
\end{align*}
After cancelations, this becomes \begin{equation} \label{eqn:almost-hc} \Delta_\gog(x) \Delta_\gog(y) I(x, y; 1) = C \sum_{w \in W} \epsilon(w) e^{\langle w(x),y \rangle}.\end{equation}
Up to this point we have assumed that $x, y \in \tot'$, but we can immediately remove this assumption: since both sides of (\ref{eqn:almost-hc}) are analytic in $x$ and $y$, this identity holds for all $x, y \in \tot_\C$.

Finally, we determine $C$.  Applying $\Delta_\gog(\partial_{x})$ to both sides of (\ref{eqn:almost-hc}) and evaluating at $x = 0$, we obtain $$\Delta_\gog(y) [ \! [ \Delta_\gog, \Delta_\gog ] \! ] = C |W| \Delta_\gog(y), \quad y \in \tot_\C,$$
with $[ \! [ \Delta_\gog, \Delta_\gog ] \! ]$ defined by (\ref{eqn:poly-bracket-def}), so that $C = [ \! [ \Delta_\gog, \Delta_\gog ] \! ] / |W|$.  This completes the proof of Theorem \ref{thm:hc}.

\subsubsection{Relationship to Calogero--Moser systems} \label{subsec:cm_relationship}
To illustrate the relationship between Harish-Chandra integrals and Calogero--Moser systems, we observe that by Lemma \ref{lem:soln-reln} the function
\begin{equation} \label{eqn:WKB-ansatz}
W(x, y; t) := \frac{1}{r^2} \log \tilde K(\sqrt{r} x, \sqrt{r} y ; t)
\end{equation}
satisfies \begin{equation} \label{eqn:free_energy_pde} 2 \frac{\partial W}{\partial t} = r |\nabla W|^2 + \frac{2}{r} \nabla(\log \Delta_\gog ) \cdot \nabla W - \frac{1}{r} \bar \omega(\partial) W - \frac{1}{r^3} \Delta_\gog^{-1} \bar \omega(\partial) \Delta_\gog, \end{equation} where spatial derivatives act in the $y$ variables and $\Delta_\gog = \Delta_\gog(y)$.  Since $\Delta_\gog$ is harmonic by Proposition \ref{prop:pi_harmonic}, the last term on the right-hand side of (\ref{eqn:free_energy_pde}) vanishes.  In physical terms, the substitution (\ref{eqn:WKB-ansatz}) resembles a WKB ansatz with $r^{-2}$ playing the role of $\hbar$, so the large-rank limit in this setting bears some similarities to a semiclassical approximation.  Some caution is required in applying this analogy however, since $r$ is not a numerical parameter but rather the dimension of the underlying space.

To compute the large-$N$ asymptotics for the integral over $\U(N)$ (where $r = N$), Matytsin \cite{AM} drops the term corresponding to $r^{-1} \bar \omega(\partial) W$ in (\ref{eqn:free_energy_pde}), arguing heuristically that it should be subdominant as $N \to \infty$.  If we neglect this term and make the substitution $W = S - r^{-2} \log \Delta_\gog$, then we arrive at \begin{equation} \label{eqn:CM_preHJ} 2 \frac{\partial S}{\partial t} = r |\nabla S|^2 - r^{-3} |\nabla (\log \Delta_\gog) |^2. \end{equation}  In fact (\ref{eqn:CM_preHJ}) is the Hamilton--Jacobi equation for a rational Calogero--Moser system associated to the root system of the algebra $\mathfrak{g}$, with $r$ particles.  The factors of $r$ and $r^{-3}$ multiplying the two terms in (\ref{eqn:CM_preHJ}) have the effect of scaling both the spacing between particles and the interaction strength by $1/r$.  From this observation we expect that the large-rank asymptotics of Harish-Chandra integrals can be understood in terms of hydrodynamic scaling limits of Calogero--Moser systems.  This is known to be true for integrals over $\U(N)$, as in \cite{AM, GZ} the leading-order asymptotics of the HCIZ integral are derived in terms of a particular solution to the complex Burgers' equation.  As explained in \cite{GM}, the complex Burgers' equation also arises as a hydrodynamic limit of the Calogero--Moser system associated to the $A_{N-1}$ root system.

\subsection{Symplectic geometry proof} \label{subsec:symplectic_proof}

It is also possible to prove Theorem \ref{thm:hc} using a localization technique in symplectic geometry.  While the existence of such a proof was observed already by Duistermaat and Heckman in the 1980's \cite{DH}, here we present the details of the argument with the goal of making this derivation accessible to non-specialists.

The symplectic geometry approach illustrates a completely different perspective on the Harish-Chandra integral.  Rather than writing the integral in terms of the heat kernel on $\mathfrak{g}$, we instead view it as an oscillatory integral over a coadjoint orbit in $\mathfrak{g}^*$.  The proof begins with the Laplace's method approximation (\ref{eqn:i-approx}) for $I(x, y; t)$ obtained in the proof of Lemma \ref{lem:bcs}, though here we work with $I(x, y; -it^{-1})$ rather than $I(x, y; t)$, so that (\ref{eqn:i-approx}) becomes a stationary-phase approximation.  Then, instead of observing that $V$ solves the heat equation on $\tot$, we use the Duistermaat--Heckman theorem to deduce that the approximation is exact, after which it remains only to compute the normalization constant following the argument of Section \ref{subsec:normalization} above.

We first state the Duistermaat--Heckman theorem, which was proved in \cite{DH} and later shown to be an instance of a more general principle of {\it equivariant localization} \cite{AtiyahBottLoc}.  Let $(M, \omega)$ be a compact symplectic manifold of dimension $2n$, and suppose that the $r$-dimensional torus $T$ acts smoothly on $M$.  For each $y \in \tot$ we define a vector field $X_y$ on $M$ by \begin{equation} \label{eqn:torus_vf_def} X_y(x) f = \frac{d}{dt} \bigg |_{t = 0} f(e^{ty} \cdot x), \quad x \in M, \ f \in C^\infty(M).\end{equation} We assume that there is a {\it moment map} for the action of $T$ on $M$, that is a smooth function $\phi: M \to \tot^*$ such that\footnotemark \footnotetext{Some authors define the moment map with the opposite sign in (\ref{eqn:moment_map_def}).} \begin{equation} \label{eqn:moment_map_def} \omega(X_y, \, \cdot \,) = - \langle d\phi(\cdot), y \rangle, \qquad y \in \tot.\end{equation} The {\it Liouville measure} $\mu$ on $M$ is given by the volume form $\omega^{\wedge n}/{n!}$.  Then we have the following:

\begin{theorem}[Duistermaat--Heckman] \label{thm:dh}
The integral \begin{equation} \label{eqn:dh-int} \int_{M} e^{it \langle \phi(x), y \rangle} d\mu(x) \end{equation} is exactly equal to its leading-order approximation by the method of stationary phase as $t \to \infty$.\footnotemark \footnotetext{The Duistermaat--Heckman theorem is sometimes stated differently, as follows: every regular value of $\phi$ has a neighborhood in which $\phi_*\mu$ is equal to Lebesgue measure times a polynomial of degree at most $n-r$.  Theorem \ref{thm:dh} follows from this statement by observing that the integral (\ref{eqn:dh-int}), considered as a function of $t$, is the inverse Fourier transform of the measure $\langle \phi, y \rangle_* \mu$, i.e.~the pushforward of $\mu$ by the map $ \langle \phi(\cdot), y \rangle: \tot^* \to \R$.}
\end{theorem}

To prove the exactness of (\ref{eqn:i-approx}) from Theorem \ref{thm:dh}, we reinterpret $I(x,y;t)$ in the language of symplectic geometry.  Here we temporarily stop identifying $\gog$ and $\gog^*$ in order to emphasize the fact that in symplectic geometry it is natural to study {\it co}adjoint rather than adjoint orbits, but we still use $\langle \cdot, \cdot \rangle$ to indicate the duality pairing.

For $x \in \mathfrak{g}$, let $x^* \in \mathfrak{g}^*$ be its dual under the inner product, that is $x^*(y) = \langle x, y \rangle$ for $x, y \in \mathfrak{g}$.  Let $\beta \in \tot^*$.  We define the {\it coadjoint orbit} $$\mathcal{O}^*_{\beta} := \{ \mathrm{Ad}_g^* \beta \ | \ g \in G \} \subset \mathfrak{g}^*,$$ where $\mathrm{Ad}^*$ is the coadjoint representation of $G$ on $\mathfrak{g}^*$, defined by $$\mathrm{Ad}^*_g x^* := (\mathrm{Ad}_{g^{-1}} x)^*.$$

At a point $x^* \in \mathcal{O}^*_{\beta}$, under the usual identification $T_{x^*}\mathfrak{g}^* \cong \mathfrak{g}^*$, we have $$T_{x^*} \mathcal{O}^*_{\beta} \cong \{ [x, y]^* \ | \ y \in \mathfrak{g}^* \}.$$  The {\it Kostant--Kirillov--Souriau form} is the $G$-invariant 2-form $\omega$ on $\mathcal{O}^*_{\beta}$ defined by \begin{equation} \label{eqn:kks_form} \omega_{x^*}([x, y]^*, [x, z]^*) = \langle x, [y, z] \rangle. \end{equation}  This form can be shown by direct computation to be nondegenerate and closed, and it therefore makes $\mathcal{O}^*_{\beta}$ into a symplectic manifold \cite[ch. 1]{AK}, on which the maximal torus $T \subset G$ acts smoothly via the coadjoint representation.

Let $\phi: \mathcal{O}^*_\beta \to \tot^*$ be the orthogonal projection onto $\tot^*$.  We will show that $\phi$ is a moment map for the action of $T$ on $\mathcal{O}^*_\beta$.  Observe that we have \begin{equation} \label{eqn:phi_killing} \langle \phi(x^*),y \rangle = \langle x, y \rangle, \qquad y \in \tot,\ x^* \in \mathcal{O}^*_\beta.  \end{equation}  Moreover, $d\phi([x,z]^*)$ is also just the orthogonal projection of $[x,z]^*$ onto $\tot^*$, so that $$- \langle d\phi([x,z]^*),y \rangle = - \langle [x,z], y \rangle = \langle x, [y, z] \rangle = \omega_{x^*}([x,y]^*, [x,z]^*).$$ A direct computation from (\ref{eqn:torus_vf_def}) gives $X_y(x^*) = [x,y]^*$, so that $\phi$ satisfies (\ref{eqn:moment_map_def}) and therefore is a moment map as desired.

Next we relate the Liouville measure on $\mathcal{O}^*_\beta$ to the Haar measure on $G$.  Let $2n = \dim \mathcal{O}^*_{\beta}.$  The Liouville measure $\omega^{\wedge n}/n!$ is $G$-invariant due to the invariance of $\omega$.  Pulling the Liouville measure back along the map $\mathrm{Ad}^* \beta: G \to \mathcal{O}^*_{\beta}$, we obtain a finite invariant measure on $G$, which by the uniqueness of the Haar measure must equal a constant times $dg$.  Thus for a Borel set $E \subset G$ we have

\begin{equation} \label{eqn:liouville_haar_measures}
\int_E dg = \frac{1}{\mathrm{Vol}_\mu(\mathcal{O}^*_\beta)} \int_{\{ \mathrm{Ad}^*_g \beta \ | \ g \in E \}} \frac{\omega^{\wedge n}}{n!}.
\end{equation}

Putting together (\ref{eqn:phi_killing}) and (\ref{eqn:liouville_haar_measures}) and writing $d\mu := \omega^{\wedge n}/n!$, we can express $I(x, y; t)$ as an integral over $\mathcal{O}_{x^*}^*$:
$$I(x, y; t) = \int_G e^{\frac{1}{t} \langle \mathrm{Ad}_g x, y \rangle} dg = \frac{1}{\mathrm{Vol}_\mu(\mathcal{O}^*_{x^*})} \int_{\mathcal{O}_{x^*}^*} e^{\frac{1}{t} \langle \phi(\beta), y \rangle} d\mu(\beta).$$  This function is analytic in $t$ for $t \in (0, \infty)$, and so by analytic continuation we can equivalently consider $$I(x, y; -it^{-1}) = \frac{1}{\mathrm{Vol}_\mu(\mathcal{O}^*_{x^*})} \int_{\mathcal{O}_{x^*}^*} e^{it \langle \phi(\beta), y \rangle} d\mu(\beta).$$
We have now written $I$ in the form (\ref{eqn:dh-int}), so that by Theorem \ref{thm:dh} and (\ref{eqn:i-approx}) we have $$I(x, y; -it^{-1}) = C \left( \frac{2\pi}{i t} \right)^{(\dim \mathfrak{g} -r)/2}(\Delta_\gog(x) \Delta_\gog(y))^{-1} \sum_{w \in W} \epsilon(w) e^{it \langle w(x), y \rangle}$$ for some constant $C$. The proof concludes by evaluating at $t = -i$ and computing $C$ as in Section \ref{subsec:normalization}.

\subsection{Representation-theoretic proofs} \label{subsec:repthy_proof}

Next we give two different proofs using ideas from representation theory: one proof via the Kirillov character formula, and a second via the character expansion for the heat kernel on the group $G$.

\subsubsection{Via the Kirillov character formula} \label{subsec:KCF-proof}

As discussed above in Section \ref{subsec:repthy-chars}, the Harish-Chandra integral is closely related to the irreducible characters of $G$.  Here we discuss how Theorem \ref{thm:hc} is essentially equivalent to the Kirillov character formula in the case of a compact, connected, semisimple group.

Let $\lambda \in \tot^*$ be the highest weight of an irreducible representation of $G$ with character $\chi_\lambda$, and let $\rho := \frac{1}{2}\sum_{\alpha \in \Phi^+} \alpha$ be half the sum of the positive roots, also called the {\it Weyl vector} of $\gog$.  Let $\mu$ be the Liouville measure associated to the Kostant--Kirillov--Souriau form on the coadjoint orbit $\mathcal{O}^*_{\lambda+\rho}$, and set $$\widehat{\Delta}_\gog(e^z) := \prod_{\alpha \in \Phi^+} (e^{i \langle \alpha, z \rangle/2} - e^{- i \langle \alpha,z \rangle/2}), \qquad z \in \tot.$$ The {\it Kirillov character formula} for compact groups \cite[ch. 5, Theorem 9]{AK} says:

\begin{theorem} \label{thm:kirillov_char-proofs}
For $\xi \in \tot'$,
\begin{equation} \label{eqn:kirillov_char-proofs} \chi_\lambda(e^\xi) = \frac{\Delta_\gog(i\xi)}{\widehat{\Delta}_\gog(e^\xi)} \int_{\mathcal{O}^*_{\lambda + \rho}} e^{i \langle \beta, \xi \rangle} d\mu(\beta). \end{equation}
\end{theorem}

Although here we assume that $G$ satisfies the assumptions of Theorem \ref{thm:hc}, versions of this formula hold in a variety of situations even for non-compact groups; see \cite{AK} for a detailed discussion.  Notably, Frenkel \cite{IF} generalized (\ref{eqn:kirillov_char-proofs}) to a character formula for affine Lie algebras by proving an analogue of the Harish-Chandra formula (\ref{eqn:hc}) for loop groups.

On the other hand, the {\it Weyl character formula} (see e.g. \cite[Theorem 25.4]{BumpLieGroups}) states:
\begin{theorem} \label{thm:weyl_char}
For $\xi \in \tot'$,
\begin{equation} \label{eqn:weyl_char}
\chi_\lambda(e^\xi) = \frac{1}{\widehat{\Delta}_\gog(e^\xi)} \sum_{w \in W} \epsilon(w) e^{i \langle w(\lambda + \rho), \xi \rangle}.
\end{equation}
\end{theorem}

For this reason, the function $\widehat{\Delta}_\gog$ is often called the {\it Weyl denominator}.  Note that either of (\ref{eqn:kirillov_char-proofs}) or (\ref{eqn:weyl_char}) completely determines $\chi_\lambda$. Since the character is analytic it is determined on the maximal torus $\exp(\tot)$ by its values on $\exp(\tot')$, and since it is a class function it is determined on all of $G$ by its restriction to a maximal torus.

One standard proof of Theorem \ref{thm:kirillov_char-proofs} due to Kirillov \cite{Kirillov2} works by applying Theorem \ref{thm:weyl_char} to the right-hand side of (\ref{eqn:hc}). However, the Kirillov formula can also be proven by other methods; it follows, for example, from the equivariant index theorem for a twisted Dirac operator on the coadjoint orbit \cite{BerlineGetzlerVergne}. Nothing stops us, therefore, from using (\ref{eqn:kirillov_char-proofs}) to prove (\ref{eqn:hc}) instead, as there is no circularity involved. For the sake of completeness, we record this proof below.

Equating the right-hand sides of (\ref{eqn:kirillov_char-proofs}) and (\ref{eqn:weyl_char}), writing $i\xi = y$ and $\lambda + \rho = x^*$, and using the relation (\ref{eqn:liouville_haar_measures}) between the Liouville measure on $\mathcal{O}^*_{\lambda+\rho}$ and the Haar measure on $G$, we obtain: \begin{align*} \Delta_\gog(i\xi) \int_{\mathcal{O}^*_{\lambda + \rho}} e^{i \langle \beta, \xi \rangle} d\mu(\beta) &= \mathrm{Vol}_\mu(\mathcal{O}^*_{x^*}) \Delta_\gog(y) \int_G e^{\langle \mathrm{Ad}_g x, y \rangle} dg \\ &= \sum_{w \in W} \epsilon(w) e^{\langle w(x), y \rangle}.\end{align*}

Applying $\Delta_\gog(\partial_{y})$ to the second and third expressions above and evaluating at $y = 0$, we find
\begin{equation} \label{eqn:coadjoint_volume}
\mathrm{Vol}_\mu(\mathcal{O}^*_{x^*}) = \frac{|W|} {[\! [ \Delta_\gog, \Delta_\gog ] \! ]} \Delta_\gog(x),
\end{equation}
which recovers the integral formula (\ref{eqn:hc}) in the case that $y \in \tot'$ and $x^* = \lambda + \rho$, where $\lambda$ is the highest weight of an irreducible representation of $G$.  Since the right-hand side of (\ref{eqn:hc}) is $W$-invariant in $x$, the result also holds for $x$ such that $w(x)^* = \lambda + \rho$, $w \in W$, with $\lambda$ a highest weight.  Such $x$ form a lattice spanning $\tot$, so by scaling $y$ and shifting the scaling onto $x$ we obtain the result for all $y \in \tot'$ and $x$ in a dense subset of $\tot$. Analytic continuation then gives the result for all $x, y \in \tot_\C$, completing the proof of Theorem \ref{thm:hc}.

\begin{remark} \label{rem:O_rho-vol}
Comparing (\ref{eqn:coadjoint_volume}) with the Kirillov character formula (\ref{eqn:kirillov_char-proofs}) for $\lambda = 0$ (corresponding to the trivial representation of $G$), we find
\begin{equation} \label{eqn:O_rho-vol}
\Delta_\gog(\rho) = \frac{[\! [ \Delta_\gog, \Delta_\gog ] \! ]}{|W|},
\end{equation}
so that (\ref{eqn:coadjoint_volume}) becomes
\begin{equation} \label{eqn:coadjoint_volume2}
\mathrm{Vol}_\mu(\mathcal{O}^*_{x^*}) = \frac{\Delta_\gog(x)}{\Delta_\gog(\rho)}.
\end{equation}
\remdone
\end{remark}

\subsubsection{Via the character expansion for the heat kernel on $G$} \label{subsec:char-exp-proof}

This next proof, due to Altschuler and Itzykson \cite{AltItz}, employs both representation theory {\it and} heat kernel asymptotics.  Here we study the heat kernel $K_G(g_1, g_2 ; t)$ on the group rather than on the algebra.  Instead of deriving the integral formula using the fundamental solution for the heat equation on $\tot$, we use the fact that $K_G$ has a known character expansion.  We can then recover the Harish-Chandra formula by integrating this character expansion over the group and taking an appropriate limit.

It is instructive to compare this strategy to the heat equation proof in Section \ref{sec:hc_heat_eqn_proof} above.  There we related the Harish-Chandra integral $\CH(x,y)$ to the $G$-averaged heat kernel $\tilde K$, defined in (\ref{eqn:k-avg-def}) by integrating the heat kernel on $\gog$ over an adjoint orbit.  We then related $\tilde K$ to a particular solution of the heat equation on $\tot$, which we could write down explicitly, yielding an exact formula for $\CH(x,y)$.  In effect the proof in this subsection also works by obtaining an exact expression for $\tilde K$, but via a different method.  Since $\gog = T_{\mathrm{id}_G}G$, the heat kernel on $\gog$ can be recovered from the local behavior of $K_G$ near the identity.  Thus it is sufficient to obtain an exact expression for the integral of $K_G$ over $G$, which we can accomplish using ideas from character theory.

Now we give the proof.  The heat kernel on $G$ can be written \cite{Fegan}:
\begin{equation} \label{eqn:G-heat-kernel}
K_G(g_1,g_2; t) = \sum_{\lambda \in P^+} (\dim V_\lambda) \chi_\lambda(g_1 g_2^{-1}) e^{-t(|\lambda + \rho|^2 - |\rho|^2)/2},
\end{equation}
where $P^+$ is the set of dominant integral weights of $G$.  We will first integrate $K_G$ over $G$ and derive an asymptotic expression for the resulting function in the limit of small $g_1$, $g_2$ and $t$.  Then we will relate this expression to the Harish-Chandra integral.

Integrating (\ref{eqn:G-heat-kernel}) over $G$, from the orthonormality of matrix coefficients of irreducible unitary representations, we obtain:
\begin{equation} \label{eqn:avg-HK-G}
F(g_1, g_2; t) := \int_G K_G(g_1,g g_2 g^{-1}; t) \, dg = \sum_{\lambda \in P^+} \chi_\lambda(g_1) \chi_\lambda(g_2^{-1}) e^{-t(|\lambda + \rho|^2 - |\rho|^2)/2}. \end{equation}
The function $F$ is clearly conjugation-invariant in both $g_1$ and $g_2$, so without loss of generality we may take $g_1 = e^{x},$ $g_2 = e^{y}$ with $x, y \in \tot$.  Using the Weyl character formula (\ref{eqn:weyl_char}) for $\chi_\lambda$, followed by the Poisson summation formula, we have:
\begin{align}
\nonumber F(g_1, g_2; t) &= \frac{e^{t|\rho|^2/2}}{\widehat{\Delta}_\gog(e^{x}) \widehat{\Delta}_\gog(e^{-y})} \sum_{w \in W} \epsilon(w) \sum_{\lambda \in P} e^{i \langle \lambda, x - w(y) \rangle} e^{-t|\lambda|^2/2} \\
\label{eqn:F-expansion} &= \frac{e^{t|\rho|^2/2}\nu^{1/2} (2\pi/t)^{r/2}}{\widehat{\Delta}_\gog(e^{x}) \widehat{\Delta}_\gog(e^{-y})} \sum_{w \in W} \epsilon(w) \sum_{\beta \in 2\pi Q^\vee} e^{-\frac{1}{2t}|x - w(y) + \beta|^2},
\end{align}
where $P$ is the weight lattice, $Q^\vee$ is the coroot lattice, and $\nu$ is the index of $Q^\vee$ in $P$.\footnote{The formula (\ref{eqn:F-expansion}) was derived by Frenkel as part of his affine generalization of the Harish-Chandra integral and Kirillov orbit method \cite[\textsection4.3]{IF}.  The last line of (\ref{eqn:F-expansion}) is closely related to the numerator of the Weyl--Kac character formula for affine Lie algebras.} Now we take a limit:
\begin{equation} \label{eqn:F-limit}
\lim_{\varepsilon \to 0} \varepsilon^{\dim G} F(e^{\varepsilon x}, e^{\varepsilon y}, \varepsilon^2 t) = \frac{\nu^{1/2} (2\pi/t)^{r/2}}{{\Delta}_\gog(x) {\Delta}_\gog(y)} \sum_{w \in W} \epsilon(w) e^{-\frac{1}{2t}|x - w(y)|^2},
\end{equation}
since $\dim G = 2|\Phi^+| + r$, and only the $\beta = 0$ term of (\ref{eqn:F-expansion}) contributes to leading order.

It remains to relate $F(g_1, g_2; t)$ to the integral $I(x,y;t)$.  We achieve this by relating the heat kernel $K_G$ on $G$ to the heat kernel $K$ on $\gog$, as defined above in (\ref{eqn:g-heat-kernel}).  First observe that for $\varepsilon > 0$, $$K_G(e^{x},e^{y} ; t) = \lim_{\eta \to 0} F(e^{x - y}, e^{\varepsilon \eta \rho} ; t).$$  We thus have:
\begin{equation} \label{eqn:KG-limit}
\lim_{\varepsilon \to 0} \varepsilon^{\dim G} K_G(e^{\varepsilon x},e^{\varepsilon y} ; \varepsilon^2 t) = \lim_{\varepsilon \to 0} \lim_{\eta \to 0} \varepsilon^{\dim G} F(e^{\varepsilon(x - y)}, e^{\varepsilon \eta \rho} ; \varepsilon^2 t).
\end{equation}
Now fix $t > 0$ and $x \ne y \in \tot$, and consider the function
\begin{align*}
f(\varepsilon, \eta) &:= \varepsilon^{\dim G} F(e^{\varepsilon(x - y)}, e^{\varepsilon \eta \rho} ; \varepsilon^2 t)
\\ &= \varepsilon^{\dim G} \frac{e^{\varepsilon^2 t|\rho|^2/2} \,  \nu^{1/2} \, (2\pi/\varepsilon^2 t)^{r/2}}{\widehat{\Delta}_\gog(e^{\varepsilon(x-y)}) \widehat{\Delta}_\gog(e^{-\varepsilon\eta\rho})} \sum_{w \in W} \epsilon(w) \sum_{\beta \in 2\pi Q^\vee} e^{-\frac{1}{2\varepsilon^2 t}|\varepsilon(x -y) - \varepsilon \eta w(\rho) + \beta|^2}.
\end{align*}
The denominator $\widehat{\Delta}_\gog(e^{\varepsilon(x-y)}) \widehat{\Delta}_\gog(e^{-\varepsilon\eta\rho})$ vanishes to order $2 |\Phi^+|$ as $\varepsilon \to 0$ and to order $|\Phi^+|$ as $\eta \to 0$, but these ostensible singularities are canceled respectively by powers of $\varepsilon$ in the numerator and by the vanishing of the sum over the Weyl group, so that in fact $f$ is continuous in the $(\varepsilon, \eta)$ plane. Therefore we may exchange the order of limits in (\ref{eqn:KG-limit}) to obtain:
\begin{align}
\nonumber \lim_{\varepsilon \to 0} \varepsilon^{\dim G} K_G(e^{\varepsilon x},e^{\varepsilon y} ; \varepsilon^2 t) &= \frac{ \nu^{1/2} (2 \pi)^{r/2}}{t^{\dim G /2} \Delta_\gog(\rho)} e^{-\frac{1}{2t} |x - y|^2} \\
 \label{eqn:KG-limit-final} &= \frac{\nu^{1/2} (2\pi)^{(\dim \gog + r)/2}}{\Delta_\gog(\rho)}K(x, y; t).
\end{align}
Next we write:
\begin{align*}
\lim_{\varepsilon \to 0} \varepsilon^{\dim G} F(e^{\varepsilon x}, e^{\varepsilon y}, \varepsilon^2 t) &= \lim_{\varepsilon \to 0} \varepsilon^{\dim G} \int_G K_G(e^{\varepsilon x}, g e^{\varepsilon y} g^{-1}, \varepsilon^2 t) \, dg \\
&=  \int_G \, \lim_{\varepsilon \to 0} \varepsilon^{\dim G} K_G(e^{\varepsilon x}, g e^{\varepsilon y} g^{-1}, \varepsilon^2 t) \, dg,
\end{align*}
where we have used bounded convergence to pull the limit under the integral.  Then (\ref{eqn:KG-limit-final}) gives:
\begin{align*}
\lim_{\varepsilon \to 0} \varepsilon^{\dim G} F(e^{\varepsilon x},e^{\varepsilon y} ; \varepsilon^2 t) &= \frac{\nu^{1/2} (2\pi)^{(\dim \gog + r)/2}}{\Delta_\gog(\rho)} \int_G K(\mathrm{Ad}_g x, y; t) \, dg \\
&= \frac{\nu^{1/2} (2\pi)^{r/2}}{t^{\dim \gog / 2} \Delta_\gog(\rho)} e^{-\frac{1}{2t}( |x|^2 + |y|^2 )} I(x, y; t).
\end{align*}
Comparing to (\ref{eqn:F-limit}), evaluating at $t = 1$, and using the observation of Remark \ref{rem:O_rho-vol} that $\Delta_\gog(\rho) = [\! [ \Delta_\gog, \Delta_\gog ] \! ]/|W|$, we arrive at the Harish-Chandra formula (\ref{eqn:hc}) for $x \ne y \in \tot$. Finally, by analytic continuation we can take $x, y \in \tot_\C$ and remove the assumption $x \ne y$, completing the proof.

\subsection{Harmonic analysis proof}
\label{subsec:harmonic-proof}

The Harish-Chandra formula can also be derived from an identity due to Rossmann for the Fourier transform on a semisimple Lie algebra \cite{Rossmann}.  Here we only consider the case where $\gog$ is compact, though Rossmann's formula holds in greater generality.  Let $M : C^\infty(\gog) \to C^\infty(\tot)$ be the operator that sends $f \in C^\infty(\gog)$ to the function $\phi_f$ defined in (\ref{eqn:phif-def}), that is,
$$ M f(x) := \Delta_\gog(x) \int_G f (\mathrm{Ad}_g x) \, dg, \qquad x \in \tot.$$
Let $\mathcal{F}_\gog$ and $\mathcal{F}_\tot$ be the Fourier transform operators on $\gog$ and $\tot$:
\begin{align*}
\mathcal{F}_\gog \varphi (\xi) &:= \left( \frac{1}{2\pi} \right)^{\dim \gog / 2} \int_\gog e^{-i \langle x, \xi \rangle} \varphi(x) \, dx,  \qquad \xi \in \gog, \\
\mathcal{F}_\tot \psi (\xi) &:= \left( \frac{1}{2\pi} \right)^{r / 2} \int_\tot e^{-i \langle x, \xi \rangle} \psi(x) \, dx,  \qquad \quad \ \, \xi \in \tot,
\end{align*}
where $r$ is the rank of $\gog$, and $\varphi$ and $\psi$ are appropriate functions (say, $L^2$ or Schwartz class) on $\gog$ and $\tot$ respectively.  We will use the following version of Rossmann's formula:

\begin{proposition}[Rossmann] \label{prop:rossmann}
\begin{equation} \label{eqn:rossmann}
M \circ \mathcal{F}_\gog = e^{i \pi (\dim \gog - r)/ 4} \mathcal{F}_\tot \circ M
\end{equation}
as operators on $C_c^\infty(\gog)$.
\end{proposition}

We will give an elegant proof of (\ref{eqn:rossmann}) due to Vergne \cite{Vergne-Rossmann}.  It relies on a classical fact relating the Fourier transform to the time-evolution operator for the quantum harmonic oscillator.

\begin{proof}[Proof of Proposition \ref{prop:rossmann}]
We start by recalling some basic facts about the Hermite functions, which are defined by
$$ \psi_k(x) := (-1)^k (2^k k! \sqrt{\pi})^{-1/2} e^{x^2 / 2} \frac{d^k}{dx^k} e^{-x^2}, \qquad x \in \R,$$
for $k = 0, 1, 2, \hdots$.  The Hermite functions are an orthonormal basis of $L^2(\R)$ and are joint eigenfunctions of both the Fourier transform and the Schr\"odinger operator $-d^2/dx^2 + x^2$.  In particular, the $k^\mathrm{th}$ Hermite function satisfies the differential equation
$$ \left(- \frac{d^2}{dx^2} + x^2 \right) \psi_k(x) = (2k + 1) \psi_k(x) $$
as well as the integral equation
$$ \frac{1}{\sqrt{2\pi}} \int_{-\infty}^\infty e^{-ixy} \psi_k(y) \, dy = (-i)^k \psi_k(x). $$

If $(x_1, \hdots, x_n)$ are coordinates with respect to an orthonormal basis on a Euclidean space $V \cong \R^n$, the multivariate Hermite functions
$$ \psi_{\alpha}(x) := \prod_{j = 1}^n \psi_{\alpha_j} (x_j)$$
are an orthonormal basis of $L^2(V)$, as $\alpha = (\alpha_1, \hdots, \alpha_n)$ runs over multi-indices.  By reducing to the one-dimensional case, it is easily verified that
$$\mathcal{F}_V \psi_\alpha(x) := \left( \frac{1}{2\pi} \right)^{n/2} \int_{V} e^{-i(x,y)} \psi_\alpha(y) \, dy = (-i)^{|\alpha|} \psi_\alpha(x),$$
where $(\cdot, \cdot)$ is the inner product on $V$, and $dy$ is the associated Lebesgue measure.  Defining
$$H_V := \sum_{j=1}^n \left( - \frac{\partial^2}{\partial x_j^2} + x_j^2 \right),$$
one also finds
$$ H_V \psi_\alpha(x) = (2 |\alpha| + n) \psi_\alpha(x).$$
We can extend $H_V$ to a self-adjoint operator on $L^2(V)$ via its action on the basis $\psi_\alpha$.  Then for $t \in \R$, we can define a unitary operator $e^{-i t H_V }$ by setting
$$e^{-i t H_V } \psi_\alpha = e^{-it(2 |\alpha| + 1)} \psi_\alpha.$$
Comparing the eigenvalues of $H_V$ and $\mathcal{F}_V$ on $\psi_\alpha$, we find 
\begin{equation} \label{eqn:harmonic-fourier}
e^{-i \pi H_V / 4} = e^{- i \pi n / 4} \mathcal{F}_V.
\end{equation}

The identity (\ref{eqn:harmonic-fourier}) is well known to physicists and has an interesting physical interpretation.  The operator $H_V$ can be viewed as the Hamiltonian for a quantum harmonic oscillator on $V$.  Then $e^{-it H_V}$ is the time evolution operator for this quantum-mechanical system, which maps the initial wavefunction to the wavefunction at time $t$.  In this context, (\ref{eqn:harmonic-fourier}) says that the wavefunction of the oscillator at $t = \pi/4$ is equal to the Fourier transform of the wavefunction at $t = 0$, up to a (physically meaningless) phase shift.

When $V = \gog$ or $\tot$ with the invariant inner product $\langle \cdot, \cdot \rangle$, we have
$$H_\gog = - \omega(\partial) + |x|^2, \qquad H_\tot = - \bar \omega(\partial) + |x|^2,$$
where $\omega(\partial)$ and $\bar \omega(\partial)$ are the respective Laplacian operators associated to the inner product.  Using the formula (\ref{eqn:radial-laplacian}) for the radial part of the Laplacian, we find immediately that
\begin{equation*} 
M H_\gog \varphi (x) = H_\tot M \varphi (x)
\end{equation*}
for all $\varphi \in C^\infty(\gog)$ and $x \in \tot$ with $\Delta_\gog(x) \not = 0$, and therefore for all $x \in \tot$ by continuity.  Now taking $\varphi$ compactly supported and expanding in the basis of Hermite functions, we find
$$M e^{-i \pi H_\gog / 4} \varphi =  e^{-i \pi H_\tot / 4} M \varphi,$$
and comparing to (\ref{eqn:harmonic-fourier}), we obtain
\begin{align*}
M \mathcal{F}_\gog \varphi &= e^{i \pi \dim \gog / 4} M e^{-i \pi H_\gog / 4} \varphi = e^{i \pi \dim \gog / 4} e^{-i \pi H_\tot / 4} M \varphi \\& = e^{i \pi (\dim \gog - r)/ 4} \mathcal{F}_\tot M \varphi
\end{align*}
as desired.
\end{proof}

The Harish-Chandra formula (\ref{eqn:hc}) now follows quite directly by restricting Rossmann's formula (\ref{eqn:rossmann}) to the space of $\mathrm{Ad}$-invariant test functions.  The Lebesgue measure on $\gog$ has a well-known ``polar coordinates'' decomposition: for $f$ a continuous integrable function on $\gog$, we have
\begin{equation} \label{eqn:alg-lebesgue-polar}
 \int_\gog f(x) \, dx = c \int_\tot \Delta_\gog(y)^2 \int_G f(\mathrm{Ad}_g y) \, dg \, dy,
 \end{equation}
where $c$ is a constant; see e.g.~\cite[ch.~1, \textsection5]{SH}.  For an $\mathrm{Ad}$-invariant test function $\varphi \in C_c^\infty(\gog)$, we can then write the left-hand side of (\ref{eqn:rossmann}) as:
\begin{align}
\nonumber
M \mathcal{F}_\gog \varphi(x) &= \left( \frac{1}{2\pi} \right)^{\dim \gog / 2} \Delta_\gog(x) \int_G \int_\gog e^{-i \langle \mathrm{Ad}_g x, y \rangle} \varphi(y) \, dy \, dg \\
\nonumber &= \left( \frac{1}{2\pi} \right)^{\dim \gog / 2} \Delta_\gog(x) \int_\gog \mathcal{H}(-ix,y) \varphi(y) \, dy \\
 \label{eqn:ross-leftside}
&= c' \int_\tot \Delta_\gog(x) \Delta_\gog(y)^2 \, \mathcal{H}(-ix,y) \varphi(y) \, dy
\end{align}
for some constant $c'$.

Now we rewrite the right-hand side of (\ref{eqn:rossmann}).  Observe that the invariance of $\varphi$ implies $M \varphi = \Delta_\gog \varphi$, and $\varphi(w(y)) = \varphi(y)$ for $y \in \tot$ and $w \in W$.  Thus we can write the right-hand side of (\ref{eqn:rossmann}) as:
\begin{align} 
\nonumber
e^{i \pi (\dim \gog - r)/ 4} \mathcal{F}_\tot M \varphi(x) &= e^{i \pi (\dim \gog - r)/ 4}  \left( \frac{1}{2\pi} \right)^{r / 2} \int_\tot e^{-i \langle x, y \rangle} \Delta_\gog(y) \varphi(y) \, dy \\
\nonumber &= c'' \int_\tot e^{-i \langle x, y \rangle} \Delta_\gog(y) \sum_{w \in W} \varphi(w(y)) \, dy \\
\label{eqn:ross-rightside} &= c'' \int_\tot \Delta_\gog(y) \sum_{w \in W} \epsilon(w) \, e^{-i \langle x, w(y) \rangle}  \varphi(y) \, dy
\end{align}
for some constant $c''$, where in the last line we have used the invariance of $\varphi$ and the skewness of $\Delta_\gog$.

Observing that the integrands in (\ref{eqn:ross-leftside}) and (\ref{eqn:ross-rightside}) are $W$-invariant functions of $y \in \tot$ and that the restriction of $\varphi$ to $\tot$ is an arbitrary $W$-invariant test function in $C_c^\infty(\tot)$, we conclude that the integrands must be equal, up to normalization.  After some manipulations, we obtain
$$ \Delta_\gog(x) \Delta_\gog(y) \, \mathcal{H}(ix,y) = C \sum_{w \in W} \epsilon(w) \, e^{i \langle x, w(y) \rangle}$$
for $x, y \in \tot$ and some constant $C$.  The proof then concludes by analytically continuing to $x, y \in \tot_\C$ and determining the constant $C$ as in the preceding sections.

\subsection{Further proofs of the HCIZ formula}
\label{subsec:further-hciz-proofs}

Finally, we give two additional proofs of the HCIZ formula (\ref{eqn:hciz}), which make use of facts that are somewhat specific to the unitary case.  The first is due to Balantekin \cite{Balantekin} and uses a character expansion derived by Itzykson and Zuber \cite{IZ}.  The second is an inductive proof that relies on a reformulation of (\ref{eqn:hciz}) as an integral over a convex polytope.  It is a synthesis of two closely related derivations due to Shatashvili \cite[\textsection3]{Sha} and to Faraut \cite[\textsection2.2]{Faraut}.\footnote{Faraut in turn credits the main idea of his proof to much earlier work of Gel'fand and Naimark \cite[ch.~II, \textsection9.3]{GN}, where they used a similar argument to derive a determinantal formula for the spherical functions on the symmetric space $\GL(N, \C) / \U(N, \C)$.}

\subsubsection{Character expansion and Cauchy--Binet formula}

In the unitary case, we can obtain a character expansion for the HCIZ integral
\begin{equation} \label{eqn:hciz-charexp}
I(A, B) := \int_{\U(N)} e^{\mathrm{tr} (AUBU^\dagger)} dU
\end{equation}
via a more elementary method than the heat kernel analysis used in Section \ref{subsec:char-exp-proof}.  We start by recalling some basic results from representation theory, for which we refer the reader to \cite{EtingofRT, FH-RT}.

The irreducible polynomial representations of the general linear group $\GL(N, \C)$ are labelled by {\it Young diagrams}, which we think of as vectors $\lambda = (\lambda_1, \hdots, \lambda_N)$ with weakly decreasing nonnegative integer coordinates.  If $(x_1, \hdots, x_N)$ are the eigenvalues of $X \in \GL(N, \C)$, the character of the $\lambda$-representation is given by $\chi_\lambda(X) = s_\lambda(x_1, \hdots, x_N)$, where $s_\lambda$ is the {\it Schur polynomial} defined by
$$s_\lambda(x_1, \hdots, x_N) := \frac{ \det \big[ x_i^{\lambda_j + j -1} \big]_{i,j=1}^N }{\Delta(x)}.$$
Write $| \lambda | := \lambda_1 + \cdots + \lambda_N$ and $\lambda' := ( \lambda_N + N - 1,  \, \lambda_{N-1} + N - 2, \, \hdots \, , \, \lambda_1).$  The Frobenius formula, a fundamental identity in the representation theory of $\GL(N,\C)$, gives a character expansion for powers of the trace of $X$:
\begin{equation} \label{eqn:frobenius}
\mathrm{tr}(X)^n = \sum_{\substack{ \lambda \\ | \lambda | = n} } \frac{n ! \, \Delta( \lambda')}{\prod_{j=1}^N (\lambda_j + j -1)! } \, \chi_\lambda(X).
\end{equation}

We will use (\ref{eqn:frobenius}) to derive a character expansion for the HCIZ integral.  Expanding the exponential in (\ref{eqn:hciz-charexp}) as a power series and using bounded convergence to pull the summation out of the integral, we can write
$$ I(A,B) = \sum_{n=0}^\infty \frac{1}{n!} \int_{U(N)} \mathrm{tr} (AUBU^\dagger)^n \, dU.$$
Applying (\ref{eqn:frobenius}) to the integrand and again using bounded convergence, we obtain
$$ I(A,B) = \sum_{n=0}^\infty \sum_{|\lambda| = n} \frac{\Delta( \lambda')}{\prod_{j=1}^N (\lambda_j + j -1)! } \int_{U(N)} \chi_\lambda (AUBU^\dagger) \, dU,$$
and the Schur orthogonality relations for the characters $\chi_\lambda$ give
$$ \int_{U(N)} \chi_\lambda (AUBU^\dagger) \, dU = \int_{U(N)} \chi_\lambda (AU) \chi_\lambda (BU^\dagger) \, dU = \frac{\chi_\lambda (A) \, \chi_\lambda (B)}{d_\lambda},$$ where $d_\lambda := \Delta(\lambda') / \prod_{j = 1}^{N-1} j!$ is the dimension of the $\lambda$-representation of $\GL(N, \C)$.  Putting together all of the above, we find:
\begin{align}
\nonumber
I(A, B) &= \sum_{n=0}^\infty \sum_{|\lambda| = n} \left( \prod_{j=1}^N \frac{(j-1)!}{(\lambda_j + j -1)!} \right) \chi_\lambda (A) \, \chi_\lambda (B) \\
\nonumber
&= \sum_{n=0}^\infty \sum_{|\lambda| = n} \left(\prod_{j=1}^N \frac{(j-1)!}{(\lambda_j + j -1)! } \right) s_\lambda (a_1, \hdots, a_N) \, s_\lambda (b_1, \hdots, b_N) \\
\label{eqn:hciz-charseries}
&= \sum_{n=0}^\infty \sum_{|\lambda| = n} \left(\prod_{j=1}^N \frac{(j-1)!}{(\lambda_j + j -1)! } \right) \frac{\det \big[ a_i^{\lambda_j + j -1} \big]_{i,j=1}^n \det \big[ b_i^{\lambda_j + j -1} \big]_{i,j=1}^n}{\Delta(A) \Delta(B)},
\end{align}
where $a_1 > \hdots > a_N$, $b_1 > \hdots > b_N$ are the eigenvalues of $A$ and $B$.

The last step of the proof is to re-sum the series in (\ref{eqn:hciz-charseries}) using the following generalization of the Cauchy--Binet formula in linear algebra (see \cite[Appendix B]{Balantekin}).  For an analytic function $$f(z) = f_0 + f_1 z + f_2 z^2 + \cdots,$$ we have the identity:
\begin{equation} \label{eqn:gen-cauchy-binet}
\det \big [ f(x_i y_j) \big]_{i,j = 1}^N = \sum_{k_1 > \cdots > k_N} f_{k_1} f_{k_2} \cdots f_{k_N} \det \big [ x_i^{k_j} \big]_{i,j = 1}^N \det \big [ y_i^{k_j} \big]_{i,j = 1}^N.
\end{equation}
Applying (\ref{eqn:gen-cauchy-binet}) to (\ref{eqn:hciz-charseries}) with $f(z) = e^z$, we obtain the HCIZ formula (\ref{eqn:hciz}).

\subsubsection{Inductive proof via Gel'fand--Tsetlin polytopes}

This last proof begins with a reformulation of the HCIZ integral as an integral over a {\it Gel'fand--Tsetlin polytope}, a type of convex polytope that plays an important role in the representation theory of the unitary group.  The linear inequalities that define these polytopes have a recursive structure, which allows us to prove the integral formula (\ref{eqn:hciz}) by induction.  Although the proof exploits a very specific relationship between Gel'fand--Tsetlin polytopes and coadjoint orbits of $\U(N)$, there are analogous polytopes that play the same role for other compact groups \cite{Defosseux, Littelmann-CCP}.  In principle, therefore, one could give similar inductive proofs of the integral formulae for the other classical groups derived below in Section \ref{ch:specific-integrals}. However, it seems unlikely that such an approach could be used to prove the general Harish-Chandra formula (\ref{eqn:hc}), since for the exceptional groups there is no way to induct on the rank.

We start by quickly recalling some definitions and results related to Gel'fand--Tsetlin polytopes.  A {\it Rayleigh triangle} is a triangular array of real numbers $R = (R_{i,j})_{1 \le i \le j \le N}$ satisfying the {\it interlacing relations}
\begin{equation} \label{eqn:interlacing-relns}
R_{i-1,j} \ge R_{i,j-1} \ge R_{i,j}, \qquad 1 < i \le j \le N.
\end{equation}
The vector $R_{\bullet,j} := (R_{1,j},\, \hdots,\, R_{j,j}) \in \R^j$ is called the $j^\mathrm{th}$ row of $R$, and $R_{\bullet,N} \in \R^N$ is called the top row.  If we fix the top row of $R$ by setting $R_{\bullet,N} = \lambda$ for some $\lambda \in \R^N$ with $\lambda_1 \ge \cdots \ge \lambda_N$, we can regard the remaining numbers $R_{i,j}$, $j \le N-1$ as coordinates of a point in $\R^{N(N-1)/2}$.

The {\it Gel'fand--Tsetlin polytope} $GT(\lambda)$ is the space of all Rayleigh triangles with top row $\lambda$.  Concretely, it is the convex polytope in $\R^{N(N-1)/2}$ whose facets are cut out by the interlacing inequalities (\ref{eqn:interlacing-relns}) after fixing $R_{\bullet,N} = \lambda$.

A classical result in linear algebra, the Cauchy--Rayleigh Interlacing Theorem, describes a very natural mapping from Hermitian matrices to Rayleigh triangles.  Let $\mathrm{Her}(N)$ denote the space of $N$-by-$N$ Hermitian matrices.  Given $X \in \mathrm{Her}(N)$ and $k \le N$, write $X[k] \in \mathrm{Her}(k)$ for the $k^\mathrm{th}$ leading submatrix of $X$ (that is, the $k$-by-$k$ submatrix in the upper left corner of $X$).

\begin{theorem}[Cauchy--Rayleigh Interlacing Theorem] \label{thm:rayleigh}
Let $X \in \mathrm{Her}(N)$, and let $\lambda_{1,k} \ge \hdots \ge \lambda_{k,k}$ be the eigenvalues of the leading submatrix $X[k]$, for $1 \le k \le N$.  Then $(\lambda_{j,k})_{1 \le j \le k \le N}$ is a Rayleigh triangle.
\end{theorem}

Write $\mathcal{R} : X \mapsto (\lambda_{j,k})_{1 \le j \le k \le N}$ for the map that takes a Hermitian matrix to the ordered array of eigenvalues of its leading submatrices, and write $\mathcal{O}_\Lambda \subset \mathrm{Her}(N)$ for the unitary conjugation orbit of $\Lambda := \mathrm{diag}(\lambda)$.  Then Rayleigh's Theorem can be rephrased as saying that $\mathcal{R}$ maps $\mathcal{O}_\Lambda$ into $GT(\lambda)$.  A remarkable theorem of Baryshnikov \cite{Barysh} states that in fact this map is {\it measure preserving}.

\begin{theorem}[Baryshnikov] \label{thm:R-pushfwd}
Under $\mathcal{R}$, the invariant probability measure $\mu_\Lambda$ on $\mathcal{O}_\Lambda$ pushes forward to the uniform probability measure on $GT(\lambda)$.  That is,
\begin{equation} \label{eqn:mu-pushfwd}
\mathcal{R}_* \mu_\Lambda = \frac{1}{\mathrm{Vol}(GT(\lambda))} dR,
\end{equation}
where $dR$ is Lebesgue measure on $GT(\lambda)$,\footnote{Technically $dR$ indicates the restriction to $GT(\lambda)$ of the Lebesgue measure on the minimal affine subspace of $\R^{N(N-1)/2}$ containing $GT(\lambda)$, since if not all $\lambda_j$ are distinct then $GT(\lambda)$ has dimension less than $N(N-1)/2$.} and
\begin{equation} \label{eqn:GT-vol}
\mathrm{Vol}(GT(\lambda)) = \prod_{\substack{1 \le i < j \le N \\ \lambda_i \ne \lambda_j}} \frac{\lambda_i - \lambda_j}{j-i}
\end{equation}
is the Lebesgue volume of $GT(\lambda)$. 
\end{theorem}

See \cite[Proposition 4.6 and Lemma 1.12]{Barysh} as well as \cite{Neretin, Z2} for proofs and further discussion of Theorem \ref{thm:R-pushfwd}.  Observe that when all $\lambda_j$ are distinct, (\ref{eqn:GT-vol}) reads:
\begin{equation} \label{eqn:GT-vol-reg}
\mathrm{Vol}(GT(\lambda)) = \frac{\Delta(\lambda)} { \prod_{p=1}^{N-1} p! }.
\end{equation}
 
The {\it type} of a Rayleigh triangle $R$ is the vector
$$\mathrm{type}(R) := \bigg( R_{1,1},\, R_{1,2} + R_{2,2} - R_{1,1},\, \hdots,\, \sum_{i=1}^N R_{i,n} - \sum_{j=1}^{N-1} R_{j,N-1} \bigg) \in \R^N.$$
Observe that for $X \in \mathrm{Her}(N)$, the $k^\mathrm{th}$ coordinate of $\mathrm{type}(\mathcal{R}(X))$ is
$$\mathrm{tr}(X[k])-\mathrm{tr}(X[k-1])=X_{kk},$$
so that $\mathrm{type}(\mathcal{R}(X))$ is just the diagonal of $X$.

With the above definitions in hand, we can use Theorem \ref{thm:R-pushfwd} to rewrite the HCIZ integral as an integral over a Gel'fand--Tsetlin polytope.  Taking $A = \mathrm{diag}(a)$, $B = \mathrm{diag}(b)$ for $a, b \in \R^N$ with $a_1 > \hdots > a_N$, $b_1 > \hdots > b_N$, we have
\begin{align}
\nonumber
I(A,B) &= \int_{\U(N)} e^{\mathrm{tr(AUBU^\dagger)}} dU = \int_{\mathcal{O}_B} e^{\sum_{j = 1}^N a_j X_{jj}} d\mu_B(X) \\
\label{eqn:hciz-gt}
&= \frac{1}{\mathrm{Vol}(GT(b))} \int_{GT(b)} e^{a \cdot \mathrm{type}(R)} dR,
\end{align}
where $\mu_B$ is the invariant probability measure on the orbit $\mathcal{O}_B$, and $dR$ is Lebesgue measure on $GT(b)$.

We now prove the HCIZ formula (\ref{eqn:hciz}) by induction on $N$.  In the base case, $N = 1$, the matrices $A$, $B$ and $U$ are just complex numbers, and there is nothing to prove.  Assume now that (\ref{eqn:hciz}) holds for $\U(N-1)$; we will prove that it holds for $\U(N)$.

Write $y := R_{\bullet,N-1} \in \R^{N-1}$ for the $(N-1)^\mathrm{th}$ row of $R \in GT(b)$, and write $\Delta_{N-1}(y) := \prod_{1 \le i < j \le N-1} (y_i - y_j)$.  Let $a' := (a_1, \hdots, a_{N-1}) \in \R^{N-1}$ be the projection of $a$ onto the first $N-1$ coordinates, and set $A' := \mathrm{diag}(a')$, $Y := \mathrm{diag}(y)$.  Integrating individually over each entry $y_j$, we can rewrite (\ref{eqn:hciz-gt}) to get:
\begin{align*}
I(A,B) &= \frac{1}{\mathrm{Vol}(GT(b))} \int_{b_2}^{b_1} \cdots \int_{b_N}^{b_{N-1}} e^{a_N (\mathrm{tr}(B) - \sum y_j)} \int_{GT(y)} e^{a' \cdot \mathrm{type}(Q)} dQ \, dy_{N-1} \hdots dy_1 \\
&= \frac{1}{\mathrm{Vol}(GT(b))} \int_{b_2}^{b_1} \cdots \int_{b_N}^{b_{N-1}} e^{a_N (\mathrm{tr}(B) - \sum y_j)} \mathrm{Vol}(GT(y)) \int_{\U(N-1)} e^{\mathrm{tr}(A' U Y U^\dagger)} dU \, dy_{N-1} \hdots dy_1 \\
&= \frac{(N-1)!}{\Delta(b)} \int_{b_2}^{b_1} \cdots \int_{b_N}^{b_{N-1}} e^{a_N (\mathrm{tr}(B) - \sum y_j)} \Delta_{N-1}(y) \int_{\U(N-1)} e^{\mathrm{tr}(A' U Y U^\dagger)} dU \, dy_{N-1} \hdots dy_1,
\end{align*}
where $dQ$ is Lebesgue measure on $GT(y)$.  Using the inductive hypothesis, this last equation becomes
$$ I(A,B) =  \frac{\prod_{p=1}^{N-1} p!}{\Delta(b) \Delta_{N-1}(a)} \int_{b_2}^{b_1} \cdots \int_{b_N}^{b_{N-1}} e^{a_N (\mathrm{tr}(B) - \sum y_j)} \det \big[ e^{a_i y_j} \big]_{i,j=1}^{N-1} \, dy_{N-1} \hdots dy_1.$$
The remainder of the proof is a direct calculation: expanding the determinant as a sum over permutations and integrating term by term, we obtain the integral formula (\ref{eqn:hciz}).

\section{Integral formulae for specific groups}
\label{ch:specific-integrals}

This section provides detailed derivations of the specific forms of the Harish-Chandra integral formula (\ref{eqn:hc}) for all compact classical groups.  We also discuss how Harish-Chandra's original result, which assumes that the group in question is connected and semisimple, extends naturally to a formula for integrals over arbitrary compact Lie groups.  While some of the specific integral formulae derived in this section have appeared previously, for example in \cite{PEDZ, FILZ}, here we show the complete calculations in a pedagogical style, making an effort not to omit details.  We first study integrals over the groups $\U(N)$ and $\SU(N)$, showing how the HCIZ formula (\ref{eqn:hciz}) can be derived from (\ref{eqn:hc}).  Next we study integrals over $\SO(2N)$ and $\O(2N)$, $\SO(2N+1)$ and $\O(2N+1)$, $\mathrm{USp}(N)$, and finally arbitrary compact groups.

\subsection{The HCIZ integral: the cases $G=\U(N)$ and $G=\SU(N)$}
\label{sec:un_sun}

We start by showing how we can recover the HCIZ formula (\ref{eqn:hciz}) from (\ref{eqn:hc}).  The computation illustrates how to specialize (\ref{eqn:hc}) to the case of a particular group $G$, and our calculations of formulae for other classical groups in later sections will follow the same basic procedure.
 
The reader may have already remarked that Theorem \ref{thm:hc} assumes that $G$ is semisimple, while the group $\U(N)$ is not.  This turns out not to matter: Theorem \ref{thm:hc} actually allows us to compute analogous integrals over arbitrary compact $G$, which may be neither semisimple nor connected, by reducing to an integral over a connected semisimple group.  We discuss this in detail in Section \ref{sec:arbitrary_cpt} below, but for now we simply note that the formula (\ref{eqn:hc}) still works as written for $\U(N)$.

The derivation of (\ref{eqn:hciz}) from (\ref{eqn:hc}) goes as follows.  Let $G = \U(N)$.  Then $\mathfrak{g} = \mathfrak{u}(N)$ consists of $N$-by-$N$ skew-Hermitian matrices, and its complexification $\gog_\C = \mathfrak{gl}(N,\C)$ consists of all $N$-by-$N$ complex matrices.  Note that a skew-Hermitian matrix is just a Hermitian matrix multiplied by $i$.  For the Cartan subalgebra $\tot$, we can choose any maximal subspace of pairwise commuting matrices in $\gog$, so we take $\tot$ to be the real vector space of all $N$-by-$N$ diagonal matrices with purely imaginary entries, which we may identify with $\R^N$.  Then $\tot_\C \cong \C^N$ is the space of $N$-by-$N$ complex diagonal matrices.  We also take $\langle x, y \rangle = \mathrm{tr}(x^\dagger y)$, the Hilbert--Schmidt inner product, so that the matrices $e_j := i E_{jj}$, $1 \le j \le N$ are an orthonormal basis of $\tot$, where $E_{jj}$ is the matrix with a single 1 in the $j^{\textrm{th}}$ position on the diagonal and zeros everywhere else.

Our first step is to write down the discriminant $\Delta_\gog$ in a more explicit form, for which we must determine the roots of $\U(N)$.  If we identify the roots with elements of $\tot$ as in Section \ref{sec:lie-defs}, then the roots correspond to diagonal matrices $\alpha \in \tot$ satisfying
\begin{equation} \label{eqn:a-roots-relation}
hx - xh = i \, \mathrm{tr}(\alpha^\dagger h) \, x 
\end{equation}
for some $N$-by-$N$ complex matrix $x$ and all diagonal matrices $h$.  The relation (\ref{eqn:a-roots-relation}) holds exactly when $\alpha = e_j - e_k$ for $j \ne k$, and $x$ is a complex multiple of $E_{jk}$, the elementary matrix with a single $1$ in the $j^{\textrm{th}}$ row and $k^{\textrm{th}}$ column and zeros everywhere else.  The matrix $E_{jk}$ spans the one-dimensional root space $\gog_\alpha \subset \gog_\C$.

These matrices $\Phi = \{ e_j - e_k \ | \ j \ne k \}$ comprise the root system $A_{N-1}$ of $\U(N)$.  For the positive roots, we can choose the matrices $e_j - e_k$ with $j < k$.  Writing $A = \sum_j a_j e_j \in \tot_\C$,\footnote{Note that in (\ref{eqn:hciz}) we took $a_j = A_{jj}$, the $j^{\mathrm{th}}$ diagonal entry of $A$, whereas the definition we have just given differs by a factor of $i$: here $a_j = iA_{jj}$. This choice is convenient because it allows us to identify $\tot$ with the elements of $\tot_\C$ that have real, rather than imaginary, coordinates. One may check that the formula (\ref{eqn:hciz}) does not depend on which of these two choices we make, since the factors of $i$ cancel. In either case we define $\Delta(A) := \prod_{j < k} (a_j - a_k)$.} we then have $$\Delta_\gog(A) = \prod_{j < k} \mathrm{tr}[ (e_j - e_k)^\dagger A ] = \prod_{j < k}(a_j - a_k) = \Delta(A),$$ so that the discriminant is the Vandermonde.

Next we consider the Weyl group $W$, which is generated by reflections in the hyperplanes orthogonal to the roots.  The hyperplane orthogonal to $e_j - e_k$ is
$$H_{jk} = \bigg\{ \ \sum_{i=1}^N a_i e_i \in \tot_\C \ \bigg | \ a_j - a_k = 0 \ \bigg \},$$
and the reflection through $H_{jk}$ transposes $e_j$ and $e_k$ while leaving the other basis vectors fixed.  Thus we find that the Weyl group of $\U(N)$ is the symmetric group $S_N$, and it acts on $\tot_\C \cong \C^N$ by permuting the coordinates.  We identify $W = S_N$ with the group $\mathrm{Perm}(N) \subset \U(N)$ of $N$-by-$N$ permutation matrices, so that $W$ acts on $\tot_\C$ by matrix conjugation.  For $w \in S_N$, $\epsilon(w)$ is equal to the sign of the permutation, which is the determinant of the corresponding permutation matrix.  Taking $x = A^\dagger$ and $y = B$, and using the fact that $|S_N| = N!$, (\ref{eqn:hc}) thus becomes \begin{equation} \label{eqn:mid_un} \Delta(A) \Delta(B) \int_{\U(N)} e^{\mathrm{tr} (AUBU^\dagger)} dU = \frac{[\![ \Delta, \Delta ]\!]}{N!} \sum_{P \in \mathrm{Perm}(N)} (\det P) e^{\mathrm{tr}(APBP^\dagger)}.\end{equation}
We observe that the sum on the right-hand side is exactly the Leibniz formula for the determinant of the matrix $\big[ e^{a_i b_j} \big]_{i,j = 1}^N$, and assuming that $a_i \ne a_j$, $b_i \ne b_j$ for $i \ne j$, we can divide through on both sides by $\Delta(A) \Delta(B)$.

It now only remains to compute $\Delta(\partial)\Delta(x) |_{x=0}$.  The Vandermonde is a homogeneous polynomial of order $N(N-1)/2$, so we can write
$$\Delta(x) = \sum_{|\beta| = \frac{N(N-1)}{2}} \pi_\beta x^\beta$$
for some coefficients $\pi_\beta$, where $\beta$ runs over multi-indices whose components sum to $N(N-1)/2$.
There is a well-known determinantal formula for the Vandermonde: $\Delta(x) = \det \big [ x_i^{N-j} \big ]_{i,j=1}^N.$  Expanding this determinant as a sum over permutations, we find that there are $N!$ nonzero coefficients $\pi_\beta$, each with $\beta = \sigma(0,...,N-1)$ for some $\sigma \in S_N$, and \begin{equation} \label{eqn:pibeta} \pi_\beta = \mathrm{sgn}(\sigma) = \pm 1. \end{equation}  Plugging this into (\ref{eqn:normconst}), we have \begin{equation} \label{eqn:un_norm} [\![ \Delta, \Delta ]\!] = \sum_{\sigma \in S_N} \left( \mathrm{sgn}(\sigma)^2 \prod_{p = 0}^{N-1} \sigma(p)! \right) = N! \prod_{p=1}^{N-1} p! = \prod_{p=1}^{N} p!\end{equation} which gives the correct normalization for the HCIZ formula (\ref{eqn:hciz}), and we are done.

For the case $G=\SU(N)$, the formula turns out to be exactly the same: \begin{equation} \label{eqn:hciz_su} \int_{\SU(N)} e^{\mathrm{tr} (AUBU^\dagger)} dU = \left( \prod_{p=1}^{N-1}p! \right) \frac{\det \big[ e^{a_i b_j}\big]_{i,j = 1}^N}{\Delta(A) \Delta(B)}.\end{equation} Since $\SU(N)$ is semisimple unlike $\U(N)$, (\ref{eqn:hciz_su}) is indeed a direct special case of (\ref{eqn:hc}) if $A$ and $B$ are taken to be {\it traceless} diagonal matrices.\footnotemark \footnotetext{The complexified Lie algebra $\mathfrak{su}(N) \otimes \mathbb{C} = \mathfrak{sl}(N, \mathbb{C})$ is the space of $N$-by-$N$ traceless matrices, among which the diagonal matrices form the Cartan subalgebra $\tot_\C$.  However, it is easily checked that (\ref{eqn:hciz_su}) holds even if $A$ and $B$ are not traceless.}

The similarity of the formulae for $\U(N)$ and $\SU(N)$ is due to the fact that these two groups have the same root system $A_{N-1}$.  The Lie algebra of $\SU(N)$ is the space $\mathfrak{su}(N)$ of traceless skew-Hermitian matrices, and its complexification $\mathfrak{sl}(N, \mathbb{C})$ is the space of traceless complex matrices.  The group $\U(N)$ can be decomposed as a semidirect product \mbox{$\U(N) \cong \SU(N) \rtimes \U(1),$} which gives a corresponding decomposition of the Lie algebra $\mathfrak{u}(N) \cong \mathfrak{su}(N) \oplus \mathfrak{u}(1)$ and its complexification $\mathfrak{gl}(N, \mathbb{C}) \cong \mathfrak{sl}(N, \mathbb{C}) \oplus \mathbb{C}$.  The factor $\C$ corresponds to the trace and lies in the center of $\mathfrak{gl}(N, \mathbb{C})$, so that it adds one dimension to the Cartan subalgebra but contributes no roots.  Thus for $\SU(N)$, just as in the unitary case, we find that $\Delta_\gog = \Delta$ and $W = S_N$. The calculation then proceeds exactly as above for $\U(N)$.

\subsection{The cases $G = \SO(2N)$ and $\O(2N)$, and integrals over covering groups}
\label{sec:so2n}

In this section, we show how Theorem \ref{thm:hc} gives an HCIZ-like formula for the special orthogonal groups $\SO(2N)$, and then use this formula to obtain analogous identities for $\mathrm{Spin}(2N)$ and $\O(2N)$.  The Lie algebra of $\SO(2N)$ is $\mathfrak{so}(2N)$, the algebra of $2N$-by-$2N$ skew-symmetric real matrices.  Its complexification is $\mathfrak{g}_\C = \mathfrak{so}(2N, \mathbb{C})$, the algebra of $2N$-by-$2N$ skew-symmetric complex matrices.  The Cartan subalgebras $\tot$ and $\tot_\C$ consist respectively of skew-symmetric real and complex matrices that are block diagonal with 2-by-2 blocks.  An orthonormal basis of $\tot$ with respect to the Hilbert--Schmidt inner product $\langle \cdot, \cdot \rangle$ is given by \begin{equation} \label{eqn:so2n_cartan_basis} e_j := \frac{1}{\sqrt{2}}(E_{2j-1\textrm{ }2j} - E_{2j\textrm{ }2j-1}), \quad j = 1,\hdots, N, \end{equation} where $E_{jk}$ is the $2N$-by-$2N$ matrix with a 1 at the intersection of row $j$ and column $k$ and zeros everywhere else.

The root system of $\SO(2N)$ is $D_N$; as positive roots we may take the $N^2 - N$ matrices $e_j \pm e_k$ for $j < k$.  The Weyl group $W$ acts on the Cartan subalgebra by permuting the eigenvalues and changing an even number of their signs.  That is, $W \cong H_{N-1} \rtimes S_N,$ where $H_{N-1}$ is the normal subgroup of $(\mathbb{Z}/2\mathbb{Z})^N$ consisting of those elements with an even number of nonzero entries.  This means that each $w \in W$ can be written uniquely in the form $w = \eta \sigma$ with $\eta \in H_{N-1}$ and $\sigma \in S_N$, and $\epsilon(w) = \mathrm{sgn}(\sigma)$. Each $w \in W$ is represented in $\SO(2N)$ by a matrix of the form $H_\eta P_\sigma$, where $P_\sigma$ represents $\sigma \in S_N$ as a block permutation matrix with 2-by-2 blocks, and $H_\eta$ represents $\eta \in H_{N-1}$ as a block-diagonal matrix with blocks equal to $$I_2 = \begin{bmatrix} 1 & 0 \\ 0 & 1 \end{bmatrix} \quad \textrm{or} \quad Q_2 = \begin{bmatrix} 0 & 1 \\ 1 & 0 \end{bmatrix},$$ where the number of $Q_2$ blocks is even.  The order of the Weyl group is $$|W| = |H_{N-1}| \cdot |S_N| = 2^{N-1} N!$$

Now take $A, B \in \tot_\C$ and write $A = \sum_{j=1}^N a_j e_j$, $B = \sum_{j=1}^N b_j e_j$. Let $\Delta(A) = \prod_{j < k} (a_j - a_k)$ be the Vandermonde determinant, and define \begin{equation} \label{eqn:Xi_def} \Xi(A) := \prod_{j < k} (a_j + a_k). \end{equation}  Then $\Delta_\gog(A) = \Xi(A) \Delta(A)$.  Plugging all of this into (\ref{eqn:hc}), we have:
\begin{multline} \label{eqn:partial_hc_so2n}
\Xi(A) \Delta(A) \Xi(B) \Delta(B) \int_{\SO(2N)} e^{\mathrm{tr}(AOBO^T)} dO \\ = \frac{[\![\Xi \Delta,\Xi \Delta]\!]}{2^{N-1} N!} \sum_{\eta \in H_{N-1}} \sum_{\sigma \in S_N} \mathrm{sgn}(\sigma) e^{\mathrm{tr}(A H_\eta P_\sigma B P_\sigma^T H_\eta^T)}.
\end{multline}

There are two steps remaining: to simplify the sum over the Weyl group, and to determine the leading constant that gives the normalization.  We first turn to simplifying the sum.  Here too there is a determinantal structure under the surface, albeit a more complicated one than in the unitary case.  Since the trace of a product of matrices is invariant under cyclic permutations of the product, we have $$e^{\mathrm{tr}(A H_\eta P_\sigma B P_\sigma^T H_\eta^T)} = e^{\mathrm{tr}(P_\sigma^T H_\eta^T A H_\eta P_\sigma B)} = e^{2 \sum_{j=1}^N \eta^{-1}(j) a_{\sigma^{-1}(j)} b_j},$$ where we consider $\eta$ as a function $\eta : \{1, \hdots, N\} \to \{\pm 1\}$.  We can write $\mathrm{sgn}(\eta) = \prod_{j=1}^N \eta(j)$. Then $H_{N-1} = \{ \eta \in (\mathbb{Z}/2\mathbb{Z})^N \ | \ \mathrm{sgn}(\eta) = 1 \}$.

Consider the map $A \mapsto H_\eta A H_\eta^T.$  The nonzero entries of $H_\eta$ are contained in $N$ 2-by-2 blocks $L_1, \hdots, L_N$ along the diagonal, with $$L_j = \begin{cases} I_2, & \eta(j) = 1 \\ Q_2, & \eta(j) = -1. \end{cases}$$  Under conjugation by $H_\eta$, each block $L_j$ sends $a_j \mapsto \eta(j) a_j$, so we have
\begin{equation} \label{eqn:hn-1_sum} \sum_{\eta \in H_{N-1}} \sum_{\sigma \in S_N} \mathrm{sgn}(\sigma) e^{\mathrm{tr}(A H_\eta P_\sigma B P_\sigma^T H_\eta^T)} = \sum_{\sigma \in S_N} \mathrm{sgn}(\sigma) \sum_{\eta \in H_{N-1}} e^{2 \sum_{j = 1}^N \eta(j) a_{\sigma(j)} b_j}.\end{equation}

We can rearrange terms to make this expression more concise, by noting that $$\sum_{\eta \in H_{N-1}} e^{2 \sum_{j = 1}^N \eta(j) a_{\sigma(j)} b_j} = \frac{1}{2}\sum_{\eta \in (\mathbb{Z}/2\mathbb{Z})^N} \left[ e^{2 \sum_{j = 1}^N \eta(j) a_{\sigma(j)} b_j} + \mathrm{sgn}(\eta) e^{2 \sum_{j = 1}^N \eta(j) a_{\sigma(j)} b_j} \right],$$
so that the sum over $H_{N-1}$ can be written as an average of two sums over $(\mathbb{Z}/2\mathbb{Z})^N$, one signed and one unsigned.  Each of these sums can be factored into a product of hyperbolic sines or cosines:
\begin{equation} \label{eqn:eta_sum} \sum_{\eta \in (\mathbb{Z}/2\mathbb{Z})^N} e^{2 \sum_{j = 1}^N \eta(j) a_{\sigma(j)} b_j} = \prod_{j = 1}^N ( e^{2a_{\sigma(j)}b_j} + e^{2a_{\sigma(j)}b_j} ) = 2^N \prod_{j = 1}^N \mathrm{cosh}(2a_{\sigma(j)} b_j), \end{equation}
\begin{equation} \label{eqn:eta_signed_sum} \sum_{\eta \in (\mathbb{Z}/2\mathbb{Z})^N} \mathrm{sgn}(\eta) e^{2 \sum_{j = 1}^N \eta(j) a_{\sigma(j)} b_j} = \prod_{j = 1}^N ( e^{2a_{\sigma(j)}b_j} - e^{2a_{\sigma(j)}b_j} ) = 2^N \prod_{j = 1}^N \mathrm{sinh}(2a_{\sigma(j)} b_j). \end{equation}

These identities finally give: \begin{multline*} \sum_{\eta \in H_{N-1}} \sum_{\sigma \in S_N} \mathrm{sgn}(\sigma) e^{\mathrm{tr}(A H_\eta P_\sigma B P_\sigma^T H_\eta^T)} \\ = \sum_{\sigma \in S_N} \mathrm{sgn}(\sigma) \frac{1}{2} \left[2^N \prod_{j = 1}^N \mathrm{cosh}(2a_{\sigma(j)} b_j) + 2^N \prod_{j = 1}^N \mathrm{sinh}(2a_{\sigma(j)} b_j)\right] \\ = 2^{N-1} \big( \det \big[ \mathrm{cosh}(2a_j b_k) \big]_{j,k = 1}^N + \det \big[ \mathrm{sinh}(2a_j b_k) \big]_{j,k = 1}^N \big). \end{multline*}

Next we turn to evaluating $[\![\Xi \Delta,\Xi \Delta]\!]$.  We will find a way of rewriting the polynomial $\Xi \Delta$ that also allows us to simplify the expression $\Xi(A) \Delta(A) \Xi(B) \Delta(B)$ that appears in (\ref{eqn:partial_hc_so2n}).  To that end, we introduce a last piece of notation.  Given $A = \sum a_j e_j \in \tot_\C$, define $$A^{(2)} := \sum a_j^2 e_j \in \tot_\C.$$  Note that $A^{(2)}$ is in general not equal to $A^2$.  Similarly, given a polynomial $p(x) = \sum_\beta c_\beta x^\beta$, denote by $p(\partial^2)$ the differential operator $$p(\partial^2) := \sum_\beta c_\beta \frac{\partial^{2|\beta|}}{\partial x^{2\beta}}.$$

Now assume $a_i \ne \pm a_j$, $b_i \ne \pm b_j$ for $i \ne j$, so that $\Delta_\gog(A), \Delta_\gog(B) \not = 0$.  We find that we can write $$\Xi(A) = \prod_{i < j} (a_i + a_j) = \prod_{i < j} \frac{a_i^2 - a_j^2}{a_i - a_j} = \frac{\Delta(A^{(2)})}{\Delta(A)}.$$  That is, the polynomial $\Xi$ is in fact a ratio of two Vandermondes, and so we have $$\Xi(A) \Delta(A) \Xi(B) \Delta(B) = \Delta(A^{(2)}) \Delta(B^{(2)}).$$  Moreover, $$\Xi(\partial) \Delta(\partial) (\Xi \Delta)(x) \big |_{x=0} = \Delta(\partial^2) \Delta(x^{(2)}) \big |_{x = 0}$$ and this is an expression that we already know how to evaluate using Lemma \ref{lem:normconst} and our calculation (\ref{eqn:un_norm}) for the unitary case.  With $\pi_\beta$ as in (\ref{eqn:pibeta}) and $\beta_0 := (0,\hdots,N-1)$, we can write \begin{equation} \label{eqn:delta_sq} \Delta(x^{(2)}) = \sum_{\sigma \in S_N} \pi_\beta x^{2 \sigma(\beta_0)}.\end{equation}  Then Lemma \ref{lem:normconst} gives $$\Delta(\partial^2) \Delta(x^{(2)}) \big |_{x = 0} = N! \prod_{p=1}^{N-1} (2p)!$$
Plugging all of the above results into (\ref{eqn:partial_hc_so2n}) we arrive at the desired formula, which is an analogue of the HCIZ integral for $\SO(2N)$:
\begin{equation} \label{eqn:hc_so2n}
\int_{\SO(2N)} e^{\mathrm{tr}(AOBO^T)} dO = \left( \prod_{p = 1}^{N-1} (2p)! \right) \frac{\det \big[\mathrm{cosh}(2a_j b_k) \big]_{j,k = 1}^N + \det \big[\mathrm{sinh}(2a_j b_k) \big]_{j,k = 1}^N} {\Delta(A^{(2)}) \Delta(B^{(2)})} .
\end{equation}

We now move on to the calculations for $\mathrm{Spin}(2N)$ and $\O(2N)$.  The group $\mathrm{Spin}(2N)$ is a double cover of $\SO(2N)$, so that it has the same Lie algebra $\mathfrak{so}(2N)$. Since the right-hand side of (\ref{eqn:hc}) depends only on the algebra, we can immediately conclude that the formula for $\SO(2N)$ applies exactly as written: \begin{equation} \label{eqn:hc_spin2n} \int_{\mathrm{Spin}(2N)} e^{\langle \mathrm{Ad}_g A, B \rangle} dg = \int_{\SO(2N)} e^{\mathrm{tr}(AOBO^T)} dO. \end{equation}  

In general, if $\tilde G$ is a connected compact covering group of $G$, then the integral formula for $G$ will hold as written for $\tilde G$ as well.  This equality is a consequence of the following proposition, which we will use again below when we consider integrals over arbitrary compact groups in Theorem \ref{thm:hc_generalization}.  Note that if $\tilde G$ is a covering group of $G$ then their Lie algebras are isomorphic.  The statement below follows from the observation that, moreover, $\tilde G$ and $G$ have the same adjoint orbits.

\begin{proposition} \label{prop:cover_integrals} Let $G$ be a compact, connected Lie group, not necessarily semisimple, with Lie algebra $\mathfrak{g}$.  Let $\tilde G$ be a compact, connected covering group of $G$, and let $f: \mathfrak{g}_\C \to \C$.  Then for $x \in \mathfrak{g}_\C$, \begin{equation} \label{eqn:cover_int} \int_{\tilde G} f(\mathrm{Ad}_{\tilde g} x) \ d\tilde g = \int_G f(\mathrm{Ad}_g x) \ dg, \end{equation} where $d\tilde g$ and $dg$ are the normalized Haar measures. \end{proposition}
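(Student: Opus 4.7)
The plan is to reduce the integral over $\tilde G$ to an integral over $G$ by pushing the Haar measure forward along the covering map, after first observing that the adjoint action on $\mathfrak{g}$ factors through $G$.

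First I would fix the covering homomorphism $\pi : \tilde G \to G$ and let $K = \ker \pi$. Because $\pi$ is a covering, $K$ is a discrete, closed, normal subgroup of $\tilde G$. A standard argument (the conjugation map $\tilde g \mapsto \tilde g k \tilde g^{-1}$ is continuous, takes values in the discrete set $K$, and equals $k$ at the identity, so it is constantly $k$ on the connected group $\tilde G$) shows that $K$ lies in the center $Z(\tilde G)$. In particular $\mathrm{Ad}_k = \mathrm{id}_{\mathfrak{g}_0}$ for every $k \in K$, so the adjoint representation of $\tilde G$ factors as $\mathrm{Ad}_{\tilde g} = \mathrm{Ad}_{\pi(\tilde g)}$ under the canonical identification of the Lie algebras of $\tilde G$ and $G$. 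Consequently, the integrand $\tilde g \mapsto f(\mathrm{Ad}_{\tilde g} x)$ is constant on fibers of $\pi$ and descends to the function $g \mapsto f(\mathrm{Ad}_g x)$ on $G$.

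Next I would identify the pushforward measure $\pi_* d\tilde g$. Since $\pi$ is a continuous surjective homomorphism between compact Hausdorff topological groups, $\pi_* d\tilde g$ is a Borel probability measure on $G$. Left-invariance follows from the corresponding property of $d\tilde g$: for any $g_0 \in G$ choose a preimage $\tilde g_0 \in \pi^{-1}(g_0)$, and for any Borel set $E \subseteq G$,
\begin{equation*}
(\pi_* d\tilde g)(g_0 E) = d\tilde g(\pi^{-1}(g_0 E)) = d\tilde g(\tilde g_0 \pi^{-1}(E)) = d\tilde g(\pi^{-1}(E)) = (\pi_* d\tilde g)(E).
\end{equation*}
By uniqueness of normalized Haar measure on the compact group $G$, $\pi_* d\tilde g = dg$.

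Combining these two observations with the change-of-variables formula gives
\begin{equation*}
\int_{\tilde G} f(\mathrm{Ad}_{\tilde g} x) \, d\tilde g = \int_{\tilde G} (f \circ \mathrm{Ad}_{(\cdot)} x) \circ \pi \, d\tilde g = \int_G f(\mathrm{Ad}_g x) \, d(\pi_* d\tilde g)(g) = \int_G f(\mathrm{Ad}_g x) \, dg,
\end{equation*}
which is the desired identity. The only nontrivial step is the factorization of $\mathrm{Ad}$ through $G$, and this is essentially a reminder that discrete normal subgroups of connected Lie groups are central; the rest is routine.
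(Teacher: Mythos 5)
Your proof is correct. It follows the same overall strategy as the paper's --- first show that the integrand descends along the covering map, then match the measures --- but both halves are executed differently. For the key identity $\mathrm{Ad}_{\tilde g} = \mathrm{Ad}_{\pi(\tilde g)}$, the paper uses surjectivity of the exponential maps on compact connected groups to write $\tilde g = \exp_{\tilde G}(y)$ and then computes $\mathrm{Ad}_{\tilde g} = \sum_k \frac{1}{k!}\mathrm{ad}_y^k = \mathrm{Ad}_{\pi(\tilde g)}$, whereas you invoke the standard fact that the discrete normal kernel of a covering homomorphism of a connected group is central and hence acts trivially under $\mathrm{Ad}$; your route is a bit lighter, since it needs neither surjectivity of $\exp$ nor the series for $\mathrm{Ad}_{\exp y}$. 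For the measure-theoretic step, the paper pushes \emph{both} Haar measures forward to the adjoint orbit $\mathcal{O}_x$ and identifies them there by uniqueness of the normalized invariant measure on the orbit, while you push $d\tilde g$ forward along $\pi$ and identify $\pi_* d\tilde g$ with $dg$ by uniqueness of normalized Haar measure on $G$; the latter is arguably the more standard textbook argument and avoids any discussion of orbit geometry, while the paper's version keeps the focus on the orbit, which is the object that actually appears in the integrand. Either way the conclusion is the same, and your argument is complete.
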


\begin{proof}
Let $\pi: \tilde G \to G$ be the covering homomorphism. The exponential maps $\mathrm{exp}_{\tilde G}$ and $\mathrm{exp}_G$ are surjective since both groups are compact and connected, so we can write $\tilde g \in \tilde G$ as $\tilde g = \mathrm{exp}_{\tilde G}(y)$ for some $y \in \mathfrak{g}$.  We then have $\pi(\tilde g) = \mathrm{exp}_{G}(y)$, so that $$\mathrm{Ad}_{\tilde g} = \sum_{k = 0}^\infty \frac{1}{k!} \mathrm{ad}_y^k = \mathrm{Ad}_{\pi(\tilde g)}.$$ Thus $x \in \mathfrak{g}_\C$ has the same orbit $\mathcal{O}_x$ under both adjoint actions.  The pushforwards of $d\tilde g$ and $dg$ to $\mathcal{O}_x$ are normalized measures that are invariant under either adjoint action, so they must be equal, and writing both sides of (\ref{eqn:cover_int}) as an integral over $\mathcal{O}_x$ gives the desired equality.
\end{proof}

In the case $G = \O(2N)$, there is an additional subtlety: even though $\O(2N)$ is compact and has the same Lie algebra as $\SO(2N)$, it is not connected, so that we cannot use (\ref{eqn:hc}) directly.  Instead we must use a trick that allows us to reduce an integral over a non-connected group to an integral over its identity component, which we will revisit in the proof of Theorem \ref{thm:hc_generalization} below.  We have $$\O(2N) = \SO(2N) \sqcup \O^-(2N),$$ where $\O^-(2N)$ is the orientation-reversing component consisting of $2N$-by-$2N$ orthogonal matrices with determinant $-1$.  Notice that every matrix in $\O^-(2N)$ can be written as $\tilde I O$, where $O \in \SO(2N)$ and $\tilde I$ is the matrix with $\tilde I_{11} = -1$, $\tilde I_{ii} = 1$ for $i > 1$, and $\tilde I_{ij} = 0$ for $i \not = j$.   Thus \begin{multline*} \int_{\O(2N)} e^{\mathrm{tr}(AOBO^T)} dO \\ = \int_{\SO(2N)} e^{\mathrm{tr}(AOBO^T)} dO + \int_{O^-(2N)} e^{\mathrm{tr}(AOBO^T)} dO \\ = \int_{\SO(2N)} \left[ e^{\mathrm{tr}(AOBO^T)} + e^{\mathrm{tr}(A\tilde I OBO^T \tilde I^T)} \right ] dO.\end{multline*} Here we take the Haar measure $dO$ to be normalized over all of $\O(2N)$, so that in this case $\int_{\SO(2N)} dO = 1/2$ and the integration measure on $\SO(2N)$ differs from that used in (\ref{eqn:hc_so2n}) by a factor of 2.

Next, notice that by the invariance of the trace under cyclic permutations and the fact that $\tilde I^T = \tilde I$, we have $$e^{\mathrm{tr}(A\tilde I OBO^T \tilde I^T)} = e^{\mathrm{tr}(\tilde I A \tilde I OBO^T)}.$$  Conjugation of $A$ by $\tilde I$ has the effect of sending $a_1 \mapsto -a_1$.  This has the consequence that for the orientation-reversing component, the sum over the Weyl group in (\ref{eqn:hn-1_sum}) becomes a sum over $-\eta$ rather than $\eta$.  Therefore in order to integrate over both components of $\O(2N)$, we should simply take $\eta$ in (\ref{eqn:hn-1_sum}) to run over all of $(\mathbb{Z}/2\mathbb{Z})^N$ rather than merely over $H_{N-1}$, and then divide by 2 to account for the different normalization.  Another application of (\ref{eqn:eta_sum}) then gives the desired integral formula for $\O(2N)$:
\begin{equation} \label{eqn:hc_o2n}
\int_{\O(2N)} e^{\mathrm{tr}(AOBO^T)} dO = \left( \prod_{p = 1}^{N-1} (2p)! \right) \frac{\det \big[\mathrm{cosh}(2a_j b_k) \big]_{j,k = 1}^N} {\Delta(A^{(2)}) \Delta(B^{(2)})}.
\end{equation}

\subsection{The cases $G = \SO(2N+1)$ and $\O(2N+1)$}
\label{sec:so2n1}

The orthogonal groups $\SO(2N+1)$ and $\O(2N+1)$ need to be treated separately from $\SO(2N)$ and $\O(2N)$, as their root system $B_N$ differs from the system $D_N$ of $\SO(2N)$.

We start with $G = \SO(2N+1)$.  The Lie algebra of $\SO(2N+1)$ is $\mathfrak{g} = \mathfrak{so}(2N+1)$, the algebra of $(2N+1)$-by-$(2N+1)$ skew-symmetric real matrices, with complexification $\mathfrak{g}_\C = \mathfrak{so}(2N+1, \mathbb{C})$, the algebra of $(2N+1)$-by-\mbox{$(2N+1)$} skew-symmetric complex matrices.  We take the Cartan subalgebra $\tot_\C$ to be the skew-symmetric block-diagonal complex matrices whose diagonal consists of $N$ 2-by-2 blocks followed by a single zero in the bottom right corner; then $\tot$ consists of those matrices in $\tot_\C$ that have strictly real entries.  An orthonormal basis of $\tot$ with respect to the Hilbert--Schmidt inner product is given by $$e_j := \frac{1}{\sqrt{2}}(E_{2j-1\textrm{ }2j} - E_{2j\textrm{ }2j-1}), \quad j = 1,\hdots, N,$$ which is nearly identical to (\ref{eqn:so2n_cartan_basis}), except that now $E_{jk}$ is the $(2N+1)$-by-$(2N+1)$ matrix with a 1 at the intersection of row $j$ and column $k$ and zeros everywhere else.

The difference from the case $G=\SO(2N)$ arises because $B_N$ contains additional roots of a different length: we must consider both the $N^2 -N$ {\it long} positive roots $e_j \pm e_k$ for $j < k$, corresponding to the roots of $D_N$, and also the $N$ {\it short} positive roots $e_j$, $1 \le j \le N$.

Because of these additional roots, the $B_N$ Weyl group is bigger than that of $D_N$.  We have $$W \cong (\mathbb{Z}/2\mathbb{Z})^N \rtimes S_N,$$ so that $|W| = |(\mathbb{Z}/2\mathbb{Z})^N| \cdot |S_N| = 2^N N!$.  Like before, this means that each $w \in W$ can be written uniquely in the form $w = \eta \sigma$, with $\sigma \in S_N$ and $\eta \in (\mathbb{Z}/2\mathbb{Z})^N$.  Since $\eta$ can now have a negative sign, we have $\epsilon(w) = \mathrm{sgn}(\eta)\mathrm{sgn}(\sigma)$. Each element of $W$ is then represented in $\SO(2N+1)$ by a matrix of the form $H_\eta P_\sigma$, where $P_\sigma$ represents $\sigma \in S_N$ as a block permutation matrix with 2-by-2 blocks followed by a final $1$ in the bottom right corner, and $H_\eta$ represents $\eta \in (\mathbb{Z}/2\mathbb{Z})^N$ as a block-diagonal matrix with $N$ 2-by-2 blocks equal to $$I_2 = \begin{bmatrix} 1 & 0 \\ 0 & 1 \end{bmatrix} \quad \textrm{or} \quad Q_2 = \begin{bmatrix} 0 & 1 \\ 1 & 0 \end{bmatrix},$$ followed by a final $\pm 1$ in the bottom right corner chosen so that $\det H_\eta = 1$.  Because we can choose this final entry to ensure a unit determinant, unlike in the case of $\SO(2N)$ we do not need to require the number of $Q_2$ blocks to be even.

Let $A = \sum_{j=1}^N a_j e_j$, $B = \sum_{j=1}^N b_j e_j \in \tot_\C$. Let $\Delta(A) = \prod_{j < k} (a_j - a_k)$ be the Vandermonde, and define $\Xi$ as in (\ref{eqn:Xi_def}). Additionally, define $\Phi(A) := \prod_{j=1}^N a_j$.  Then $\Delta_{\gog} = \Xi \Delta \Phi$. Plugging into (\ref{eqn:hc}), we have:
\begin{multline} \label{eqn:partial_hc_so2n+1}
\Xi(A) \Delta(A) \Phi(A) \Xi(B) \Delta(B) \Phi(B) \int_{\SO(2N+1)} e^{\mathrm{tr}(AOBO^T)} dO \\ = \frac{[\![\Xi \Delta \Phi,\Xi \Delta \Phi]\!]}{2^N N!} \sum_{\eta \in (\mathbb{Z}/2\mathbb{Z})^N} \sum_{\sigma \in S_N} \mathrm{sgn}(\eta)\mathrm{sgn}(\sigma) e^{\mathrm{tr}(A H_\eta P_\sigma B P_\sigma^T H_\eta^T)}.
\end{multline}

Proceeding as before, we next turn to simplifying the sum over the Weyl group.  Again we can write $$e^{\mathrm{tr}(A H_\eta P_\sigma B P_\sigma^T H_\eta^T)} = e^{\mathrm{tr}(P_\sigma^T H_\eta A H_\eta^T P_\sigma B)},$$ and the map $A \mapsto H_\eta A H_\eta^T$ acts on $\tot_\C$ in a similar way to the previous case.  The nonzero entries of $H_\eta$ are contained in $N$ diagonal blocks $L_1, \hdots, L_N$, each of which is equal to one of the 2-by-2 matrices $I_2$ or $Q_2$, plus the final $\pm 1$ on the diagonal, which we ignore since the final row and column of any matrix in $\tot_\C$ are all zeros.  Just like before, under conjugation by $H_\eta$, the block $L_j$ sends $a_j \mapsto \eta(j) a_j$.  We apply (\ref{eqn:eta_signed_sum}) to conclude that
\begin{align*}
\sum_{\eta \in (\mathbb{Z}/2\mathbb{Z})^N} \sum_{\sigma \in S_N} \mathrm{sgn}(\eta) \mathrm{sgn}(\sigma) e^{\mathrm{tr}(A H_\eta P_\sigma B P_\sigma^T H_\eta^T)} &= \sum_{\sigma \in S_N} \mathrm{sgn}(\eta) 2^N \prod_{j=1}^N \mathrm{sinh}(2 a_{\sigma(j)} b_j) \\
& = 2^N \det\big[\mathrm{sinh}(2 a_j b_k)\big]_{j,k=1}^N.
\end{align*}

It now remains to compute the constant $[\![\Xi \Delta \Phi,\Xi \Delta \Phi]\!]$. With reference to (\ref{eqn:delta_sq}), we can write $$\Xi(x) \Delta(x) \Phi(x) = \Delta(x^{(2)}) \Phi(x) = \sum_{\sigma \in S_N} \mathrm{sgn}(\sigma) x^{2 \sigma(\beta_0) + \vec 1}$$ where $\vec 1$ is the $N$-component multi-index $(1, 1, \hdots, 1)$.  Lemma \ref{lem:normconst} then gives $$\Xi(\partial) \Delta(\partial) \Phi(\partial) \big [ \Xi(x) \Delta(x) \Phi(x) \big] \Big|_{x=0} = \sum_{\sigma \in S_N} \left( \mathrm{sgn}(\sigma)^2 \prod_{p=0}^{N-1} (2 \sigma(p) + 1)! \right)$$$$ = N! \prod_{p = 1}^{N-1} (2p + 1)!$$

We substitute this into (\ref{eqn:partial_hc_so2n+1}) under the assumptions that $a_i \ne \pm a_j$, $b_i \ne \pm b_j$ for $i \ne j$ and that all $a_i$, $b_i$ are nonzero, so that we can divide by $\Delta_\gog(A) \Delta_\gog(B)$.  This yields the integral formula for $\SO(2N+1)$:
\begin{multline} \label{eqn:hc_so2n+1}
\int_{\SO(2N+1)} e^{\mathrm{tr}(AOBO^T)} dO = \left( \prod_{p = 1}^{N-1} (2p+1)! \right) \frac{\det\big[\mathrm{sinh}(2 a_j b_k)\big]_{j,k=1}^N}{\Delta(A^{(2)}) \Delta(B^{(2)}) \prod_{i=1}^N a_i b_i}.
\end{multline}
Proposition \ref{prop:cover_integrals} yields a formula for $\mathrm{Spin}(2N+1)$ that is analogous to (\ref{eqn:hc_spin2n}).

For the case $G = \O(2N+1)$, we must compute the integral over the orientation-reversing component.  Applying the same logic as in the case of $\O(2N)$, we find $$\int_{\O(2N+1)} e^{\mathrm{tr}(AOBO^T)} dO = \int_{\SO(2N+1)} e^{\mathrm{tr}(AOBO^T)} dO + \int_{\O^-(2N+1)} e^{\mathrm{tr}(AOBO^T)} dO$$$$ = \int_{\SO(2N+1)} \left[ e^{\mathrm{tr}(AOBO^T)} +  e^{\mathrm{tr}(\tilde I A \tilde I OBO^T)} \right] dO,$$ where the measure $dO$ is now taken to be normalized over the entire group $\O(2N+1)$.  The same argument used in the calculation for $\O(2N)$ shows that integrating over the orientation-reversing component amounts to sending $\eta \mapsto -\eta$ in the sum over the Weyl group on the right-hand side of (\ref{eqn:hc}).  But since in this case we are summing over all of $(\mathbb{Z}/2\mathbb{Z})^N$, this does not actually change the sum, so that the integrals over both connected components of $\O(2N+1)$ are equal.  Since $dO$ is normalized over the entire group, each of these two integrals is equal to half the integral in (\ref{eqn:hc_so2n+1}), and the final formula for $\O(2N+1)$ is identical to the formula for $\SO(2N+1)$:
\begin{multline} \label{eqn:hc_o2n+1}
\int_{\O(2N+1)} e^{\mathrm{tr}(AOBO^T)} dO = \left( \prod_{p = 1}^{N-1} (2p+1)! \right) \frac{\det\big[\mathrm{sinh}(2 a_j b_k)\big]_{j,k=1}^N}{\Delta(A^{(2)}) \Delta(B^{(2)}) \prod_{i=1}^N a_i b_i}.
\end{multline}

\subsection{The case $G = \mathrm{USp}(N)$}
\label{sec:uspn}

The unitary symplectic, or compact symplectic, group $\mathrm{USp}(N)$ is the compact real form of $\mathrm{Sp}(2N, \mathbb{C})$, the symplectic group over the complex numbers.  The group $\mathrm{Sp}(2N, \mathbb{C})$ consists of $2N$-by-$2N$ complex matrices $S$ satisfying $$SJS^T = J,$$ where $$J = \begin{bmatrix} 0 & I_N \\ -I_N & 0 \end{bmatrix}$$ is the standard symplectic matrix. (Here $I_N$ indicates the $N$-by-$N$ identity matrix.) The name ``unitary symplectic'' comes from the fact that
\begin{equation} \label{eqn:unitary-symplectic}
\mathrm{USp}(N) = \U(2N) \cap \mathrm{Sp}(2N, \mathbb{C}) = \left\{ \begin{bmatrix} A & - \bar B \\ B & \bar A \end{bmatrix} \in \U(2N) \right\}.
\end{equation}
In fact, $\mathrm{USp}(N)$ is isomorphic to the quaternionic unitary (or ``hyperunitary'') group $\U(N, \mathbb{H})$ of linear operators on $\mathbb{H}^N$ that preserve the standard Hermitian form $$( v, w ) := \sum_{i=1}^N \bar v_i w_i, \qquad v, w \in \mathbb{H}^N.$$  Thus we could equivalently consider $\mathrm{USp}(N)$ as a subgroup of $\GL(N, \mathbb{H})$, though here we regard it as a subgroup of $\GL(2N, \mathbb{C})$.

The complexified Lie algebra of $\mathrm{USp}(N)$ is $$\mathfrak{g}_\C = \mathfrak{sp}(2N, \mathbb{C}) = \left\{ \begin{bmatrix} A & B \\ C & -A^T \end{bmatrix} \in \mathfrak{gl}(2n, \mathbb{C}) \textrm{  } \bigg | \textrm{  } B^T = B, \textrm{  } C^T = C \right\}.$$ From (\ref{eqn:unitary-symplectic}) we then have $\gog = \gog_\C \cap \mathfrak{u}(2N),$ so that $\gog$ consists of the matrices in $\mathfrak{sp}(2N, \mathbb{C})$ that are also skew-Hermitian. We take $\tot$ to be the Cartan subalgebra of diagonal matrices in $\gog$.  It is spanned by the orthonormal basis $$e_j := \frac{i}{\sqrt{2}}( E_{jj} - E_{j+N \textrm{ } j+N}), \quad 1 \le j \le N,$$ so that $\mathrm{USp}(N)$ has rank $N$.  The root system in this case is $C_N$, which is similar to the root system $B_N$, but with the short roots multiplied by 2 (so that for $C_N$ these are the long roots).  That is, in the basis $\{ e_j \}_{j=1}^N$ of $\tot$, the $N$ long positive roots are given by $2 e_j$ for $1 \le j \le N$ and the $N^2 - N$ short positive roots are given by $e_j \pm e_k$ for $j < k$.  Thus for the $C_N$ root system, \begin{equation} \label{eqn:uspn_poly} \Delta_\gog(x) = 2^N \Xi(x) \Delta(x) \Phi(x), \end{equation} which differs from the $B_N$ case only by a factor of $2^N$.  Our previous calculation for the $B_N$ system therefore immediately gives $$\Delta_\gog(\partial)\Delta_\gog(x) \big |_{x=0} = 2^{2N} N! \prod_{p=1}^{N-1} (2p + 1)!$$

The Weyl group for $C_N$ is isomorphic to the group for $B_N$: $$W \cong (\mathbb{Z}/2\mathbb{Z})^N \rtimes S_N,$$ with $|W| = 2^N N!$, and it acts on $\tot$ in the same fashion.  Thus if we let $A = \sum_{j=1}^N a_j e_j$, $B = \sum_{j=1}^N b_j e_j$ and take $x = A^\dagger$, $y = B$ in (\ref{eqn:hc}), then the sum over the Weyl group on the right-hand side is equal to $$\sum_{\eta \in (\mathbb{Z}/2\mathbb{Z})^N} \sum_{\sigma \in S_N} \mathrm{sgn}(\eta) \mathrm{sgn}(\sigma) e^{\mathrm{tr}(A H_\eta P_\sigma B P_\sigma^T H_\eta^T)} = 2^N \det \big[ \mathrm{sinh}(2 a_j b_k) \big]_{j,k=1}^N$$ just as in the case of $\SO(2N+1)$.  Putting this all together in (\ref{eqn:hc}) and assuming as before that $a_i \ne \pm a_j$, $b_i \ne \pm b_j$ for $i \ne j$ and that all $a_i$, $b_i$ are nonzero, we conclude:
\begin{equation} \label{eqn:hc_uspn}
\int_{\mathrm{USp}(N)} e^{\mathrm{tr}(ASBS^\dagger)} dS \ = \ \left( \prod_{p = 1}^{N-1} (2p+1)! \right) \frac{\det \big[ \mathrm{sinh}(2 a_j b_k) \big]_{j,k=1}^N}{\Delta(A^{(2)}) \Delta(B^{(2)}) \prod_{i=1}^N a_i b_i}.
\end{equation}

Thus for the unitary symplectic integral, we end up with exactly the same expression as for the odd special orthogonal groups: the sum over the Weyl group in each case gives the same result, and although $\Delta_\gog$ differs from the $B_N$ case by a factor of $2^{2N}$, these factors cancel from either side of (\ref{eqn:hc_uspn}), so that the right-hand side matches that of (\ref{eqn:hc_so2n+1}).

\subsection{Integrals over arbitrary compact groups}
\label{sec:arbitrary_cpt}

In this section, we show how Theorem \ref{thm:hc} can be used to compute integrals over arbitrary compact Lie groups that may be neither semisimple nor connected, such as $\U(N)$ and $\O(N)$.  As a special case, this calculation justifies the derivation of the HCIZ integral (\ref{eqn:hciz}) from (\ref{eqn:hc}).  The proof makes use of the classification of compact Lie groups, as stated in the following theorem.\footnotemark \footnotetext{This theorem is a slightly weakened version of \cite[ch.~10, \textsection7.2, Theorem 4]{ProcesiLieGroups}.}

\begin{theorem} \label{thm:cp_classification}
Every compact, connected real Lie group is of the form \mbox{$(K \times H)/Z$}, where $K$ is connected, compact, and semisimple, $H \cong (S^1)^d$ is a torus, and $Z$ is a finite subgroup of the center of $K \times H$ satisfying $Z \cap H = \{\mathrm{id}_{K\times H}\}$.
\end{theorem}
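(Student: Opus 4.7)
The plan is to decompose $G$ at the Lie algebra level, integrate the decomposition back up to the group, and identify the resulting product map as the desired quotient.

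First, I would invoke the structure theory of compact Lie algebras: the Lie algebra $\mathfrak{g}_0$ of a compact real Lie group is reductive, so it splits canonically as $\mathfrak{g}_0 = \mathfrak{z} \oplus \mathfrak{s}$ where $\mathfrak{z}$ is the center and $\mathfrak{s} = [\mathfrak{g}_0,\mathfrak{g}_0]$ is semisimple. This uses that the Killing form is negative semidefinite on a compact Lie algebra, definite precisely on the semisimple summand. Let $T$ denote the identity component of the center $Z(G)$ of $G$; being connected, compact, and abelian, it is a torus $(S^1)^m$ with Lie algebra $\mathfrak{z}$. Let $K$ be the connected Lie subgroup of $G$ with Lie algebra $\mathfrak{s}$; by Weyl's theorem on compact semisimple groups, the simply connected Lie group with Lie algebra $\mathfrak{s}$ has finite center, so $K$ itself is compact (and in particular closed in $G$).

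Next I would form the multiplication map $\mu: K \times T \to G$, $(k,t) \mapsto kt$. Because $T$ lies in the center of $G$, the two factors commute, so $\mu$ is a Lie group homomorphism. Its differential at the identity is the isomorphism $\mathfrak{s} \oplus \mathfrak{z} \xrightarrow{\sim} \mathfrak{g}_0$ of Step~1, so $\mu$ is a local diffeomorphism near the identity. Its image is therefore an open subgroup of $G$; since $G$ is connected, $\mu$ is surjective.

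Finally I would analyze the kernel $Z = \ker \mu$. Since $\mu$ is a local diffeomorphism near the identity, $Z$ is discrete, and as a closed discrete subgroup of the compact group $K \times T$ it is finite. Any discrete normal subgroup of a connected Lie group is central, so $Z$ lies in the center of $K \times T$. To verify $Z \cap T = \{\mathrm{id}\}$, suppose $(\mathrm{id}_K, t) \in Z$; then $\mu(\mathrm{id}_K, t) = t = \mathrm{id}_G$, which forces $t = \mathrm{id}$ in $T$ as well. Passing to the quotient yields the desired isomorphism $G \cong (K \times T)/Z$.

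The main obstacle is the compactness of $K$. An arbitrary analytic subgroup of a compact Lie group need not be closed (the standard cautionary example is an irrational line on a torus), so it is not automatic that the connected subgroup integrating $\mathfrak{s} = [\mathfrak{g}_0,\mathfrak{g}_0]$ is either closed or compact. The nontrivial input resolving this is Weyl's theorem that a compact semisimple Lie algebra integrates to a compact simply connected group, which in turn rests on showing that the Killing form provides a bi-invariant Riemannian metric of positive Ricci curvature, allowing Myers' theorem to bound the diameter. Once $K$ is known to be compact, the rest of the argument is a straightforward application of the first isomorphism theorem for Lie groups.
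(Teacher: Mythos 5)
Your proof is correct. Note, however, that the paper does not actually prove this statement: it is quoted as a (slightly weakened) form of a theorem in Procesi's book \cite{CP}, with only a footnote pointing to the reference. So there is no in-paper argument to compare against; what you have supplied is the standard structure-theory proof, and it holds up. The decomposition $\mathfrak{g}_0 = \mathfrak{z} \oplus [\mathfrak{g}_0,\mathfrak{g}_0]$, the identification of $T$ as the identity component of $Z(G)$, the multiplication homomorphism $\mu\colon K \times T \to G$ being a surjective local diffeomorphism with finite central kernel, and the verification that $\ker\mu \cap (\{\mathrm{id}_K\}\times T)$ is trivial are all sound. You are also right to flag compactness of $K$ as the one genuinely nontrivial input: an analytic subgroup need not be closed in general, and Weyl's theorem (the simply connected group integrating a compact semisimple Lie algebra is compact, so $K$ is the continuous image of a compact group and hence closed) is exactly what closes that gap. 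One small phrasing quibble: ``the simply connected group has finite center, so $K$ is compact'' compresses two steps --- finiteness of the center of $\widetilde{K}$ is equivalent to compactness of $\widetilde{K}$ (since $\widetilde{K}/Z(\widetilde{K})$ is the compact adjoint group), and then $K = \mathrm{im}(\widetilde{K} \to G)$ is compact as a continuous image --- but you spell this out correctly in your final paragraph.
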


The idea is that we will express an integral over an arbitrary compact group $G$ as a sum of integrals over its identity component $G_1$, which we decompose as $G_1 \cong (K \times H)/Z$ following Theorem \ref{thm:cp_classification}. We then rewrite each of these integrals over $G_1$ as an integral over the semisimple group $K$ and evaluate using (\ref{eqn:hc}).  Note that $G$, $G_1$, $K \times H$ and $K$ all have the same root system $\Phi$, so that the discriminant $\Delta_\gog$ is the same for all of these groups.  We write $W_\Phi$ for the Weyl group generated by reflections through the root hyperplanes.  Since $G$, $G_1$ and $K \times H$ have isomorphic Lie algebras, we will write $\mathfrak{g}_\C$ and $\tot_\C$ for the complexified Lie algebra and Cartan subalgebra of all three.  Define $\mathfrak{k} := \mathrm{Lie}(K) \otimes \C$ and $\mathfrak{h} := \mathrm{Lie}(H) \otimes \C$, so that $\gog_\C = \mathfrak{k} \oplus \mathfrak{h}$, and let $\tot_K$ be the Cartan subalgebra of $\mathfrak{k}$ obtained by taking the orthogonal complement of $\mathfrak{h}$ in $\tot_\C$.

\begin{remark} \label{rem:disco-weyl}
There is no standardized definition for the Weyl group of a disconnected Lie group, since the definition for connected groups can be generalized in multiple inequivalent ways.  When $G$ is compact and connected, one usually defines $W := N_G(T)/T$, where $N_G(T)$ is the normalizer of the maximal torus $T$ in $G$.  It is then a nontrivial theorem that $W$ is isomorphic to the group $W_\Phi$ generated by reflections through the hyperplanes $\{ x \in \tot \ | \ \langle \alpha, x \rangle = 0 \}$ for $\alpha$ a simple root.  However, when $G$ is disconnected, $N_G(T)/T$ is {\it not} isomorphic to $W_\Phi$ but rather contains it as a proper subgroup.  To avoid confusion, we do not define the ``Weyl group of $G$'' in the disconnected case, choosing instead to work only with the Weyl group $W_\Phi$ associated to the root system $\Phi$ (or equivalently, to the Lie \mbox{algebra $\gog$}).  See Remark \ref{rem:disco-weyl2} below for further discussion of the group $N_G(T)/T$ when $G$ is disconnected.
\remdone
\end{remark}

First we show that the torus factor $H$ does not change the right-hand side of the formula (\ref{eqn:hc}).

\begin{lemma} \label{lem:torus_factor}
For all $x, y \in \tot_\C$, $$\Delta_\gog(x) \Delta_\gog(y) \int_{K \times H} e^{\langle \mathrm{Ad}_{(k,h)} x, y \rangle} dk \, dh = \frac{ [ \! [ \Delta_\gog, \Delta_\gog ] \!] }{|W_\Phi|} \sum_{w \in W_\Phi} \epsilon(w) e^{\langle w(x),y \rangle},$$ where $dk$ and $dh$ are the normalized Haar measures on $K$ and $H$ respectively.
\end{lemma}

\begin{proof}
For $x \in \tot_\C$, write $x = x^K + x^H$, with $x^K \in \mathfrak{t}_K,$ $x^H \in \mathfrak{h}$.  All of the roots vanish on $\mathfrak{h}$, so that $\Delta_\gog(x) = \Delta_\gog(x^K)$.  Since $\{\mathrm{id}_K\} \times H$ lies in the center of $K \times H$, we have $\mathrm{Ad}_{(k,h)} x = x^H + \mathrm{Ad}_k x^K$.  Then since $\mathfrak{h}$ is orthogonal to $\mathfrak{k}$, we find
\begin{equation*} \langle \mathrm{Ad}_g x, y \rangle = \langle \mathrm{Ad}_g (x^H + x^K), y^H + y^K \rangle = \langle x^H, y^H \rangle + \langle \mathrm{Ad}_g x^K, y^K \rangle. \end{equation*}
Accordingly, we have:
\begin{align*} \Delta_\gog(x) \Delta_\gog(y) &\int_{K \times H} e^{\langle \mathrm{Ad}_{(k,h)} x, y \rangle} dk \, dh \\ &= \Delta_\gog(x^K) \Delta_\gog(y^K) \, e^{\langle x^H, \, y^H \rangle} \int_{K} \int_H e^{\langle \mathrm{Ad}_k x^K, \, y^K \rangle} dh\, dk \\ &= \Delta_\gog(x^K) \Delta_\gog(y^K) \, e^{\langle x^H, \, y^H \rangle} \int_{K} e^{\langle \mathrm{Ad}_k x^K, \, y^K \rangle} dk \\ &= \frac{ [ \! [ \Delta_\gog, \Delta_\gog ] \!] }{|W_\Phi|} \sum_{w \in W_\Phi} \epsilon(w) e^{\langle x^H, \, y^H \rangle} e^{ \langle w(x^K), \, y^K \rangle} \\ &= \frac{ [ \! [ \Delta_\gog, \Delta_\gog ] \!] }{|W_\Phi|} \sum_{w \in W_\Phi} \epsilon(w) e^{\langle w(x),y \rangle}\end{align*}
as desired.
\end{proof}

Since $K \times H$ is a compact, connected, $|Z|$-fold covering group of \mbox{$(K \times H)/Z$}, by Proposition \ref{prop:cover_integrals} we have
\begin{equation} \label{eqn:torus_factor_cover_int}
\int_{K \times H} e^{\langle \mathrm{Ad}_{(k,h)} x, y \rangle} dk \, dh = \int_{(K \times H)/Z} e^{\langle \mathrm{Ad}_\eta x, y \rangle} \, d\eta,
\end{equation}
where $d\eta$ is the normalized Haar measure on $(K \times H)/Z$.
We now have all the tools to extend Theorem \ref{thm:hc} to arbitrary compact $G$.

\begin{theorem}
\label{thm:hc_generalization}
Let $G$ be a compact real Lie group with root system $\Phi$, normalized Haar measure $dg$, and $m$ connected components $G_1, \hdots, G_m$.  For $j = 1, \hdots, m$, let $g_j \in G_j$.  Then for all $x, y \in \tot_\C$,
\begin{equation} \label{eqn:hc_generalization}
\Delta_\gog(x) \Delta_\gog(y) \int_G e^{\langle \mathrm{Ad}_g x, y \rangle} dg = \frac{1}{m} \frac{ [ \! [ \Delta_\gog, \Delta_\gog ] \!] }{|W_\Phi|} \sum_{j=1}^m \sum_{w \in W_\Phi} \epsilon(w) e^{\langle w(\mathrm{Ad}_{g_j} x), \, y \rangle}.
\end{equation}
\end{theorem}

\begin{proof}
Take $G_1$ to be the identity component of $G$. First we note that \begin{align} \nonumber \int_G e^{\langle \mathrm{Ad}_g x, y \rangle} dg &= \sum_{j=1}^m \int_{G_j} e^{\langle \mathrm{Ad}_{g} x, y \rangle} dg \\ \label{eqn:component_decomp} &= \sum_{j=1}^m \int_{G_1} e^{\langle \mathrm{Ad}_{gg_j} x, y \rangle} dg = \frac{1}{m} \sum_{j=1}^m \int_{G_1} e^{\langle \mathrm{Ad}_{g}(\mathrm{Ad}_{g_j} x), y \rangle} dg_1, \end{align} where $dg_1$ is the Haar measure normalized so that the identity component $G_1$, rather than the entire group $G$, has unit volume.  By Theorem \ref{thm:cp_classification} we can write $G_1 = (K \times H)/Z$ as above.  Applying (\ref{eqn:torus_factor_cover_int}) and then Lemma \ref{lem:torus_factor} to the final expression in (\ref{eqn:component_decomp}) completes the proof of the theorem.
\end{proof}

\begin{remark} \label{rem:disco-weyl2}
Suppose we were to define $W := N_G(T)/T$, irrespective of whether or not $G$ is connected.  As in Theorem \ref{thm:hc_generalization}, let $G_1, \hdots, G_m$ be the connected components of $G$. By a simple argument using Cartan's theorem on the conjugacy of maximal tori, one can show that $N_G(T)$ intersects each component $G_j$, and that $|W| = m|W_\Phi|$.  Since $T$ acts trivially on $\tot$, the adjoint action of $N_G(T)$ on $\tot$ descends to an action of $W$.  If we choose each $g_j$ in Theorem \ref{thm:hc_generalization} to lie in $G_j \cap N_G(T)$, then each element of $W$ acts on $\tot$ as $w \circ \mathrm{Ad}_{g_j}$ for some unique $g_j$ and $w \in W_\Phi$.  If we then extend the sign function $\epsilon$ from $W_\Phi$ to $W$ by defining $\epsilon(w \circ \mathrm{Ad}_{g_j} ) = \epsilon(w)$, the double sum in Theorem \ref{thm:hc_generalization} can be rewritten as a single sum over $W$, so that the right-hand side of (\ref{eqn:hc_generalization}) looks identical to the right-hand side of (\ref{eqn:hc}).  Thus there is a certain sense in which (\ref{eqn:hc}) holds ``as written'' for any compact Lie group, though for disconnected groups the formulation (\ref{eqn:hc_generalization}) is arguably more transparent.
\remdone
\end{remark}

\section*{Acknowledgements}
This work was partially supported by JST CREST program JPMJCR18T6, as well as by the National Science Foundation under grants DMS 1411278 and DMS 1714187.  The author would like to thank Govind Menon and Jean-Bernard Zuber for their mentorship, as well as an anonymous referee for helpful comments.

\bibliography{hc-refs}

\begin{thebibliography}{10}

\bibitem{ACNZ}
G.~Alfano, C.~Chiasserini, A.~Nordio, and S.~Zhou.
\newblock Closed-form output statistics of {MIMO} block-fading channels.
\newblock {\em IEEE Transactions on Information Theory}, 60:7782--7797, 2014.

\bibitem{AltItz}
D.~Altschuler and C.~Itzykson.
\newblock Remarks on integration over {Lie} algebras.
\newblock {\em Annales de l'Institut Henri Poincar\'e A}, 54:1--8, 1991.

\bibitem{AnkerDunklNotes}
J.-P. Anker.
\newblock An introduction to {Dunkl} theory and its analytic aspects.
\newblock In G.~Filipuk, Y.~Haraoka, and S.~Michalik, editors, {\em Analytic,
  Algebraic and Geometric Aspects of Differential Equations}, pages 3--58.
  Birkh\"auser, Basel, 2015.
\newblock \url{https://arxiv.org/abs/1611.08213}.

\bibitem{AtiyahBottLoc}
M.~F. Atiyah and R.~Bott.
\newblock The moment map and equivariant cohomology.
\newblock {\em Topology}, 23:1--28, 1984.

\bibitem{Balantekin}
A.~B. Balantekin.
\newblock Character expansions, {I}tzykson--{Z}uber integrals, and the {QCD}
  partition function.
\newblock {\em Physical Review D}, 62:085017, 2000.
\newblock \url{https://arxiv.org/abs/hep-th/0007161}.

\bibitem{Barysh}
Yu. Baryshnikov.
\newblock {GUEs} and queues.
\newblock {\em Probability Theory and Related Fields}, 119:256--274, 2001.

\bibitem{BBJ}
M.~Bauer, D.~Bernard, and T.~Jin.
\newblock Universal fluctuations around typicality for quantum ergodic systems.
\newblock {\em Physical Review E}, 101:012115, 2020.
\newblock \url{https://arxiv.org/abs/1907.08081}.

\bibitem{BerlineGetzlerVergne}
N.~Berline, E.~Getzler, and M.~Vergne.
\newblock {\em Heat Kernels and Dirac Operators}.
\newblock Springer-Verlag, New York, 2004.

\bibitem{BouKaz}
D.~V. Boulatov and V.~A. Kazakov.
\newblock The {Ising} model on a random planar lattice: the structure of the
  phase transition and the exact critical exponents.
\newblock {\em Physics Letters B}, 186:379--384, 1987.

\bibitem{BIPZ}
E.~Br\'ezin, C.~Itzykson, G.~Parisi, and J.-B. Zuber.
\newblock Planar diagrams.
\newblock {\em Communications in Mathematical Physics}, 59:35--51, 1978.

\bibitem{BumpLieGroups}
D.~Bump.
\newblock {\em Lie Groups}.
\newblock Springer-Verlag, New York, 2004.

\bibitem{BABP}
J.~Bun, R.~Allez, J.-P. Bouchaud, and M.~Potters.
\newblock Rotational invariant estimator for general noisy matrices.
\newblock {\em IEEE Transactions on Information Theory}, 62:7475--7490, 2016.
\newblock \url{http://arxiv.org/abs/1502.06736}.

\bibitem{ChevalleyInvar}
C.~Chevalley.
\newblock Invariants of finite groups generated by reflections.
\newblock {\em American Journal of Mathematics}, 77:778--782, 1955.

\bibitem{CGM}
B.~Collins, A.~Guionnet, and E.~Maurel-Segala.
\newblock Asymptotics of unitary and orthogonal matrix integrals.
\newblock {\em Advances in Mathematics}, 222:172--215, 2009.
\newblock \url{http://arxiv.org/abs/math/0608193}.

\bibitem{CMZ}
R.~Coquereaux, C.~McSwiggen, and J.-B. Zuber.
\newblock On {Horn}'s problem and its volume function.
\newblock {\em Communications in Mathematical Physics}, 2019.
\newblock To appear. \url{https://arxiv.org/abs/1904.00752}.

\bibitem{CMZ2}
R.~Coquereaux, C.~McSwiggen, and J.-B. Zuber.
\newblock Revisiting {Horn}'s problem.
\newblock {\em Journal of Statistical Mechanics: Theory and Experiment},
  2019:094018, 2019.
\newblock \url{https://arxiv.org/abs/1905.09662}.

\bibitem{CZ1}
R.~Coquereaux and J.-B. Zuber.
\newblock From orbital measures to {Littlewood}--{Richardson} coefficients and
  hive polytopes.
\newblock {\em Annales de l'Institut Henri Poincar\'e D}, 5:339--386, 2018.
\newblock \url{http://arxiv.org/abs/1706.02793}.

\bibitem{Defosseux}
M.~Defosseux.
\newblock Orbit measures, random matrix theory and interlaced determinantal
  processes.
\newblock {\em Annales de l'Institut Henri Poincar\'e (B) Probabilit\'es et
  Statistiques}, 46:209--249, 2010.
\newblock \url{https://arxiv.org/abs/0810.1011}.

\bibitem{DGZ}
P.~Di~Francesco, P.~Ginsparg, and J.~Zinn-Justin.
\newblock 2{D} gravity and random matrices.
\newblock {\em Physics Reports}, 254:1--133, 1995.
\newblock \url{http://arxiv.org/abs/hep-th/9306153}.

\bibitem{DH}
J.~J. Duistermaat and G.~J. Heckman.
\newblock On the variation in the cohomology of the symplectic form of the
  reduced phase space.
\newblock {\em Inventiones Mathematicae}, 69:259--268, 1982.

\bibitem{DKOPY}
G.~Durisi, T.~Koch, J.~\"Ostman, Y.~Polyanskiy, and W.~Yang.
\newblock Short-packet communications over multiple-antenna {Rayleigh}-fading
  channels.
\newblock {\em IEEE Transactions on Communications}, 64:618--629, 2016.

\bibitem{EtingofCM}
P.~Etingof.
\newblock {\em {Calogero--Moser} Systems and Representation Theory}.
\newblock European Mathematical Society, Zurich, 2007.
\newblock \url{https://arxiv.org/abs/math/0606233}.

\bibitem{EtingofRT}
P.~Etingof, O.~Golberg, S.~Hensel, T.~Liu, A.~Schwendner, D.~Vaintrob, and
  E.~Yudovina.
\newblock {\em Introduction to Representation Theory}.
\newblock American Mathematical Society, Providence, 2011.
\newblock \url{https://arxiv.org/abs/0901.0827}.

\bibitem{Faraut}
J.~Faraut.
\newblock Rayleigh theorem, projection of orbital measures, and spline
  functions.
\newblock {\em Advances in Pure and Applied Mathematics}, 6:261--283, 2015.

\bibitem{Fegan}
H.~G. Fegan.
\newblock The heat equation on a compact {Lie} group.
\newblock {\em Transactions of the American Mathematical Society},
  246:339--357, 1978.

\bibitem{FILZ}
P.~J. Forrester, J.~R. Ipsen, D.-Z. Liu, and L.~Zhang.
\newblock Orthogonal and symplectic {Harish-Chandra} integrals and matrix
  product ensembles.
\newblock {\em Random Matrices: Theory and Applications}, 8:1950015, 2019.
\newblock \url{https://arxiv.org/abs/1711.10691}.

\bibitem{IF}
I.~B. Frenkel.
\newblock Orbital theory for affine {Lie} algebras.
\newblock {\em Inventiones Mathematicae}, 71:301--352, 1984.

\bibitem{FH-RT}
W.~Fulton and J.~Harris.
\newblock {\em Representation Theory: A First Course}.
\newblock Springer, New York, 2004.

\bibitem{FK}
Y.~V. Fyodorov and B.~A. Khoruzhenko.
\newblock Systematic analytical approach to correlation functions of resonances
  in quantum chaotic scattering.
\newblock {\em Physical Review Letters}, 83:65--68, 1999.
\newblock \url{https://arxiv.org/abs/cond-mat/9903043}.

\bibitem{FS}
Y.~V. Fyodorov and E.~Strahov.
\newblock Characteristic polynomials of random {Hermitian} matrices and
  {Duistermaat}--{Heckman} localisation on non-compact {K\"ahler} manifolds.
\newblock {\em Nuclear Physics B}, 630:453--491, 2002.
\newblock \url{https://arxiv.org/abs/math-ph/0201045}.

\bibitem{GN}
I.~M. Gel'fand and M.~A. Naimark.
\newblock Unitary representations of the classical groups.
\newblock {\em Trudy Matematicheskogo Instituta imeni V. A. Steklova},
  36:3--288, 1950.

\bibitem{GTP}
A.~Ghaderipoor, C.~Tellambura, and A.~Paulraj.
\newblock On the application of character expansions for {MIMO} capacity
  analysis.
\newblock {\em IEEE Transactions on Information Theory}, 58:2950--2962, 2012.

\bibitem{GGN}
I.~P. Goulden, M.~Guay-Paquet, and J.~Novak.
\newblock Monotone {Hurwitz} numbers and the {HCIZ} integral.
\newblock {\em Annales Math\'ematiques Blaise Pascal}, 21:71--89, 2014.
\newblock \url{https://arxiv.org/abs/1107.1015}.

\bibitem{DG}
D.~J. Grabiner.
\newblock Brownian motion in a {Weyl} chamber, non-colliding particles, and
  random matrices.
\newblock {\em Annales de l'Institut Henri Poincar\'e (B) Probability and
  Statistics}, 35:177--204, 1999.
\newblock \url{https://arxiv.org/abs/math/9708207}.

\bibitem{AG}
A.~Guionnet.
\newblock First order asymptotics of matrix integrals; a rigorous approach
  towards the understanding of matrix models.
\newblock {\em Communications in Mathematical Physics}, 244:527--569, 2004.

\bibitem{AGsurvey}
A.~Guionnet.
\newblock Large deviations and stochastic calculus for large random matrices.
\newblock {\em Probability Surveys}, 1:72--172, 2004.
\newblock \url{https://arxiv.org/abs/math/0409277}.

\bibitem{GZ}
A.~Guionnet and O.~Zeitouni.
\newblock Large deviations asymptotics for spherical integrals.
\newblock {\em Journal of Functional Analysis}, 188:461--515, 2002.

\bibitem{HallLieGroups}
B.~C. Hall.
\newblock {\em Lie Groups, Lie Algebras, and Representations: An Elementary
  Introduction}.
\newblock Springer, New York, 2004.

\bibitem{HCb}
Harish-Chandra.
\newblock On some applications of the universal enveloping algebra of a
  semisimple {Lie} algebra.
\newblock {\em Transactions of the American Mathematical Society}, 70:28--96,
  1951.

\bibitem{HCe}
Harish-Chandra.
\newblock The characters of semisimple {Lie} groups.
\newblock {\em Transactions of the American Mathematical Society}, 83:98--163,
  1956.

\bibitem{HC}
Harish-Chandra.
\newblock Differential operators on a semisimple {Lie} algebra.
\newblock {\em American Journal of Mathematics}, 79:87--120, 1957.

\bibitem{SH}
S.~Helgason.
\newblock {\em Groups and Geometric Analysis: Integral Geometry, Invariant
  Differential Operators, and Spherical Functions}.
\newblock Academic Press, Orlando, 1984.

\bibitem{Hurwitz}
A.~Hurwitz.
\newblock {\"Uber Riemann'sche Fl\"achen mit gegebenen Verzweigungspunkten}.
\newblock {\em Mathematische Annalen}, 39:1--60, 1891.

\bibitem{IZ}
C.~Itzykson and J.-B. Zuber.
\newblock The planar approximation. {II}.
\newblock {\em Journal of Mathematical Physics}, 21:411--421, 1980.

\bibitem{KM}
A.~Kamenev and M.~M\'ezard.
\newblock {Wigner}--{Dyson} statistics from the replica method.
\newblock {\em Journal of Physics A: Mathematical and General}, 32:4373, 1999.
\newblock \url{https://arxiv.org/abs/cond-mat/9901110}.

\bibitem{Kaz}
V.~A. Kazakov.
\newblock {Ising} model on a dynamical planar random lattice: exact solution.
\newblock {\em Physics Letters A}, 119:140--144, 1986.

\bibitem{KazMig}
V.~A. Kazakov and A.~A. Migdal.
\newblock Induced gauge theory at large {$N$}.
\newblock {\em Nuclear Physics B}, 397:214--238, 1993.
\newblock \url{https://arxiv.org/abs/hep-th/9206015}.

\bibitem{Kirillov2}
A.~A. Kirillov.
\newblock The characters of unitary representations of {Lie} groups.
\newblock {\em Functional Analysis and Its Applications}, 2:133--146, 1968.

\bibitem{AK}
A.~A. Kirillov.
\newblock {\em Lectures on the Orbit Method}.
\newblock American Mathematical Society, Providence, 2004.

\bibitem{Littelmann-CCP}
P.~Littelmann.
\newblock Cones, crystals, and patterns.
\newblock {\em Transformation Groups}, 3:145--179, 1998.
\newblock \url{http://www.mi.uni-koeln.de/~littelma/ConesCrystalsPatterns.pdf}.

\bibitem{AM}
A.~Matytsin.
\newblock On the large-{$N$} limit of the {Itzykson}--{Zuber} integral.
\newblock {\em Nuclear Physics B}, 411:805--820, 1994.

\bibitem{McS-deconvolution}
C.~McSwiggen.
\newblock Box splines, tensor product multiplicities and the volume function,
  2019.
\newblock \url{https://arxiv.org/abs/1909.12278}.

\bibitem{McS-heateqn}
C.~McSwiggen.
\newblock A new proof of {Harish-Chandra}'s integral formula.
\newblock {\em Communications in Mathematical Physics}, 365:239--253, 2019.
\newblock \url{http://arxiv.org/abs/1712.03995}.

\bibitem{MN-majorization}
C.~McSwiggen and J.~Novak.
\newblock Majorization and spherical functions, 2020.
\newblock \url{https://arxiv.org/abs/2006.08541}.

\bibitem{GM}
G.~Menon.
\newblock The complex {B}urgers' equation, the {HCIZ} integral and the
  {Calogero--Moser} system, 2017.
\newblock \url{http://www.dam.brown.edu/people/menon/talks/cmsa.pdf}. Accessed
  8 December 2017.

\bibitem{Morozov}
A.~Morozov.
\newblock Pair correlator in the {Itzykson}--{Zuber} integral.
\newblock {\em Modern Physics Letters A}, 7:3503--3507, 1992.
\newblock \url{https://arxiv.org/abs/hep-th/9209074}.

\bibitem{Neretin}
Yu.~A. Neretin.
\newblock Rayleigh triangles and non-matrix interpolation of matrix beta
  integrals.
\newblock {\em Sbornik: Mathematics}, 194:515--540, 2003.

\bibitem{MN}
M.~L. Nordenvaad.
\newblock Noise variance estimation for signal and noise subspace models.
\newblock In {\em 49th Asilomar Conference on Signals, Systems and Computers},
  Pacific Grove, 2015. IEEE Signal Processing Society.

\bibitem{NordenvaadSvensson}
M.~L. Nordenvaad and L.~Svensson.
\newblock A map based estimator for inverse complex covariance matrices.
\newblock In {\em Proceedings of the 37th IEEE International Conference on
  Acoustics, Speech and Signal Processing}, pages 3369--3372, Kyoto, 2012. IEEE
  Signal Processing Society.

\bibitem{JonLozenge}
J.~Novak.
\newblock Lozenge tilings and {Hurwitz} numbers.
\newblock {\em Journal of Statistical Physics}, 161:509--517, 2015.
\newblock \url{https://arxiv.org/abs/1407.7578}.

\bibitem{OpdamHGFreview}
E.~M. Opdam.
\newblock Multivariable hypergeometric functions.
\newblock In C.~Casacuberta, R.~M. Mir\'o-Roig, J.~Verdera, and
  S.~Xamb\'o-Descamps, editors, {\em European Congress of Mathematics}, volume
  201 of {\em Progress in Mathematics}, pages 491--508. Birkh\"auser, Basel,
  2001.

\bibitem{PaSh}
L.~Pastur and M.~Shcherbina.
\newblock {\em Eigenvalue Distribution of Large Random Matrices}.
\newblock American Mathematical Society, Providence, 2011.

\bibitem{PW}
E.~Prato and S.~Wu.
\newblock {Duistermaat}--{Heckman} measures in a non-compact setting.
\newblock {\em Compositio Mathematica}, 94:113--128, 1994.
\newblock \url{https://arxiv.org/abs/alg-geom/9307005}.

\bibitem{PEDZ}
A.~Prats~Ferrer, B.~Eynard, P.~Di~Francesco, and J.-B. Zuber.
\newblock Correlation functions of {Harish-Chandra} integrals over the
  orthogonal and the symplectic groups.
\newblock {\em Journal of Statistical Physics}, 129:885--935, 2007.
\newblock \url{https://arxiv.org/abs/math-ph/0610049}.

\bibitem{ProcesiLieGroups}
C.~Procesi.
\newblock {\em Lie Groups: An Approach through Invariants and Representations}.
\newblock Springer-Verlag, New York, 2007.

\bibitem{Rossmann}
W.~Rossman.
\newblock Kirillov's character formula for reductive {Lie} groups.
\newblock {\em Inventiones Mathematicae}, 48:207--220, 1978.

\bibitem{Sha}
S.~L. Shatashvili.
\newblock Correlation functions in the {Itzykson}--{Zuber} model.
\newblock {\em Communications in Mathematical Physics}, 154:421--432, 1993.
\newblock \url{https://arxiv.org/abs/hep-th/9209083}.

\bibitem{Gt}
G.~'t~Hooft.
\newblock A planar diagram theory for strong interactions.
\newblock {\em Nuclear Physics B}, 72:461--473, 1974.

\bibitem{TaoHCIZblog}
T.~Tao.
\newblock The {H}arish-{C}handra--{I}tzykson--{Z}uber integral formula.
\newblock \\ \url{
  https://terrytao.wordpress.com/2013/02/08/the-harish-chandra-itzykson-zuber-integral-formula/},
  2013.
\newblock Accessed 27 July 2020.

\bibitem{Varadarajan}
V.~S. Varadarajan.
\newblock {\em Harmonic Analysis on Real Reductive Groups}.
\newblock Springer-Verlag, New York, 1977.

\bibitem{Vergne-Rossmann}
M.~Vergne.
\newblock On {R}ossmann's character formula for discrete series.
\newblock {\em Inventiones Mathematicae}, 54:11--14, 1979.

\bibitem{Wishart}
J.~Wishart.
\newblock The generalised product moment distribution in samples from a normal
  multivariate population.
\newblock {\em Biometrika}, 20A:32--52, 1928.

\bibitem{ZHK}
Z.~Zheng, Z.~J. Haas, and M.~Kieburg.
\newblock Secrecy rate of cooperative {MIMO} in the presence of a location
  constrained eavesdropper.
\newblock {\em IEEE Transactions on Communications}, 67:1356--1370, 2019.

\bibitem{PZJ}
P.~Zinn-Justin.
\newblock {{HCIZ}} integral and {2D} {Toda} lattice hierarchy.
\newblock {\em Nuclear Physics B}, 634:417--432, 2002.
\newblock \url{http://arxiv.org/abs/math-ph/0202045}.

\bibitem{Z2}
J.-B. Zuber.
\newblock On the minor problem and branching coefficients, 2020.
\newblock \url{http://arxiv.org/abs/2006.03006}.

\bibitem{Zvonkin}
A.~Zvonkin.
\newblock Matrix integrals and map enumeration: an accessible introduction.
\newblock {\em Mathematical and Computer Modelling}, 26:281--304, 1997.

\end{thebibliography}
\bibliographystyle{plain}

\end{document}